\newcommand{\autakia}[1]{``#1''}
\newcommand{\step}[1]{^{(#1)} }
\newcommand{\projoperX}[1]{{\Pi_{\calX_{#1}}}}
\newcommand{\projoperY}[1]{{\Pi_{\calY_{#1}}}}
\newcommand{\projoperZ}[1]{{\Pi_{\calZ_{#1}}}}
\def\eqref#1{equation~\ref{#1}}
\def\1{\bm{1}}
\def\vmu{{\bm{\mu}}}
\def\vtheta{{\bm{\theta}}}
\def\vx{{\bm{x}}}
\def\vy{{\bm{y}}}
\def\mI{{\mathbf{I}}}
\DeclareMathAlphabet{\mathsfit}{\encodingdefault}{\sfdefault}{m}{sl}
\SetMathAlphabet{\mathsfit}{bold}{\encodingdefault}{\sfdefault}{bx}{n}
\newcommand{\R}{\mathbb{R}}
\renewcommand{\vec}[1]{\bm{#1}}
\newcommand{\mat}[1]{\mathbf{#1}}
\newcommand{\calA}{\ensuremath{\mathcal{A}}}
\newcommand{\calB}{\ensuremath{\mathcal{B}}}
\newcommand{\calM}{\ensuremath{\mathcal{M}}}
\newcommand{\calN}{\ensuremath{\mathcal{N}}}
\newcommand{\calO}{\ensuremath{\mathcal{O}}}
\newcommand{\calX}{\ensuremath{\mathcal{X}}}
\newcommand{\calY}{\ensuremath{\mathcal{Y}}}
\newcommand{\calZ}{\ensuremath{\mathcal{Z}}}
\newcommand{\class}[1]{\ensuremath{\mathsf{#1}}\xspace}
\newcommand{\NP}{\class{NP}}
\newcommand{\PLS}{\class{PLS}}
\newcommand{\PPAD}{\class{PPAD}}
\newcommand{\CLS}{\class{CLS}}
\newcommand{\defeq}{:=}
\newcommand{\xRightarrow}[2][]{\ext@arrow 0359\Rightarrowfill@{#1}{#2}}
\newcommand{\jgda}{\mathbf{J}_{\mathrm{GDA}}}
\newcommand{\jeg}{\mathbf{J}_{\mathrm{EG}}}
\newcommand{\jomwu}{\mathbf{J}_{\mathrm{OMWU}}}
\DeclareRobustCommand{\inlinelist}[1]{\begin{inparaenum}[(i)] #1 \end{inparaenum}}
\newcommand{\declarecolor}[2]{\definecolor{#1}{RGB}{#2}\expandafter\newcommand\csname #1\endcsname[1]{\textcolor{#1}{##1}}}
\theoremstyle{definition}  %
\newtheorem{claim}{Claim}
\newtheorem{lemma}{Lemma}
\newtheorem{proposition}{Proposition}
\newtheorem{fact}{Fact}
\theoremstyle{plain}
\newtheorem{remark}{Remark}
\newtheorem{theorem}{Theorem}
\newtheorem{definition}{Definition}
\numberwithin{claim}{section}
\numberwithin{fact}{section}
\numberwithin{lemma}{section}
\numberwithin{proposition}{section}
\numberwithin{theorem}{section}
\numberwithin{corollary}{section}
\numberwithin{definition}{section}
\crefname{align}{}{}
\crefname{gather}{}{}
\crefname{equation}{}{}
\crefname{claim}{claim}{claims}
\Crefname{claim}{Claim}{Claims}
\crefname{lemma}{lemma}{lemmas}
\Crefname{lemma}{Lemma}{Lemmas}
\title{Towards convergence to Nash equilibria in two-team zero-sum games}
\author{\and\and Fivos Kalogiannis\\
UC Irvine\\
\and
Ioannis Panageas \\
UC Irvine
\and\and\and
Emmanouil V. Vlatakis-Gkaragkounis \\
Columbia University
}
\date{April 2023}
\newcommand{\tabfigure}[2]{\raisebox{-.5\height}{\includegraphics[#1]{#2}}
}
\begin{document}



\maketitle

\begin{abstract}
Contemporary applications of machine learning in two-team e-sports and the superior expressivity of multi-agent generative adversarial networks raise important and overlooked theoretical questions regarding optimization in two-team games. Formally, two-team zero-sum games are defined as multi-player games where players are split into two competing sets of agents, each experiencing a utility identical to that of their teammates and opposite to that of the opposing team. We focus on the solution concept of Nash equilibria (NE). We first show that computing NE for this class of games is \textit{hard} for the complexity class ${\CLS}$. To further examine the capabilities of online learning algorithms in games with full-information feedback, we propose a benchmark of a simple ---yet nontrivial--- family of such games. These games do not enjoy the properties used to prove convergence for relevant algorithms. In particular, we use a dynamical systems perspective to demonstrate that gradient descent-ascent, its optimistic variant, optimistic multiplicative weights update, and extra gradient fail to converge (even locally) to a Nash equilibrium. On a brighter note, we propose a first-order method that leverages control theory techniques and under some conditions enjoys last-iterate local convergence to a Nash equilibrium. We also believe our proposed method is of independent interest for general min-max optimization.
\end{abstract}

\section{Introduction}

Online learning shares an enduring relationship with game theory that has a very early onset dating back to the analysis of \textit{fictitious play} by \citep{robinson1951iterative} and Blackwell's \textit{approachability theorem}~\citep{blackwell1956analog}. A key question within this context is whether self-interested agents can arrive at a game-theoretic \textit{equilibrium} in an \textit{independent} and \textit{decentralized} manner with only \textit{limited feedback} from their environment. 
Learning dynamics that converge to different notions of equilibria are known to exist for two-player zero-sum games \citep{robinson1951iterative,arora2012multiplicative,daskalakis2011near}, potential games \citep{monderer1996potential}, near-potential games \citep{anagnostides2022last}, socially concave games~\citep{golowich2020tight}, and extensive form games~\citep{anagnostides2022faster}. We try to push the boundary further and explore whether equilibria ---in particular, Nash equilibria--- can be reached by agents that follow decentralized learning algorithms in two-team zero-sum games.

\textit{Team competition} has played a central role in the development of game theory \citep{marschak1955elements,von1997team,Bacharach99,gold2005introduction}, economics \citep{marschak1955elements,gottinger1974j}, and evolutionary biology \citep{nagylaki1993evolution,nowak2004emergence}. Recently, competition among teams  has attracted the interest of the machine learning community due to the advances that multi-agent systems have accomplished:
\textit{e.g.}, multi-GAN's \citep{hoang2017multi,hardy2019md} for generative tasks, adversarial regression with multiple learners \citep{tong2018adversarial}, or AI agents competing in e-sports (\textit{e.g.,} CTF \citep{jaderberg2019human} or Starcraft \citep{vinyals2019grandmaster}) as well as card games
\citep{moravvcik2017deepstack,brown2018superhuman,bowling2015heads}.

\paragraph{Our class of games.} We turn our attention to \emph{two-team zero-sum games} a quite general class of min-max optimization problems that include bilinear games and a wide range of nonconvex-nonconcave games as well.
In this class of games, players fall into two teams of size $n,m$ and submit their own randomized strategy vectors independently.
We note that the games that we focus on are not restricted to team games in the narrow sense of the term ``team" as we use it in sports, games, and so on; the players play independently and do not follow a central coordinating authority. Rather, for the purpose of this paper, \textit{teams} are constituted by agents that merely enjoy the same utility function. This might already hint that the solution concept that we engage with is the \textit{Nash equilibrium} (NE).
Another class of games that is captured by this framework is the class of \textit{adversarial potential games.} In these games, the condition that all players of the same team experience the same utility is weakened as long as there exists a \textit{potential function} that can track differences in the utility of each player when they unilaterally deviate from a given strategy profile (see \Cref{sec:adv-potential-games} for a formal definition). 
A similar setting has been studied in the context of nonatomic games \citep{babaioff2009congestion}.


\paragraph{Positive duality gap.}
In two-player zero-sum games, \textit{i.e.},  $n = m = 1$, min-max (respectively max-min) strategies are guaranteed to form a Nash equilibrium due to Von Neumann's minmax theorem \citep{von1928annalen}; ultimately endowing the game with a unique value.
The challenges arise for the case of $n,m>1$; \citet{schulman2019duality} prove that, in general, two-team games do not have a unique value. They do so by presenting a family of team games with a positive duality gap, together with bounds concerning this gap. These bounds quantify the effect of exchanging the order of commitment to their strategy either between the teams as a whole or the individual players.



\paragraph{Solution concept.}
In this work, we examine the solution concept of Nash equilibrium (NE). Under a Nash equilibrium, no player can improve their utility by unilaterally deviating. The main downside of a NE for team games is the fact that such an equilibrium can be arbitrarily suboptimal for the team \citep{basilico2017computing}. 

This is one of the reasons that the solution concept of team-maxmin equilibrium with a coordination device (TMECor) has dominated contemporary literature of team games, especially in regard to applications \citep{farina2018ex,zhang2020computing,cacciamani2021multi}. Under a TMECor, players are allowed to communicate before the game and decide upon combinations of strategies to be played during the game using an external source of randomness.

The undeniable advantage of a TMECor is that the expected utility of the team under it is greater than the expected utility under a NE~\citep{basilico2017computing}. Nevertheless, this favorable property of TMECor can by no means render the study of NE irrelevant. In fact, the study of NE is always of independent interest within the literature of algorithmic game theory ---especially questions corresponding to computational complexity. Moreover, there exist settings in which \textit{ex ante} coordination cannot be expected to be possible or even sensible; for example in 
\begin{inparaenum}[(i)]
    \item environments where the external sources of randomness are unreliable or nonexistent or visible to the adversarial team,
    \item games in which players cannot know in advance who they share a common utility with,
    \item \textit{adversarial potential games}.
\end{inparaenum}
These games can model naturally occurring settings such as
\begin{inparaenum}[(a)]
    \item security games with multiple uncoordinated defenders versus multiple similarly uncoordinated attackers,
    \item the load balancing ``game'' between telecommunication service providers trying to minimize the maximum delay of service experienced by their customers versus the service users that try to individually utilize the maximum amount of broadband possible, and
    \item the \textit{weak selection} model of evolutionary biology where a species as a whole is a team, the genes of its population are the players and the alleles of each gene are in turn the actions of a player; the allele frequencies are independent across genes \citep{nagylaki1993evolution,nowak2004emergence,MPP15}. 
\end{inparaenum}

Concluding, we could not possibly argue for a single correct solution concept for two-team games; there is no silver bullet. In contrast, one has to assess which is the most fitting based on the constraints of a given setting. A Nash equilibrium is a cornerstone concept of game theory and examining its properties in different games is always important.

\paragraph{The optimization point of view.} 
We focus on the solution concept of NE and we first note that computing local-NE in general nonconvex-nonconcave games is $\PPAD$-complete \citep{daskalakis2009complexity,daskalakis2021complexity}. Thus, all well-celebrated online learning, first-order methods like gradient descent-ascent \citep{lin2020gradient,daskalakis2019last}, its optimistic \citep{popov1980modification,chiang2012online,sridharan2010convex}, optimistic multiplicative weights update \citep{sridharan2012learning}, and the extra gradient method \citep{korpelevich1976extragradient} would require an exponential number of steps in the parameters of the problem in order to compute an approximate NE under
the  oracle  optimization  model of \citep{nemirovskij1983problem}. Additionally, in the continuous time regime, similar classes of games exhibit behaviors antithetical to convergence like cycling, recurrence, or chaos \citep{DBLP:conf/nips/Vlatakis-Gkaragkounis19a}.
Second, even if a regret notion within the context of team-competition could be defined, no-regret dynamics are guaranteed to converge only to the set of coarse correlated equilibria (CCE) \citep{fudenberg1991jean,hannan20164}. CCE is a weaker equilibrium notion whose solutions could potentially be exclusively supported on strictly dominated strategies, even for simple symmetric two-player games (See also \citep{viossat2013no}).

Surely, the aforementioned intractability remarks for the general case of nonconvex-nonconcave min-max problems provide a significant insight. But, they cannot \textit{per se} address the issue of computing Nash equilibria when the game is equipped with a particular structure, \textit{i.e.}, that of two-team zero-sum games. In fact, our paper addresses the following questions:
\begin{center}
    \textit{
        Can we get provable convergence guarantees to NE of decentralized first-order methods in two-team zero-sum games? 
    }
\end{center}

\paragraph{Our results.} 
    First, with regards to computational complexity, we prove that computing an
    approximate (and possibly mixed) NE in two-team zero-sum games is $\CLS$-hard (\Cref{thm:hardness}); \textit{i.e.}, it is computationally harder than finding pure NE in a congestion game or computing an approximate fixed point of gradient descent.
    
    Second, regarding online learning for equilibrium computation, we prove that a number of established, decentralized, first-order methods are not fit for the purpose and fail to converge even asymptotically. Specifically, we present a simple ---yet nontrivial--- family of two-team zero-sum games (with each team consisting of two players) where projected gradient descent-ascent (GDA), optimistic gradient descent-ascent (OGDA), optimistic multiplicative weights update (OMWU), and the extra gradient method (EG) fail to locally converge to a mixed NE (\Cref{thm:failure}). More broadly, in the case of GDA in nondegenerate team games with unique mixed NE, one could acquire an even stronger result for any high-dimensional configuration of actions and players  (\Cref{thm:unstable}). To the best of our knowledge, the described family of games is the first-of-its-kind in which all these methods provably fail to converge at the same time.
    
    Third, we propose a novel first-order method inspired by adaptive control \citep{bazanella1997,hassouneh2004washout}. In particular, we use a technique that manages to stabilize unstable fixed points of a dynamical system without prior knowledge of their position and without introducing new ones. It is important to note that this method is a modification of GDA that uses a stabilizing feedback which maintains the decentralized nature of GDA.
    
    Finally, in \Cref{sec:experiments} we provide a series of experiments in simple two-team zero-sum GAN's. We also show that multi-GAN architectures achieve better performance than single-agent ones, relative to the network capacity when they are trained on synthetic or real-world datasets like CIFAR10.

\paragraph{{Existing algorithms for NE in multiplayer games.}}
{The focus of the present paper is examining algorithms for the setting of \textit{repeated games} \citep[Chapter 7]{CesaBianLugo99}. If we do not restrict ourselves to this setting, there are numerous centralized algorithms \citep{lipton2003playing,berg2017exclusion} and heuristics \citep{gemp2021sample} that solve the problem of computing Nash equilibria in general multi-player games. 
}

\section{Preliminaries}
\label{sec:prelims}
\paragraph{Our setting.}
A \textit{two-team game} in normal form is defined as a tuple $\Gamma(\calN, \calM,\calA,
\calB,{\{U_A, U_B\}})$. The tuple is defined by\vspace{-0.6em}
\begin{enumerate}[(i)]
    \item a finite set of $n = |\calN|$ \textit{players} belonging to team $A$,
    as well as a finite set of $m = |\calM|$ \textit{players} belonging to team $B$;\vspace{-0.3em}
    \item a finite set of \textit{actions} (or \textit{pure strategies}) $\calA_i=\{\alpha_1, \ldots, \alpha_{n_i}\}$ per player $i\in\calN$;
    where $\calA:=\prod_i\calA_i$ denotes the ensemble of all possible action profiles of team $A$, and respectively, a finite set of \textit{actions} (or \textit{pure strategies}) $\calB_i=\{\beta_1, \ldots, \beta_{n_i}\}$ per player $i\in\calM$, where $\calB:=\prod_i\calB_i$.\vspace{-0.25em}
    \item { a utility function for team $A$, $U_A:\calA\times\calB\to \R$ (resp. $U_B:\calA\times\calB\to \R$ for team $B$)}
\end{enumerate}
\vspace{-0.6em}
 
We also use $\vec{\alpha}=(\alpha_1,\ldots,\alpha_n)$ to denote the strategy profile of team $A$ players and $\vec{\beta}=(\beta_1,\ldots,\beta_m)$ the strategy profile of team $B$ players. 

Finally, each team's \textit{payoff} function is denoted by $U_{A},U_{B}:\calA\times\calB\to\R$, where  the \textit{individual utility} of a player  is identical to her teammates, i.e., $U_{i}=U_{A} \ \&\  U_{j}=U_{B} \ \forall i\in\calN$ and $j  \in \calM$. In this general context, players could also submit \textit{mixed strategies}, i.e, probability distributions over actions.
Correspondingly, we define the product distributions $\vec{x} = (\vec{x}_1,\ldots, \vec{x}_n)$, $\vec{y} = (\vec{y}_1, \ldots, \vec{y}_m)$ as team $A$ and $B$'s strategies respectively, in which $\vec{x}_i \in \Delta(\calA_i)$ and $\vec{y}_j \in \Delta(\calB_j)$. Conclusively, we will write $\calX:=\prod_{i\in\calN}\calX_i=\prod_{i\in\calN}\Delta(\calA_i),\calY:=\prod_{i\in\calM}\calY_i=\prod_{i\in\calM}\Delta(\calB_i)$ the space of mixed strategy profiles of teams $A,B$. A two-team game is called \textit{two-team zero-sum} if $U_{B}=-U_{A} = U$ which is the main focus of this paper. {Moreover, we assume that the game is \emph{succinctly representable} and satisfies the \emph{polynomial expectation property}~\citep{daskalakis2006game}. This means that given a mixed strategy profile, the utility of each player can be computed in polynomial time in the number of agents, the sum of the number of strategies of each player, and the bit number required to represent the mixed strategy profile.}

A \textit{Nash equilibrium} (NE) is a strategy profile
$(\vec{x}^*,\vec{y}^*)\in \calX \times \calY$ such that
\begin{equation}
\left\{
\begin{array}{ll}
    U(\vec{x}^*,\vec{y}^*)\le U(\vec{x}_{i},\vec{x}^*_{-i},\vec{y}^*), ~\forall \vec{x}_i \in \calX_i \\
    U(\vec{x}^*,\vec{y}^*)\ge U(\vec{x}^*,\vec{y}_{j},\vec{y}^*_{-j}), ~\forall \vec{y}_j \in \calY_j
    \end{array}
\right.
\tag{NE}
\footnote{
    We are using here the shorthand $\vec{x}_{-i}$ (or $\vec{y}_{-i}$ ) to highlight the strategy of all agents $\calN$ (or $\calM$) but $i$. }
\end{equation}%


\paragraph{A first approach to computing NE in Two-Team Zero-Sum games.}
Due to the multilinearity of the utility and the existence of a duality gap, the linear programming method used in two-player zero-sum games cannot be used to compute a Nash equilibrium.
For the goal of computing Nash equilibrium in two-team zero-sum games, we have experimented with a selection of online learning, first-order methods that have been utilized with varying success in the setting of {the} two-person zero-sum case. Namely, we analyze the following methods:
\begin{inparaenum}[(i)]
    \item gradient descent-ascent (GDA)
    \item optimistic gradient descent-ascent (OGDA)
    \item extra gradient method (EG)
    \item optimistic multiplicative weights update method (OMWU).
\end{inparaenum}
For their precise definitions, we refer to %
\Cref{sec:gamedynamics}.

The below folklore fact will play a key role hereafter.
\begin{fact}
    Any fixed point of the aforementioned discrete-time dynamics (apart from OMWU)  on the utility function necessarily corresponds to a Nash Equilibrium of the game. \label{rem:fixedpointsNash}
\end{fact}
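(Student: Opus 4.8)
The plan is to reduce all three admissible dynamics (GDA, OGDA, EG) to a single \emph{projected stationarity condition}, and then to turn that condition into the two (NE) inequalities using the multilinearity of $U$. Throughout, write $\hat{F}(\vec{x},\vec{y}) = (\nabla_{\vec{x}} U,\, -\nabla_{\vec{y}} U)$ for the min--max gradient field (team $A$ descends on $U$, team $B$ ascends, consistent with the sign convention in (NE)), and let $\Pi_{\calX\times\calY}$ denote Euclidean projection onto the product of simplices. The target is to show that a fixed point $\vec{z}^* = (\vec{x}^*,\vec{y}^*)$ of any of these methods satisfies $\vec{z}^* = \Pi_{\calX\times\calY}(\vec{z}^* - \eta\,\hat{F}(\vec{z}^*))$, from which the (NE) conditions follow.

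\textbf{Step 1 (reduce each method to the same equation).} For GDA the update is $\vec{z}^{(t+1)} = \Pi_{\calX\times\calY}(\vec{z}^{(t)} - \eta\,\hat{F}(\vec{z}^{(t)}))$, so a fixed point immediately obeys the displayed equation. For OGDA the update reads $\vec{z}^{(t+1)} = \Pi_{\calX\times\calY}(\vec{z}^{(t)} - 2\eta\,\hat{F}(\vec{z}^{(t)}) + \eta\,\hat{F}(\vec{z}^{(t-1)}))$; at a fixed point $\vec{z}^{(t)}=\vec{z}^{(t-1)}=\vec{z}^*$ the two gradient terms collapse to a single $-\eta\,\hat{F}(\vec{z}^*)$, recovering the \emph{same} equation. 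For EG, a fixed point is a state the method leaves invariant; since EG first forms the extrapolated point $\vec{w} = \Pi_{\calX\times\calY}(\vec{z}^* - \eta\,\hat{F}(\vec{z}^*))$ and then sets the next iterate from $\hat{F}(\vec{w})$, invariance forces $\vec{w}=\vec{z}^*$, again yielding the same equation.

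\textbf{Steps 2 and 3 (projection gives a variational inequality, multilinearity closes it).} I would invoke the obtuse-angle characterization of projection onto a convex set: $\vec{p} = \Pi_{\calX\times\calY}(\vec{q})$ iff $\langle \vec{q}-\vec{p},\, \vec{c}-\vec{p}\rangle \le 0$ for all $\vec{c}\in\calX\times\calY$. With $\vec{q}=\vec{z}^*-\eta\,\hat{F}(\vec{z}^*)$ and $\vec{p}=\vec{z}^*$ and $\eta>0$, this gives $\langle \hat{F}(\vec{z}^*),\, \vec{c}-\vec{z}^*\rangle \ge 0$ for all $\vec{c}$. Since the feasible set is the \emph{product} $\prod_{i\in\calN}\calX_i \times \prod_{j\in\calM}\calY_j$, varying one player's block at a time isolates the per-player inequalities $\langle \nabla_{\vec{x}_i} U(\vec{z}^*),\, \vec{x}_i-\vec{x}^*_i\rangle \ge 0$ for all $\vec{x}_i\in\calX_i$, and $\langle \nabla_{\vec{y}_j} U(\vec{z}^*),\, \vec{y}_j-\vec{y}^*_j\rangle \le 0$ for all $\vec{y}_j\in\calY_j$. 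Because $U$ is multilinear, the map $\vec{x}_i\mapsto U(\vec{x}_i,\vec{x}^*_{-i},\vec{y}^*)$ is affine, so its first-order expansion is exact: $U(\vec{x}_i,\vec{x}^*_{-i},\vec{y}^*) - U(\vec{z}^*) = \langle \nabla_{\vec{x}_i} U(\vec{z}^*),\, \vec{x}_i-\vec{x}^*_i\rangle \ge 0$, which is exactly the first (NE) inequality for $i\in\calN$; the symmetric computation for each $\vec{y}_j$ yields the second. Collecting over all players shows $\vec{z}^*$ is a Nash equilibrium.

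The main obstacle is \textbf{Step 1 for EG}: unlike GDA and OGDA, the extra-gradient map evaluates $\hat{F}$ at the extrapolated point $\vec{w}$, so one must justify that a stationary iterate forces $\vec{w}=\vec{z}^*$, i.e.\ that EG admits no \emph{spurious} fixed points whose leading point differs from the iterate. I would establish this from invariance of the full EG trajectory (including the exploratory step), noting that for the nondegenerate games studied here such degenerate configurations do not arise. This is also precisely where the OMWU carve-out enters: the entropic/multiplicative update leaves every face of the simplex invariant (a zero coordinate stays zero irrespective of the payoff gradient), so OMWU has boundary fixed points that violate the variational inequality and need not be Nash --- which is exactly why it is excluded from the statement.
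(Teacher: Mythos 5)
The paper itself offers no proof of this statement---it is asserted as a ``folklore fact''---so there is no argument of record to compare against; your variational-inequality route is the standard way to make it rigorous, and Steps~2 and~3 are correct and complete. In particular, the obtuse-angle characterization of the Euclidean projection applied blockwise over the product $\prod_i \calX_i \times \prod_j \calY_j$, followed by the observation that multilinearity makes the first-order expansion in each player's own block exact, does convert the projected stationarity condition into precisely the two (NE) inequalities. Your closing remark about OMWU is also on target and matches what the paper uses later: the multiplicative update fixes every face of the simplex, so all pure profiles are fixed points regardless of payoffs, which is exactly why OMWU is carved out.

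The genuine gap is the one you flagged for EG, and your proposed repair does not work as stated. A fixed point of the extra-gradient map constrains only the iterate, $\vec{z}^* = \Pi(\vec{z}^* - \eta \hat{F}(\vec{w}))$ with $\vec{w} = \Pi(\vec{z}^* - \eta \hat{F}(\vec{z}^*))$; the half-step $\vec{w}$ is not part of the state, so you are not entitled to demand ``invariance of the full trajectory including the exploratory step''---that silently changes the definition of fixed point---and appealing to nondegeneracy of the games does not by itself exclude a spurious $\vec{z}^*$ with $\vec{w} \neq \vec{z}^*$ (in the unconstrained one-dimensional case such points do exist for nonmonotone fields). The gap closes with a short Lipschitz argument rather than a genericity one. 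From the two projection inequalities, test the first at $\vec{c} = \vec{w}$ to get $\langle \hat{F}(\vec{w}), \vec{w} - \vec{z}^* \rangle \ge 0$, and the second at $\vec{c} = \vec{z}^*$ to get $\|\vec{z}^* - \vec{w}\|^2 \le \eta \langle \hat{F}(\vec{z}^*), \vec{z}^* - \vec{w} \rangle$. Adding these yields $\|\vec{z}^* - \vec{w}\|^2 \le \eta \langle \hat{F}(\vec{z}^*) - \hat{F}(\vec{w}), \vec{z}^* - \vec{w} \rangle \le \eta L \|\vec{z}^* - \vec{w}\|^2$, so for $\eta < 1/L$ (the step-size regime the paper works in throughout) we must have $\vec{w} = \vec{z}^*$, and EG collapses to the same projected stationarity condition as GDA and OGDA. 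With that lemma inserted, your proof is complete.
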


\noindent 
Hence, an important test for the asymptotic behavior of GDA, OGDA, EG, and OMWU methods is to examine whether these methods stabilize around their fixed points which effectively constitute the Nash equilibria of the game. In ~\Cref{sec:fom-fail}, we show that in the absence of pure Nash equilibria, all the above methods fail to stabilize on their fixed points even for a simple class of two-team game with $(n=2,m=2)$. Consequently, they fail to converge to the mixed Nash equilibrium of the game. 

The presence of these results demonstrates the need for a different approach that lies outside the scope of traditional optimization techniques. Inspired by the applications of washout filters to stabilize unknown fixed points and the adaptive control generalizations of the former, we design a new variant of GDA ``vaned'' with a feedback loop dictated by a pair of two matrices. In contrast to the aforementioned conventional methods, our proposed technique surprisingly accomplishes asymptotic last-iterate convergence to its fixed point, \textit{i.e.}, the mixed Nash equilibria of the team game.
\paragraph{$(\mat{K},\mat{P})$-vaned GDA Method.}
After concatenating the vectors of the minimizing and the maximizing agents --- $\vec{z}^{(k)} = (\vec{x}^{(k)}, \vec{y}\step{k})$--- our method for appropriate matrices $\mat{K}, \mat{P}$ reads:
\begin{gather}\label{eq:KPV}
    \begin{cases}
        \vec{z}\step{k+1} =  \projoperZ{} \Big\{ \vec{z}\step{k} + \eta \binom{ - \nabla_{\vec{x}} f( \vec{z}\step{k} ) } {\nabla_{\vec{y}} f( \vec{z}\step{k} )  } + \eta \mat{K} ( \vec{z}\step{k} - \vtheta\step{k} ) \Big\} \\
        \vtheta\step{k+1} =  \projoperZ{} \Big\{  \vtheta\step{k} + \eta \mat{P} ( \vec{z}\step{k} - \vtheta\step{k} ) \Big\}
    \end{cases}
    \tag{{KPV-GDA}}
\end{gather}  
Intuitively, the additional variable $\vtheta\step{k}$ holds an estimate of the fixed point, and through the feedback law $\eta \mat{K} ( \vec{z}\step{k} - \vtheta\step{k} ) $ the vector $\vec{z}$ stabilizes around that estimate which slowly moves towards the real fixed point of the GDA dynamic. 

\subsection{Two Illustrative Examples}\label{sec:illustrative}
Our first example plays a dual role: first, it demonstrates how two-team min-max competition can {capture} the formulation of multi-agent GAN architectures; second, it hints at the discrepancy between the results of optimization methods, since ---as we will see--- GDA will not converge to the Nash equilibrium/ground-truth distribution. {Generally, the solution that is sought after is the $\min\max$ solution of the objective function \citep{GAN14} which are \NP-hard to compute in the general case \citep{borgs2008myth}; nevertheless, applications of GANs have proven that first-order stationary points of the objective function suffice to produce samples of very good quality.}
\subsubsection{Learning a mixture of gaussians with multi-agent GAN's}
    Consider the case of $\calO$, a mixture of gaussian distribution with two components, 
    $C_1 \sim \mathcal{N}(\vmu, \mI)$ and $C_2 \sim \mathcal{N}(-\vmu, \mI)$ and mixture
    weights $\pi_1, \pi_2$ to be positive such that $\pi_1 + \pi_2 = 1$ and $\pi_1, \pi_2 \neq \frac{1}{2}$.
    
    To learn the distribution above, we utilize an instance of a \textit{Team}-WGAN in which there exists a generating team of agents $G_p: \mathbb{R} \rightarrow \mathbb{R}, G_{\vtheta}:\mathbb{R}^n\rightarrow\mathbb{R}^n$, and a discriminating team of agents $D_{\vec{v}}:\mathbb{R}^n\rightarrow\mathbb{R}, D_{\vec{w}}:\mathbb{R}^n\rightarrow \mathbb{R}$, all described by the following equations:
    \begin{equation}
        \arraycolsep=1.0pt
        \begin{array}{cccccccc}
             \text{Generators:}&G_p (\zeta) & = &  p + \zeta  &,&  G_{\theta}(\vec{\xi}) & =& \vec{\xi} + \vtheta\\
             \text{Discriminators:}&D_{\vec{v}}(\vec{y})& = &\langle \vec{v}, \vec{y} \rangle &,& D_{\vec{w}}(\vec{y})& = &\sum_i w_i y_i^2\\ 
        \end{array}
        \label{eq:teamwgan}
    \end{equation}
    
    The generating agent $G_{\theta}$ maps random noise $\vec{\xi} \sim \mathcal{N}(0, \mI)$ to samples while generating agent $G_p(\zeta)$, utilizing an independent source of randomness $\zeta \sim \mathcal{N}(0,1)$, probabilistically controls the sign of the output of the generator $G_\theta$. The probability of ultimately generating a sample $\vec{y} = \vec{\xi} + \vtheta $ is {in expectation} equal to $p$, while the probability of the sample being $ \vec{y} = - \vec{z} - \vtheta $ is equal to $1 - p$.

    On the other end, there stands the discriminating team of $D_{\vec{v}}, D_{\vec{w}}$. Discriminators, $D_v(\vec{y}), D_w(\vec{y})$ map any given sample $\vec{y}$ to a scalar value accounting for the ``realness'' or ``fakeness'' of it -- negative meaning fake, positive meaning real. The discriminators are disparate in the way they measure {the} realness of samples as seen in their definitions.
    
    

    
    We follow the formalism of the Wasserstein GAN to form the optimization objective:
    \begin{align}
    \arraycolsep=0.5pt
    \begin{array}{cc}
    \displaystyle
      \max\limits_{ \vec{v}, \vec{w}} \min\limits_{\vtheta, p} \Bigg\{ \mathbb{E}_{ \vec{y} \sim \calO } \Big[ D_{\vec{v}} (\vec{y} ) +D_{\vec{w}}( \vec{y} ) \Big] - 
        \displaystyle
        \mathbb{E}_{
              \stackrel{\scriptstyle \vec{\xi} \sim \mathcal{N}(0,\mI),}{\scriptstyle \zeta \sim \mathcal{N}(0,1)} }
        \left[
                \begin{array}{cc}
                    \scriptstyle G_p(\zeta)\cdot
                    \Big( D_{\vec{v}}\big(G_{\vtheta}( \vec{y})\big)+D_{\vec{w}}\big(G_{\vtheta}( \vec{y})\big) \Big)
                    \\ {+} \\
                    \scriptstyle \big(1-G_p(\zeta)\big)\cdot \Big(D_{\vec{v}}\big( {-} G_{\vtheta}( \vec{y}) \big) +D_{\vec{w}}\big( {-} G_{\vtheta}( \vec{y}) \big) \Big)
                \end{array}
                \right]    \Bigg\}
    \end{array}
    \label{eq:teamwganobj}    
    \end{align}
    Equation \Cref{eq:teamwganobj} yields the simpler form:
    \begin{equation}
    \textstyle
         \max\limits_{\vec{v}, \vec{w}} \min\limits_{\scriptstyle \vtheta, p}
            ( \pi_1 - \pi_2 ) \vec{v}^T \vmu - 2 p \vec{v}^T \vtheta + \vec{v}^T\vtheta +\sum_i^n w_{i} {(\mu_i^2 - \theta_i^2) }    \label{eq:teamwganobj_simplified}
    \end{equation}

    \begin{figure}
        \centering
        \includegraphics[width=0.5\textwidth]{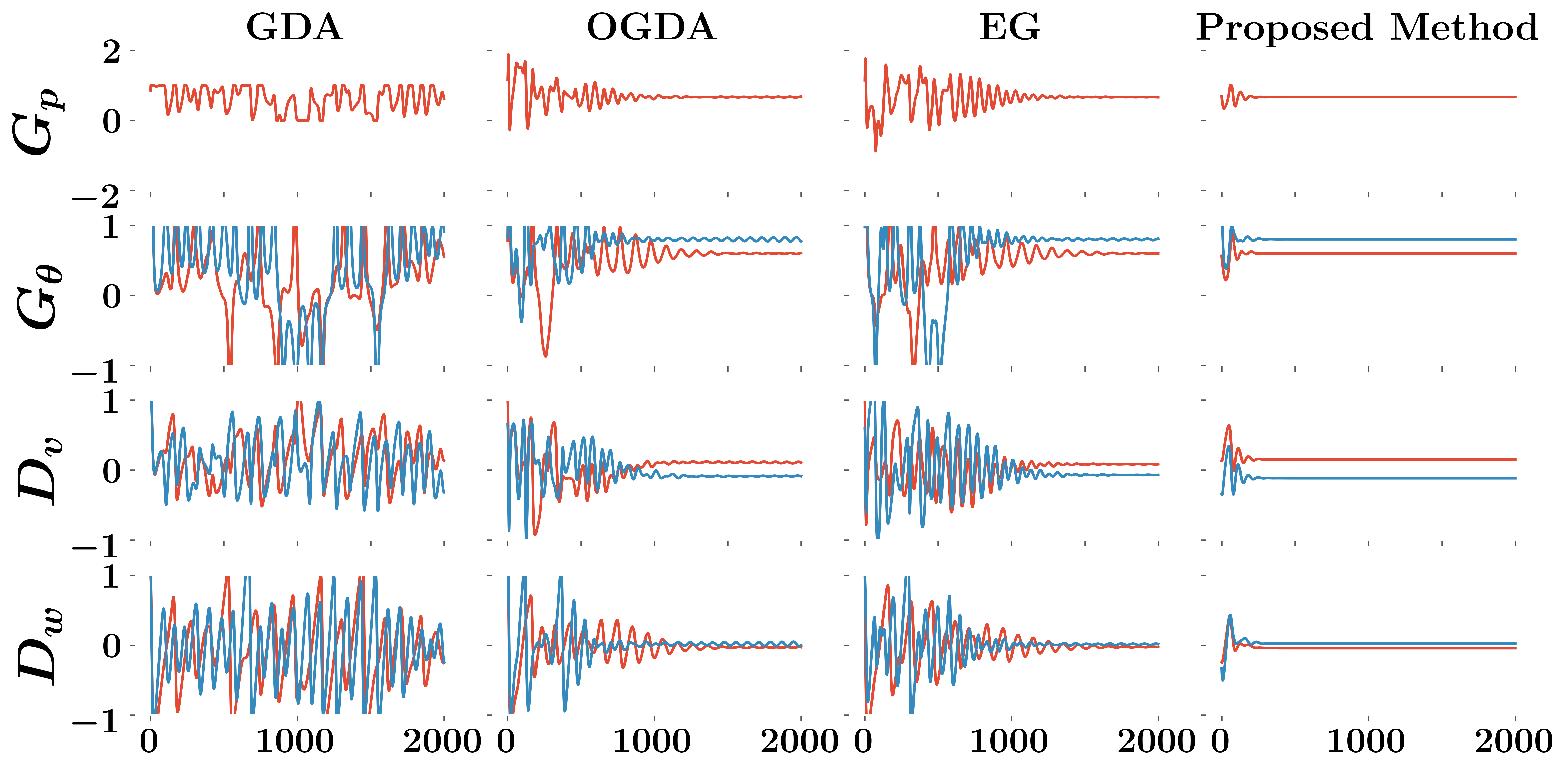}
        \caption{\small Parameter training of the configuration under different algorithms}
        \label{fig:MWGAN-1}
        \vspace{-0.2cm}
    \end{figure}

    It is easy to check that Nash equilibria of \Cref{eq:teamwganobj} must satisfy:
    \[
    \begin{Bmatrix} 
    \vtheta &=& {\color{white}-}\vmu, & p = 1 - \pi_2 = \pi_1 \\
    \vtheta &=& -\vmu, & p = 1 - \pi_1 = \pi_2. 
    \end{Bmatrix}
    \]
    \Cref{fig:MWGAN-1} demonstrates both GDA's failure and OGDA, EG, and our proposed method, {KPV-GDA} succeeding to converge
    to the above Nash equilibria and {simultaneously} discovering the mixture of the ground-truth.
    
\subsubsection{Multiplayer Matching Pennies}
    \label{gmp:example}
    Interestingly enough, there are non-trivial instances of two-team competition settings in which even OGDA and EG fail to converge. Such is the case for a team version of the well-known game of matching pennies. The game can be shortly described as such: \autakia{\textit{coordinate with your teammates to play a game of matching pennies against the opposing team, coordinate not and pay a penalty}}.
    (We note that this game is a special case of the game presented in \Cref{sec:gmp}.)
     As we can see in \Cref{fig:other-foms-fail,fig:allothersucks_phase}, this multiplayer generalized matching pennies game constitutes an excellent benchmark on which all traditional gradient flow discretizations fail under the perfect competition setting. Interestingly, we are not aware of {a} similar example in min-max literature and it has been our starting point for seeking new optimization techniques inspired by Control theory. Indeed, the {KPV-GDA} variation with $(\mat{K},\mat{P})=(-1.1\cdot \mat{I},0.3\cdot \mat{I})$ achieves to converge to the unique mixed Nash Equilibrium of the game. In the following sections, we provide theorems that explain formally the behavior of the {examined} dynamics.  
\begin{figure}[htbp]
    \centering
    \vspace{-10pt}
    \begin{subfigure}[b]{0.5\linewidth}
    \centering
        \includegraphics[width=.8\textwidth]{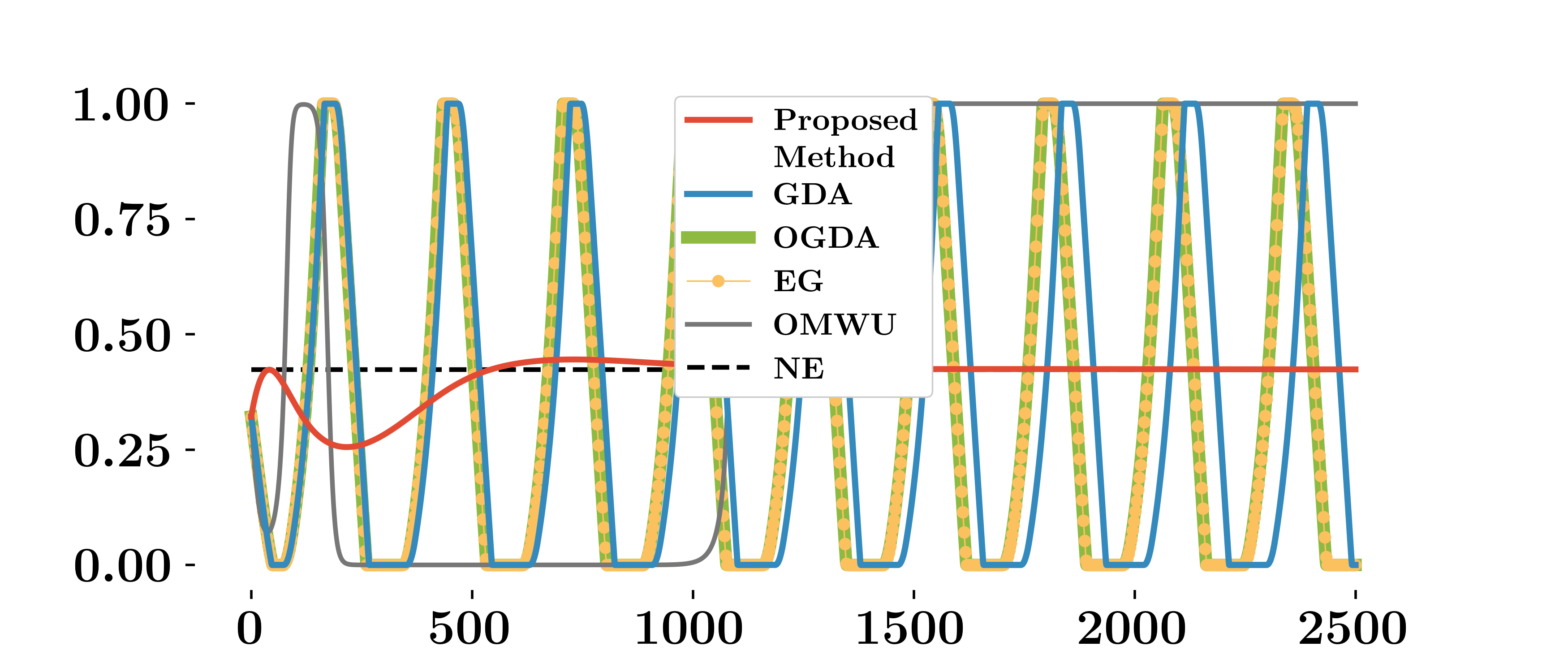}
        \caption{\small Generalized matching pennies under different algorithms. For the precise definition of the game, we refer to appendix%
        ~\ref{sec:tensorMMP}}
         \label{fig:other-foms-fail}  
    \end{subfigure}\hspace{0.5cm}%
    \begin{subfigure}[b]{0.45\linewidth}
    \centering
        \includegraphics[width=0.5\textwidth]{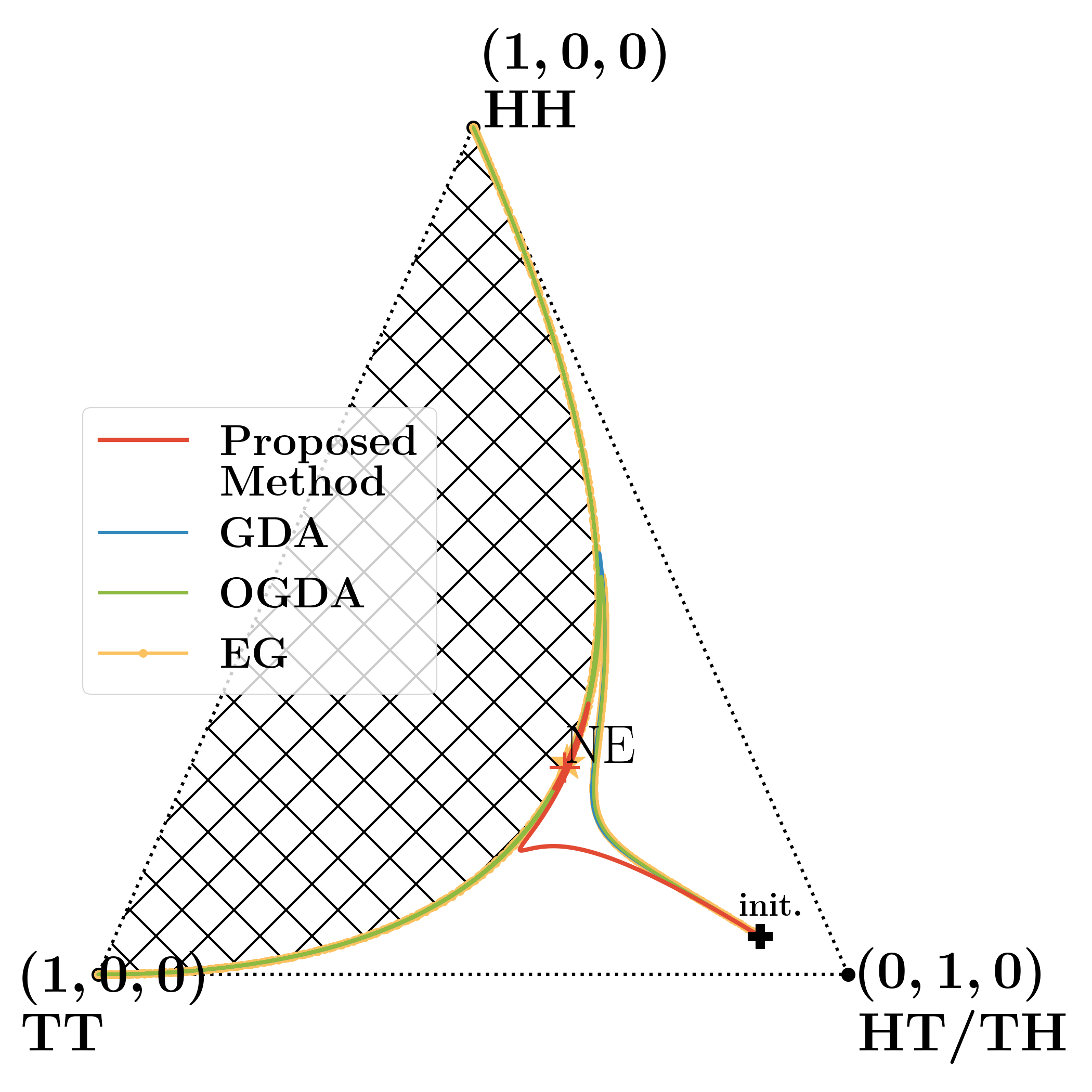}
        \caption{\small Projected Trajectory of Team A under different algorithms.
        The sketched surface is not part of the feasible team strategy profiles (product of distributions).
        }
         \label{fig:allothersucks_phase}  
    \end{subfigure}
    \vspace{-0.5cm}

\end{figure}    
\section{Main results}
\label{sec:main}
In this section, we will prove that
computing a Nash equilibrium in two-team zero-sum games is computationally hard and thus getting a {polynomial-time} algorithm that computes a Nash equilibrium is unlikely. Next, we will demonstrate the shortcomings of an array of commonly used online learning, first-order methods, and then we will provide a novel, decentralized, first-order method that locally converges to NE under some conditions.

\subsection{On the hardness of computing NE in two-team zero-sum games} 
As promised, our first statement characterizes the hardness of NE computation in two-team zero-sum games:

\begin{theorem}[$\CLS$-hard]\label{thm:hardness}
Computing a Nash equilibrium in a {succinctly represented} two-team zero-sum game is $\CLS$-hard. 
\end{theorem}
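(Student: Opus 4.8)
The plan is to prove $\CLS$-hardness by reduction from a single $\CLS$-complete problem. Since $\CLS = \PPAD \cap \PLS$, it is not necessary to separately attack the local-search and fixed-point aspects: one reduction from a $\CLS$-complete source suffices. The natural source, matching the informal comparison to \emph{approximate fixed points of gradient descent}, is the problem \class{GD} of finding an approximate stationary (KKT) point of a smooth objective over a bounded domain, shown $\CLS$-complete by Fearnley, Goldberg, Hollender and Savani. I would route the whole argument through the subclass of \emph{adversarial potential games} (\Cref{sec:adv-potential-games}): since these embed into two-team zero-sum games, it is enough to establish $\CLS$-hardness for that subclass and let hardness transfer upward to \Cref{thm:hardness}.

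The structural lemma I would establish first is that in an adversarial potential game the Nash-equilibrium conditions coincide exactly with the first-order stationarity conditions of the potential $\Phi$ over the product of simplices $\calX \times \calY$: each player's best-response condition says that her marginal-utility (gradient) vector is maximized at her own mixed strategy, which is precisely the complementarity/KKT condition for $\Phi$ restricted to her simplex. Thus an (approximate) NE is an (approximate) stationary point of $\Phi$ and conversely. To match the domain of \class{GD}, I would represent a coordinate $x_i \in [0,1]$ by a single two-action player whose mixing probability is $x_i$, so that the product of these simplices is literally $[0,1]^n$ and $\Phi$ plays the role of the objective $f$.

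The central difficulty, and the reason the construction genuinely lands in $\CLS$ rather than merely in $\PLS$, is that the expected utility of any product mixed-strategy profile is automatically \emph{multilinear} in the players' strategies, so the realizable potentials $\Phi$ are multilinear, whereas the \class{GD} objective $f$ is an arbitrary (polynomially computable) function. I would bridge this gap with the standard duplication trick: introduce several copies of each coordinate-player and build the needed higher-degree monomials as products across copies, then use the zero-sum component against the opposing team to enforce, at equilibrium, that all copies of a coordinate agree. This consistency gadget is exactly where the two-team (fixed-point) structure is indispensable, and it is what keeps the problem inside $\CLS$. Succinct representation and the polynomial expectation property hold by construction, since all pure-profile utilities are evaluated by a polynomial-time routine built from the circuit for $f$ together with the gadget terms.

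The step I expect to be the main obstacle is the quantitative, approximation-preserving correspondence: converting an $\epsilon$-NE back into a $\mathrm{poly}(\epsilon)$-approximate KKT point of $f$. This requires controlling how the consistency gadgets degrade under approximate (rather than exact) equilibrium play, so that near-equilibrium the duplicated coordinates are only slightly inconsistent and the induced error in the emulated gradient remains polynomially bounded; it also requires handling the simplex-boundary behavior, where KKT multipliers appear, and adding a mild regularization to keep the problem well-conditioned. Once this approximation-robust equivalence is in place, the chain of reductions from \class{GD} to NE-computation in adversarial potential games, and thence to two-team zero-sum games, yields \Cref{thm:hardness}.
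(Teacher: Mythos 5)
Your proposal takes a genuinely different route from the paper, and the route you chose leaves the entire technical burden of the theorem unresolved. The paper does not reduce from the \class{GD}/KKT problem at all: it reduces from the problem of computing a (mixed) Nash equilibrium in a \emph{congestion game}, which is already known to be $\CLS$-hard (Babichenko--Rubinstein). Given a congestion game with potential $\Phi$, the paper builds a two-team zero-sum game in which Team $A$ consists of the congestion-game players with utility $-\Phi$, and Team $B$ consists of dummy players each having a single action with utility $+\Phi$. The potential property makes each Team-$A$ player's deviation gain in the team game equal to her cost decrease in the congestion game, so (approximate) equilibria correspond losslessly in both directions, with no gadgets, no consistency enforcement, and no approximation analysis needed. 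The whole proof is a few lines because the $\CLS$-hard source problem already comes equipped with a multilinear potential.

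The gap in your approach is precisely the step you flag as "the standard duplication trick": there is no standard trick that turns an arbitrary polynomial-time-computable \class{GD} objective $f$ into a multilinear game potential. Expected utilities of product mixed profiles are multilinear of degree at most the number of players, whereas $f$ is given only as a circuit; emulating it requires arithmetic-gadget machinery of the kind used in the $\PPAD$- and $\CLS$-hardness literature, and that machinery is the entire content of such proofs, not a detail to be deferred. Your consistency gadget also undermines the structure you rely on: if Team $B$'s players are given incentives to punish disagreement among Team $A$'s coordinate copies, the game is no longer an adversarial potential game with a single $\Phi$ tracking all deviations, so your structural lemma (NE of the team game $=$ KKT point of $\Phi$) no longer applies, and the equilibrium conditions for Team $A$ acquire extra terms from the auditor interaction. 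Finally, the approximation-preserving analysis you defer (how much the duplicated coordinates can disagree at an $\epsilon$-NE, and how that propagates to the emulated gradient) is exactly where naive versions of such constructions fail. None of these obstacles arise in the paper's reduction, which is why it works.
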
 
The main idea of the proof of \Cref{thm:hardness} relies on a reduction of approximating Nash equilibria in congestion games, which has been shown to be complete for the $\CLS$ complexity class. The class $\CLS$ contains the problem of continuous optimization. We defer the proof of the above theorem to the paper's supplement.

\subsection{Failure of common online, first-order methods} \label{sec:fom-fail}
The negative computational complexity result we proved for two-team zero-sum games (\Cref{thm:hardness}) does not preclude the prospect of attaining algorithms (learning first-order methods) that converge to Nash equilibria. Unfortunately, we prove that these methods cannot guarantee convergence to Nash equilibria in two-team zero-sum games in general. 

In this subsection, we are going to construct a family of two-team zero-sum games with the property that the dynamics of GDA, OGDA, OMWU, and EG are unstable on Nash equilibria. This result is indicative of the challenges that lie in the min-max optimization of two-team zero-sum games and the reason that provable, nonasymptotic convergence guarantees of online learning have not yet been established.

Before defining our benchmark game, we prove an important {theorem} which states that GDA does not converge to mixed Nash equilibria. This fact is a stepping stone in constructing the family of team-zero sum games later. We present the proof {of} all of the below statements in detail in the paper’s appendix
(\Cref{sec:gamedynamics}).

\paragraph{Weakly-stable Nash equilibrium.}{\citep{KPT09, MPP15}} Consider the set of Nash equilibria with the property that if any single randomizing agent of one team is forced to play any strategy in their current support with probability one, all other agents of the same team must remain indifferent between the strategies in their support. This type of Nash equilibria is called weakly-stable. We note that pure Nash equilibria are trivially weakly-stable. It has been shown that mixed Nash equilibria are not weakly-stable in generic games
\footnote{Roughly speaking, generic games where we add small Gaussian noise to perturb slightly every payoff only so that we preclude any payoff ties.
In these games, all  Nash equilibria in all but a measure-zero set of
games exhibit the property that
all pure best responses are played with
positive probability.  
} 

We can show that Nash equilibria that are not weakly-stable Nash are actually unstable for GDA. Moreover, {through standard dynamical systems machinery}, that the set of initial conditions that converges to Nash equilibria that are not weakly-stable should be of Lebesgue measure zero. Formally, we prove that:

\begin{theorem}[Non weakly-stable Nash are unstable]\label{thm:unstable} Consider a two-team zero-sum game with the utility function of Team B ($\vec{y}$ vector) being $U(\vec{x},\vec{y})$ and Team A ($\vec{x}$ vector) being $-U(\vec{x},\vec{y})$. Moreover, assume that $(\vec{x}^*,\vec{y}^*)$ is a Nash equilibrium of full support that is not weakly-stable. It follows that the set of initial conditions so that GDA converges to $(\vec{x}^*,\vec{y}^*)$ is of measure zero for {step size} $\eta < \frac{1}{L}$ where $L$ is the Lipschitz constant of $\nabla U$.
\end{theorem}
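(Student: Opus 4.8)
The plan is to linearize GDA at the equilibrium and reduce the claim to a spectral condition on the Jacobian, which I then feed into the center--stable manifold machinery. Since $(\vec x^*,\vec y^*)$ has full support it lies in the relative interior of $\calX\times\calY$, so on a neighborhood the Euclidean projection $\projoperZ{}$ acts merely as the (affine, hence smooth) projection onto the affine hull of the product of simplices; consequently projected GDA coincides there with a smooth map $g(\vec z)=\vec z+\eta\,P\,F(\vec z)$, where $F=(-\nabla_{\vec x}U,\nabla_{\vec y}U)$ and $P$ is the constant orthogonal projection onto the tangent space $T=T_{\calX}\times T_{\calY}$. Because $\|P\,DF\|\le L$, the choice $\eta<1/L$ makes $Dg=I+\eta\,P\,DF$ invertible everywhere (its eigenvalues are $1+\mu$ with $|\mu|<1$), so $g$ is a local diffeomorphism. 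This is exactly the hypothesis needed to run the standard argument: if the Jacobian at the fixed point has an eigenvalue of modulus strictly greater than $1$, the local center--stable manifold has positive codimension, and the global set of initial conditions converging to $(\vec x^*,\vec y^*)$ is the countable union $\bigcup_{k\ge 0} g^{-k}(W^{cs}_{\mathrm{loc}})$ of measure--zero sets, hence itself of (Lebesgue, on the affine hull) measure zero. Everything therefore reduces to an eigenvalue statement about $\jgda:=Dg(\vec x^*,\vec y^*)=I+\eta\,\widehat{\mat H}$, where $\widehat{\mat H}:=P\,DF\,P|_{T}$.

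Next I record the structure of $\widehat{\mat H}$. Writing $\hat{\mat A}$, $\hat{\mat B}$, $\hat{\mat C}$ for the projections onto $T$ of $\nabla^2_{\vec x\vec x}U$, $\nabla^2_{\vec x\vec y}U$, $\nabla^2_{\vec y\vec y}U$, the min--max sign pattern gives
\[
\widehat{\mat H}=\begin{pmatrix}-\hat{\mat A} & -\hat{\mat B}\\[2pt] \hat{\mat B}^{\top} & \hat{\mat C}\end{pmatrix}.
\]
Two facts are crucial. First, the symmetric part is block diagonal, $\tfrac12(\widehat{\mat H}+\widehat{\mat H}^{\top})=\mathrm{diag}(-\hat{\mat A},\hat{\mat C})$, since the off-diagonal coupling is exactly antisymmetric. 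Second, by multilinearity of $U$ every same-player Hessian block $\nabla^2_{\vec x_i\vec x_i}U$ and $\nabla^2_{\vec y_j\vec y_j}U$ vanishes; as $P$ is block diagonal across players this forces $\hat{\mat A}$ and $\hat{\mat C}$ to have zero diagonal blocks, and in particular $\Tr\hat{\mat A}=\Tr\hat{\mat C}=0$, whence $\Tr\widehat{\mat H}=0$.

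Now I bring in the weak-stability hypothesis. At a full-support NE the first-order condition reads $P_{T_{\calX_{i'}}}\nabla_{\vec x_{i'}}U=0$ for every player $i'$, i.e. each player is indifferent over their support. Saying the equilibrium is \emph{not} weakly-stable means there are teammates $i,i'$ such that displacing $\vec x_i$ toward a pure strategy in its support destroys the indifference of $i'$; the induced change in $P_{T_{\calX_{i'}}}\nabla_{\vec x_{i'}}U$ is precisely the block $P_{T_{\calX_{i'}}}\nabla^2_{\vec x_{i'}\vec x_i}U\,P_{T_{\calX_i}}$ applied to a tangent direction, so this off-diagonal block of $\hat{\mat A}$ is nonzero and $\hat{\mat A}\ne 0$. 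Being symmetric with zero trace, $\hat{\mat A}$ therefore has a strictly negative eigenvalue. The payoff is a clean discrete-time criterion: because $\Tr\widehat{\mat H}=0$, summing $2\,\mathrm{Re}\,\mu+\eta|\mu|^2$ over the eigenvalues $\mu$ of $\widehat{\mat H}$ gives $\eta\sum|\mu|^2\ge 0$, so either some eigenvalue satisfies $|1+\eta\mu|>1$, or every $\mu=0$. In other words $\jgda$ has an eigenvalue outside the unit disk \emph{unless} $\widehat{\mat H}$ is nilpotent. Note this is where discretization does the work: an Euler step of size $\eta$ turns even a purely imaginary (rotational) mode into an expanding one, which is the real mechanism behind GDA's failure.

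The main obstacle is thus disposing of the degenerate possibility that $\widehat{\mat H}$ is nilpotent while $\hat{\mat A}\ne 0$; unlike the continuous-time picture, $\hat{\mat A}\ne0$ alone does not preclude this, because a large antisymmetric coupling $\hat{\mat B}$ can conspire to place all eigenvalues at the origin. I would handle this via the genericity already invoked in the surrounding discussion: the nilpotency locus is a proper algebraic subvariety (cut out by the vanishing of the nonconstant coefficients of the characteristic polynomial of $\widehat{\mat H}$), so for generic payoffs $\widehat{\mat H}$ is non-nilpotent and the eigenvalue criterion applies verbatim. For the non-generic remainder one falls back on the fact that a nonzero nilpotent $\widehat{\mat H}$ still makes $\jgda=I+\eta\,\widehat{\mat H}$ a non-trivial unipotent map whose only convergent linear mode is the fixed point itself, so a center-manifold refinement of the stable-manifold argument again confines the convergent set to positive codimension. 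Assembling the generic eigenvalue bound with the center--stable manifold theorem yields the measure-zero conclusion.
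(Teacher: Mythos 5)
Your overall strategy is the paper's: localize at the interior equilibrium so that projected GDA is a smooth local diffeomorphism for $\eta<1/L$, reduce the claim via the center--stable manifold theorem to showing that $\jgda=\mat{I}+\eta\widehat{\mat{H}}$ has an eigenvalue outside the unit disk, use multilinearity to kill the same-player Hessian blocks (so $\mathrm{tr}\,\widehat{\mat{H}}=0$ and the symmetric part is $\mathrm{diag}(-\hat{\mat{A}},\hat{\mat{C}})$), and use the failure of weak stability to make one of these blocks a nonzero traceless symmetric, hence indefinite, matrix (the paper does this with an explicit tangent test vector built from the strategies $k,l,l'$; your projected-block formulation is the same content). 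Your accounting identity $\sum_\mu\bigl(|1+\eta\mu|^2-1\bigr)=\eta^2\sum_\mu|\mu|^2$ is a clean compression of the paper's case split (an eigenvalue with positive real part, or a nonzero purely imaginary one, either way gives $|1+\eta\mu|>1$); both routes reduce the theorem to: $\jgda$ expands in some direction \emph{unless} $\widehat{\mat{H}}$ is nilpotent.

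The gap is exactly where you put your finger: the nilpotent case, and neither of your two escapes closes it. Escape (a), genericity, proves a different theorem --- the statement assumes only that the given equilibrium is not weakly stable, with no genericity hypothesis on the payoffs, so you are not entitled to discard a subvariety of games (and you would additionally need to argue that perturbing payoffs moves $\widehat{\mat H}$ transversally off the nilpotency locus while tracking the equilibrium). Escape (b) is not a valid argument: if every eigenvalue of $\jgda$ lies on the unit circle, the center--stable manifold is the entire neighborhood and the stable-manifold machinery yields no measure-zero conclusion; a unipotent linearization does not forbid a positive-measure basin for the nonlinear map (already $x\mapsto x-x^3$ has derivative $1$ at the origin and attracts an interval). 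For comparison, the paper disposes of this case with a Ky Fan majorization step: the eigenvalues of the symmetric part $\mat{R}=\mathrm{diag}(-\hat{\mat A},\hat{\mat C})$ majorize the real parts of the eigenvalues of $\widehat{\mat H}$, and from the indefiniteness of $\mat R$ it concludes that $\widehat{\mat H}$ has an eigenvalue with nonzero real part. You should engage with that step directly rather than route around it --- noting that majorization of the all-zeros vector by an indefinite zero-trace spectrum is not by itself a contradiction, so what is really needed is an argument that the specific structure $\bigl(\begin{smallmatrix}-\hat{\mat A}&-\hat{\mat B}\\ \hat{\mat B}^\top&\hat{\mat C}\end{smallmatrix}\bigr)$ with $\hat{\mat A}$ or $\hat{\mat C}$ indefinite cannot have all eigenvalues at the origin. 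As written, your proof establishes the theorem only under the additional assumption that $\widehat{\mat H}$ is not nilpotent.
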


\subsection{Generalized Matching Pennies (GMP)}
\label{sec:gmp}
Inspired by \Cref{thm:unstable}, in this section we construct a family of team zero-sum games so that GDA, OGDA, OMWU, and EG methods fail to converge (if the initialization is a random point in the simplex, the probability of convergence of the aforementioned methods is zero). The intuition is to construct a family of games, each of which has only mixed Nash equilibria (that are not weakly-stable), \textit{i.e.}, the constructed games should lack pure Nash equilibria; using \Cref{thm:unstable}, it would immediately imply our claim for GDA. It turns out that OGDA, OMWU, and EG also fail to converge for the same family.

\begin{table}[h]    
\centering
\label{table:pennies}
    \begin{tabular}{cc|c|c|c|}
      & \multicolumn{1}{c}{} & \multicolumn{2}{c}{}\\
      & \multicolumn{1}{c}{} & \multicolumn{1}{c}{$HH$} & \multicolumn{1}{c}{$HT/TH$ } & \multicolumn{1}{c}{$TT$} \\\cline{3-5}
    & $HH$ & $1,-1$ & $\omega,-\omega$ &$-1,1$ \\\cline{3-5}
          & $HT/TH$ & $-\omega,\omega$ & $0,0$ & $-\omega,\omega$ \\\cline{3-5}
      & $TT$ & $-1,1$ & $\omega,-\omega$ & $1,-1$ \\\cline{3-5}
    \end{tabular}
\caption{The generalized matching pennies game (GMP).}
\end{table}

\paragraph{Definition of GMP.} Consider a setting {with} two teams (Team $A$, Team $B$), each of which has $n=2$ players. Inspired by the standard matching pennies game and the game defined in \citep{SV19}, we allow each agent $i$ to have two strategies/actions that is $S = \{H,T\}$ for both teams with $2^{4}$ possible strategy profiles. In case all the members of a Team choose the same strategy say $H$ or $T$ then the Team ``agrees" to play $H$ or $T$ (otherwise the Team ``does not agree"). 


Thus, in the case that both teams ``agree", the payoff of each team is actually the payoff for the {two-player} matching pennies. If one team ``agrees" and the other does not, the team that ``agrees" enjoys the payoff $\omega \in (0,1)$ and the other team suffers a penalty $\omega$. If both teams fail to ``agree", both {teams} get payoff zero. Let $x_{i}$ with $i\in \{1,2\}$ be the probability that agent $i$ of Team $A$ chooses $H$ and $1-x_{i}$ the probability that she chooses $T$. We also denote $\vec{x}$ as the vector of probabilities for Team $A$. Similarly, we denote $y_{i}$ for $i \in \{1,2\}$ be the probability that agent $i$ of Team $B$ chooses $H$ and $1-y_{i}$ the probability that she chooses $T$ and $\vec{y}$ the probability vector.

\paragraph{Properties of GMP.} An important remark on the properties of our presented game is due. Existing literature tackles settings with 
\begin{inparaenum}[(i)]
\item (weak-)monotonocity \citep{mertikopoulos2019optimistic, diakonikolas2021efficient},
\item cocoercivity \citep{zhu1996co}, 
\item zero-duality gap \citep{von1928annalen}, 
\item unconstrained solution space \citep{golowich2020tight}
\end{inparaenum}. Our game is carefully crafted and --although it has a \textit{distinct structure} and is nonconvex-nonconcave only due to \textit{multilinearity}-- satisfies none of the latter properties. This makes the (local) convergence of our proposed method even more surprising. (See also
\Cref{subsec:properties-gmp}.)

The first fact about the game that we defined is that for $\omega \in (0,1)$, there is only one Nash equilibrium $(\vec{x}^*,\vec{y}^*)$, which is the uniform strategy, \textit{i.e.}, $x^*_{1} = x^*_{2} = y^*_{1} = y^*_2= \frac{1}{2}$ for all agents $i.$
\begin{lemma}[GMP has a unique Nash]\label{lemma:onemixed}
The Generalized Matching Pennies game exhibits a unique Nash equilibrium which is $(\vec{x}^*,\vec{y}^*) = ((\frac{1}{2},\frac{1}{2}),(\frac{1}{2},\frac{1}{2})).$
\end{lemma}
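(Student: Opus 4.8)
The plan is to compute the expected utility in closed form, exploit its multilinearity to split the problem into two coupled two-player identical-interest subgames, classify the equilibria of each subgame, and then use the coupling between the two teams to eliminate every candidate equilibrium except the uniform profile.

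First I would write Team $A$'s payoff explicitly. Grouping the nine entries of the table by the team-agreement probabilities $x_1x_2$, $(1-x_1)(1-x_2)$ and $x_1(1-x_2)+(1-x_1)x_2$ (and likewise for $B$) and using that these three probabilities sum to one, the penalty cross-terms telescope and one obtains the compact form
\[
U_A(\vec{x},\vec{y})=(x_1+x_2-1)(y_1+y_2-1)+\omega\big[(y_1+y_2-2y_1y_2)-(x_1+x_2-2x_1x_2)\big],
\]
with Team $B$ maximizing $-U_A$. Writing the tilts $s:=y_1+y_2-1$ and $t:=x_1+x_2-1$, the restriction of $U_A$ to Team $A$'s variables equals $(s-\omega)(x_1+x_2)+2\omega x_1x_2$ up to an $x$-independent constant, and the restriction of $-U_A$ to Team $B$'s variables equals $-(t+\omega)(y_1+y_2)+2\omega y_1y_2$ up to a constant. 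Thus, with the opponent's tilt frozen, each team plays a two-player game in which both members share one objective.

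Since each objective is affine in each player's own probability, every best response is a threshold rule: for Team $A$ at fixed $s$, player $i$ prefers $x_i=1$, prefers $x_i=0$, or is indifferent according as the other coordinate exceeds, falls below, or equals the threshold $x^{*}=\tfrac12(1-s/\omega)$. I would use this to enumerate \emph{all} equilibria of the subgame: they are exactly $(0,0)$ (giving $t=-1$), $(1,1)$ (giving $t=1$), and — only when $|s|<\omega$ — the symmetric interior profile $(x^{*},x^{*})$ (giving $t=-s/\omega$). The coordination cross-term $2\omega x_1x_2$ is precisely what keeps both pure corners as equilibria, and this is the main subtlety, since it means each isolated subgame already has several equilibria. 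An identical analysis for Team $B$ gives equilibria $(0,0)$, $(1,1)$, and (for $|t|<\omega$) the interior $(y^{*},y^{*})$ with $y^{*}=\tfrac12(1+t/\omega)$, producing $s=-1$, $s=1$, $s=t/\omega$ respectively.

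It then suffices to show that every Nash equilibrium has $s=t=0$, and here the hypothesis $\omega\in(0,1)$ does the work by chaining unique best responses. If Team $A$ sits at a pure corner, say $t=-1$, then Team $B$ faces tilt $t=-1<-\omega$, whose \emph{unique} best response is full agreement $(1,1)$, forcing $s=1$; but $s=1>\omega$ forces Team $A$'s unique best response to be $(1,1)$, i.e. $t=1$, contradicting $t=-1$ — and symmetrically for $t=1$ and for the two corners of Team $B$. Hence at an equilibrium neither team is at a corner, so (by the exhaustive enumeration above) both teams play their interior profiles, giving simultaneously $t=-s/\omega$ and $s=t/\omega$; substituting yields $s(1+\omega^{-2})=0$, whence $s=t=0$. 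Finally, at $s=t=0$ the threshold is $x^{*}=\tfrac12$, and the only subgame equilibrium consistent with $t=0$ (rather than $\pm1$) is $(x_1,x_2)=(\tfrac12,\tfrac12)$; likewise $(y_1,y_2)=(\tfrac12,\tfrac12)$. This pins down the uniform profile as the unique equilibrium, and a one-line check that all partial derivatives of $U_A$ vanish there confirms it is indeed a Nash equilibrium. The hard part is exactly the interplay in the last two steps: the quadratic coordination term makes each team's isolated subgame multi-equilibrium, so uniqueness cannot be read off locally and must instead emerge from the global consistency of the two tilts, which is where $\omega<1$ is indispensable.
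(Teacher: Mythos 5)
Your proof is correct, and it takes a genuinely different route from the paper's. The paper works directly with the simplified multilinear objective $(\omega+1)(x_1+x_2)+(1-\omega)(y_1+y_2)-(x_1+x_2)(y_1+y_2)-2\omega x_1x_2+2\omega y_1y_2$, derives the interior solution from the four first-order stationarity conditions, and then disposes of boundary candidates by an exhaustive enumeration of support patterns ($x_1^*=0$, $x_1^*=1$, $x_1^*=x_2^*=0$, mixed corners, etc.), each killed by showing some coordinate falls outside $[0,1]$. You instead reduce each team's problem, for a frozen opponent tilt, to a two-player coordination subgame $(s-\omega)(x_1+x_2)+2\omega x_1x_2$ with threshold best responses, classify its equilibria as the two corners plus (when $|s|<\omega$) the symmetric interior point, and then eliminate the corners by chaining unique best responses ($t=-1\Rightarrow s=1\Rightarrow t=1$, a contradiction), leaving only the interior--interior case $t=-s/\omega$, $s=t/\omega$, hence $s=t=0$. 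Your argument buys a structural explanation of where the hypothesis $\omega\in(0,1)$ enters (at $\omega=1$ the chain breaks and the pure corners survive, exactly matching the paper's remark about $\omega=1$), and it compresses the paper's many boundary cases into one symmetric contradiction; the paper's version is more mechanical but requires no insight about the subgame structure. One cosmetic imprecision: the set of subgame equilibria is not always ``exactly'' $\{(0,0),(1,1),(x^*,x^*)\}$ --- for $|s|>\omega$ only one corner survives --- but your argument only uses the containment of the equilibrium set in this list, so nothing breaks.
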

\begin{remark}
The fact that the game we defined has a unique Nash equilibrium that is in the interior of $[0,1]^{4}$ is really crucial for our negative convergence results later in the section as we will show that it is not a weakly-stable Nash equilibrium and the negative result about GDA will be a corollary due to \Cref{thm:unstable}. We also note that if $\omega = 1$ then there are more Nash equilibria, in particular the $(\vec{0},\vec{0}),  (\vec{1},\vec{0}), (\vec{0},\vec{1}) ,(\vec{1},\vec{1})$ which are pure.
\end{remark}

The following Theorem is the main (negative) result of this section.

\begin{theorem}[GDA, OGDA, EG{, and OMWU} fail]\label{thm:failure} Consider GMP game with $\omega \in (0,1).$ Assume that $\eta_{\textrm{GDA}}<\frac{1}{4}$, $\eta_{\textrm{OGDA}}< \min( \omega, \frac{1}{8}) $, $\eta_{\textrm{EG}}< \frac{\omega}{2}$ {, and $\eta_{\textrm{OMWU}}<\min\left(\frac{1}{4},\frac{\omega}{2}\right)$ } (bound on the stepsize for GDA, OGDA, OMWU, and EG  methods respectively).
It holds that the set of initial conditions so that GDA, OGDA, OMWU, and EG converge (stabilize to any point) is of measure zero.
\end{theorem}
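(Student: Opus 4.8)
The plan is to reduce the non-convergence statement for all four methods to a \emph{local} spectral analysis at the unique mixed Nash equilibrium $(\vec x^*,\vec y^*)=((\tfrac12,\tfrac12),(\tfrac12,\tfrac12))$ guaranteed by \Cref{lemma:onemixed}, and then to invoke the Center--Stable Manifold Theorem to upgrade ``the fixed point is linearly unstable'' into ``its basin of attraction has Lebesgue measure zero''. For GDA this is essentially immediate: by the folklore fact (\Cref{rem:fixedpointsNash}) its only fixed point is the NE, which has full support, so \Cref{thm:unstable} applies as soon as I verify that this equilibrium is \emph{not} weakly-stable; its stepsize bound $\eta_{\mathrm{GDA}}<\tfrac14$ is exactly the instance of the condition $\eta<1/L$ required there. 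For OGDA, EG, and OMWU the same measure-zero conclusion will instead be obtained by computing the Jacobian of each update map at the fixed point and exhibiting an eigenvalue strictly outside the unit disk.

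First I would record a closed form for the payoff that trivializes every derivative. Writing $s_A=x_1+x_2-1$, $s_B=y_1+y_2-1$, $a_M=x_1+x_2-2x_1x_2$ and $b_M=y_1+y_2-2y_1y_2$ (the probabilities that each team fails to ``agree''), a direct expansion of the payoff table collapses to
\[
U_A(\vec x,\vec y)=s_A\,s_B+\omega\,(b_M-a_M),\qquad U=U_B=-U_A .
\]
To see that the NE is not weakly-stable, force player $1$ of team $A$ to play $H$ (set $x_1=1$) and keep team $B$ uniform; then $s_B=0$, $b_M=\tfrac12$, $a_M=1-x_2$, so $U_A=\omega\,(x_2-\tfrac12)$, which is \emph{strictly} monotone in $x_2$. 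Hence player $2$ of team $A$ is no longer indifferent between $H$ and $T$, the equilibrium fails weak stability, and \Cref{thm:unstable} yields the claim for GDA.

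Next I would compute the game Jacobian. From the closed form, at the NE the Hessian blocks of $U$ are $\nabla^2_{xx}U=-\nabla^2_{yy}U=\bigl(\begin{smallmatrix}0&-2\omega\\-2\omega&0\end{smallmatrix}\bigr)$ and $\nabla^2_{xy}U=-\bigl(\begin{smallmatrix}1&1\\1&1\end{smallmatrix}\bigr)$, so the Jacobian of the GDA vector field $g=(-\nabla_x U,\nabla_y U)$ is $\mathcal J=\bigl(\begin{smallmatrix}A&B\\-B&-C\end{smallmatrix}\bigr)$ with these blocks. The decisive observation is that $\mathcal J$ is block-diagonalized by the within-team symmetric and antisymmetric directions $\tfrac{1}{\sqrt2}(1,1)$ and $\tfrac{1}{\sqrt2}(1,-1)$: on the antisymmetric sector $\mathcal J$ reduces to $-2\omega\,I$, while on the symmetric sector it reduces to $\bigl(\begin{smallmatrix}2\omega&2\\-2&2\omega\end{smallmatrix}\bigr)$, whose eigenvalues are $2\omega\pm 2i$. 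This complex pair with \emph{positive} real part $2\omega$ is the single mode that no first-order scheme will stabilize, and it is precisely the intra-team coupling absent in the two-player ($n=m=1$) case.

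Finally, for each method I would substitute $\mathcal J$ into the corresponding linearized update: $\jgda=I+\eta\mathcal J$ for GDA, the companion matrix of $\lambda^2-(1+2\eta\mu)\lambda+\eta\mu=0$ for each eigenvalue $\mu$ of $\mathcal J$ for $\jomwu$-type OGDA, $\jeg=I+\eta\mathcal J+\eta^2\mathcal J^2$ for EG, and the multiplicative-weights Jacobian for OMWU, obtained by conjugating $\mathcal J$ with the diagonal of the equilibrium probabilities $\tfrac12$ and restricting to the (here $4$-dimensional) tangent space of the product of simplices, with the optimism again producing a companion form. Evaluating each at the mode $\mu=2\omega+2i$ exhibits a root of modulus $>1$ for every $\eta$ in the stated ranges, so in each case the NE is a linearly unstable fixed point; the bounds $\eta_{\mathrm{OGDA}}<\min(\omega,\tfrac18)$, $\eta_{\mathrm{EG}}<\tfrac{\omega}{2}$, $\eta_{\mathrm{OMWU}}<\min(\tfrac14,\tfrac{\omega}{2})$ are what guarantee simultaneously that each update map is a local diffeomorphism and that this destabilizing root stays strictly outside the unit circle, so the Center--Stable Manifold Theorem confines the convergent initializations to a measure-zero center-stable manifold. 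I expect the main obstacle to be OMWU: unlike the other three it is not covered by \Cref{rem:fixedpointsNash}, so beyond the interior computation I must separately argue that the boundary fixed points (the pure profiles, which persist because multiplicative updates fix the vertices of each simplex) are non-attracting from a full-measure set; verifying that the quadratic root condition for OGDA and EG remains outside the unit disk uniformly over the entire admissible stepsize interval is the other place where the calculation needs care.
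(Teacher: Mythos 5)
Your treatment of GDA, OGDA, and EG follows the paper's proof in all essentials: non-weak-stability of the uniform equilibrium plus \Cref{thm:unstable} for GDA (your check with $x_1=1$ is the mirror image of the paper's check with $x_1=0$), and linearization at the unique interior Nash equilibrium followed by a stable-manifold argument for the other methods, with the same stepsize bookkeeping ($L\le 4$, hence $\eta_{\mathrm{GDA}}<\tfrac14$ and $\eta_{\mathrm{OGDA}}<\tfrac18$ for the diffeomorphism requirement). Your symmetric/antisymmetric block-diagonalization of the linearized field, giving the spectrum $\{-2\omega,-2\omega,2\omega\pm 2i\}$, is a genuine improvement in transparency over the paper, which extracts the same information by computing characteristic polynomials of the full $8\times 8$ (OGDA, OMWU) and $4\times 4$ (EG) Jacobians with Mathematica; your reduction to the scalar conditions $\lambda^2-(1+2\eta\mu)\lambda+\eta\mu=0$ and $|1+\eta\mu+\eta^2\mu^2|$ at the unstable mode $\mu=2\omega+2i$ reproduces exactly the roots the paper exhibits, and the stated ranges do keep those roots outside the unit disk. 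Up to the deferred root-modulus verifications you acknowledge, this is the paper's proof, more cleanly organized.

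The genuine gap is OMWU. You correctly flag that Fact~\ref{rem:fixedpointsNash} excludes OMWU, so instability of the Nash equilibrium alone does not rule out stabilization elsewhere; but the remedy you sketch --- showing the pure profiles are ``non-attracting from a full-measure set'' by a further spectral analysis --- is both unexecuted and insufficient as stated. OMWU's fixed-point set is not just the vertices: it contains every profile at which each player is indifferent across the strategies in her support, including points on proper faces of the strategy polytope, and a pointwise linearization argument would have to cover all of them. The paper closes this with a different, global argument: if OMWU stabilizes anywhere, the limit point must be an approximate coarse correlated equilibrium (with approximation controlled by the stepsize), and no pure profile of GMP is an $\omega/2$-approximate CCE; this is precisely where the extra constraint $\eta_{\mathrm{OMWU}}<\tfrac{\omega}{2}$ enters. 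To complete the OMWU clause you would need either to adopt that no-regret argument or to supply a full classification of OMWU's boundary fixed points together with an instability proof at each.
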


\begin{remark}
    \Cref{thm:failure} formally demonstrates that the behavior of algorithms mentioned in \Cref{gmp:example} are not a result of ``bad parametrization'', and in fact, the probability that any of them converges to the NE is equal to the probability that the initialization of the variables coincides with the NE (Lebesgue measure zero).
\end{remark}


\begin{remark}[Average iterate also fails]\label{rem:cceneqne}
One might ask what happens when we consider average iterates instead of {the} last iterate. It is a well-known fact \citep{syrgkanis2015fast} that the average iterate of no-regret algorithms converges to  {coarse} correlated equilibria (CCE) so we expect that the average iterate stabilizes. Nevertheless, CCE might not be Nash equilibria. Indeed we can construct examples in which the average iterate of GDA, OGDA, OMWU, and EG experimentally fail to stabilize to Nash equilibria. In particular{,} we consider a slight modification of GMP; players and strategies are the same but the payoff matrix has changed and can be found reads

\end{remark}

\begin{figure}[h!]
\centering
\scalebox{1.1}{
\begin{tabular}{cc}
\begin{tabular}{cc|c|c|c|}
      & \multicolumn{1}{c}{} & \multicolumn{2}{c}{}\\
      & \multicolumn{1}{c}{} & \multicolumn{1}{c}{$HH$} & \multicolumn{1}{c}{$HT/TH$ } & \multicolumn{1}{c}{$TT$} \\\cline{3-5}
    & $HH$ & $2,-2$ & $\frac{1}{2},-\frac{1}{2}$ &$-2,2$ \\\cline{3-5}
          & $HT/TH$ & $-\frac{1}{2},\frac{1}{2}$ & $0,0$ & $-\frac{1}{2},\frac{1}{2}$ \\\cline{3-5}
      & $TT$ & $-1,1$ & $\frac{1}{2},-\frac{1}{2}$ & $1,-1$ \\\cline{3-5}
 \end{tabular}
&    \tabfigure{width=0.5\textwidth}{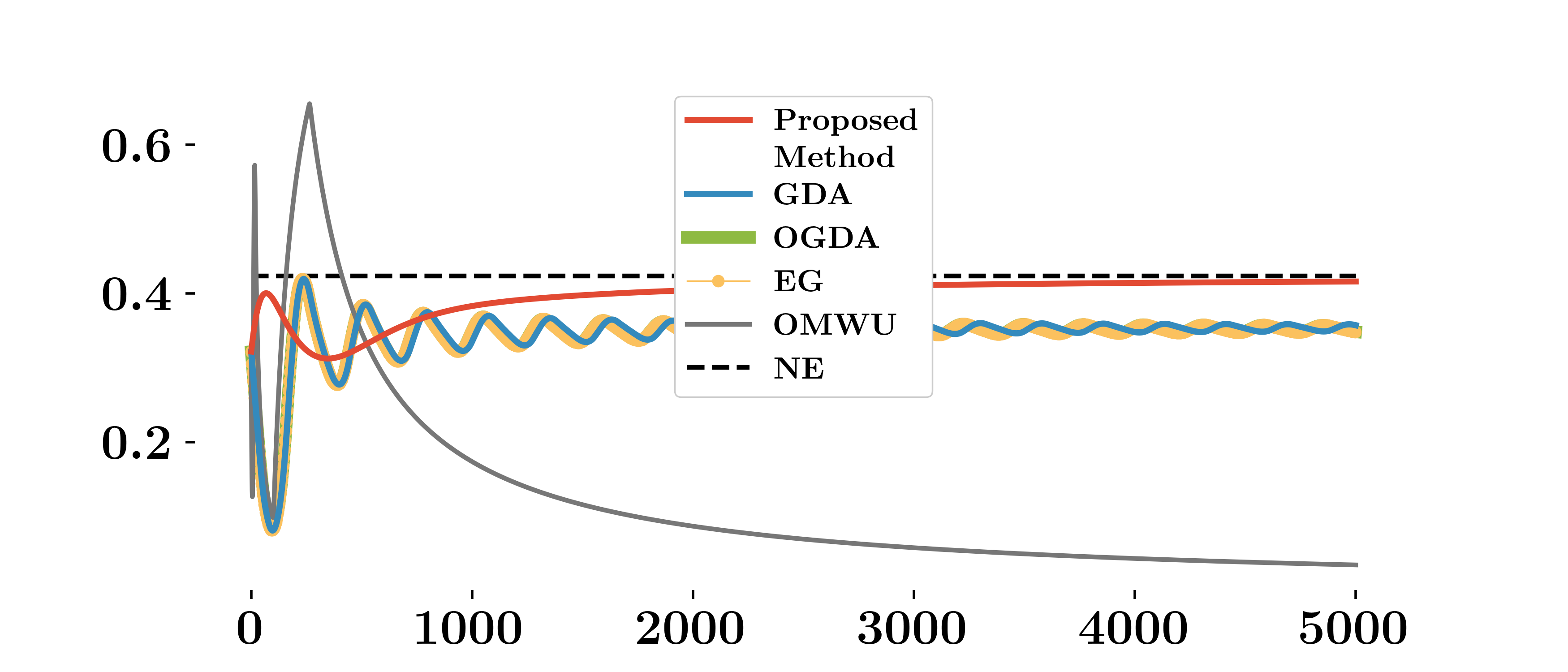}
\end{tabular}
}
\caption{\small {GDA, OGDA, OMWU, \& EG fail to converge to a Nash Equilibrium even in average}}
\label{fig:cycles}
\end{figure}

    

\Cref{fig:cycles} shows that the average iterates of GDA, OGDA, OMWU, and EG {stabilize} to points that are not Nash equilibria. We note that since our method (see next subsection) converges locally, the average iterate should converge locally to a Nash equilibrium.

\subsection{Our proposed method} \label{sec:washout-gda}
The aforementioned results prove that the challenging goal of computing two-team zero-sum games calls for an expansion of existing optimization techniques. The mainstay of this effort and our positive result is the {KPV-GDA} method defined in \Cref{eq:KPV} which is inspired by techniques of adaptive control literature. The first statement we make is that  {KPV-GDA} stabilizes around any Nash {equilibrium} for appropriate choices of matrices $\mat{K},\mat{P}$:


\begin{theorem}[{KPV-GDA} stabilizes]\label{thm:success} Consider a team zero-sum game so that the utility of Team $B$ is $U(\vec{x},\vec{y})$ and hence the utility of Team $A$ is $-U(\vec{x},\vec{y})$ and a Nash equilibrium $(\vec{x}^*,\vec{y}^*)$ of the game. Moreover, we assume 
\[\left( \begin{array}{cc}
-\nabla^{2}_{\vec{x}\vec{x}}U(\vec{x}^*,\vec{y}^*) & -\nabla^{ 2}_{\vec{x}\vec{y}}U(\vec{x}^*,\vec{y}^*)\\
\nabla^{ 2}_{\vec{y}\vec{x}}U(\vec{x}^*,\vec{y}^*) &\nabla^{ 2}_{\vec{y}\vec{y}}U(\vec{x}^*,\vec{y}^*)
\end{array}\right) \textrm{ is invertible.}
\]
For any fixed {step size} $\eta>0$, we can always find matrices $K,P$ so that KPV-GDA method defined in \Cref{eq:KPV}
converges locally to $(\vec{x}^*,\vec{y}^*)$. 
\end{theorem}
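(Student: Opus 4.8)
The plan is to reduce the claim to a linear pole-placement problem and then to solve it by exploiting the invertibility hypothesis through an observability argument. First I would linearize \eqref{eq:KPV} about the equilibrium. Since the relevant Nash equilibrium has full support (interior), the projection $\projoperZ{}$ acts as the identity in a neighborhood, so near $(\vec{z}^*,\vtheta^*)$ the map is smooth. The crucial structural feature --- the \emph{washout} property --- is that $(\vec{z}^*,\vtheta^*)=(\vec{z}^*,\vec{z}^*)$ is a fixed point for \emph{every} choice of $\mat{K},\mat{P}$: at $\vec{z}=\vtheta=\vec{z}^*$ the feedback term $\eta\mat{K}(\vec{z}-\vtheta)$ vanishes and $\vec{z}^*$ is already stationary for the gradient field, so the auxiliary loop neither displaces the equilibrium nor creates new ones. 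Writing $F(\vec{z})=\bigl(-\nabla_{\vec{x}}U,\nabla_{\vec{y}}U\bigr)$, its Jacobian at $\vec{z}^*$ is exactly the matrix $\mathbf{J}$ assumed invertible in the statement.

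Next I would pass to the error coordinate $\vec{w}=\vec{z}-\vtheta$ (with $d=\dim\vec{z}$). A direct computation shows the linearized one-step map becomes $I+\eta\mathbf{M}$, where
\[
\mathbf{M}=\begin{pmatrix}\mathbf{J} & \mat{K}\\ \mathbf{J} & \mat{K}-\mat{P}\end{pmatrix}.
\]
Hence local convergence is equivalent to $\rho(I+\eta\mathbf{M})<1$, i.e. to placing every eigenvalue $\mu$ of $\mathbf{M}$ inside the open disk $D=\{\mu\in\mathbb{C}:\ |1+\eta\mu|<1\}$, which for fixed $\eta>0$ is the disk of radius $1/\eta$ centered at $-1/\eta$. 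So the whole problem is: choose $\mat{K},\mat{P}$ so as to assign the spectrum of $\mathbf{M}$ into $D$.

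The crux is to observe that $\mat{K}$ and $\mat{P}$ parametrize precisely the second block-column of $\mathbf{M}$: writing $\mathbf{M}=\bar{\mathbf{A}}+\mathbf{G}\,\bar{\mathbf{C}}$ with $\bar{\mathbf{A}}=\bigl(\begin{smallmatrix}\mathbf{J}&0\\ \mathbf{J}&0\end{smallmatrix}\bigr)$, $\bar{\mathbf{C}}=(\,0\ \ I\,)$ and free gain $\mathbf{G}=\bigl(\begin{smallmatrix}\mat{K}\\ \mat{K}-\mat{P}\end{smallmatrix}\bigr)$, the correspondence $(\mat{K},\mat{P})\mapsto\mathbf{G}$ is a bijection onto all $2d\times d$ matrices. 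Thus assigning $\mathrm{spec}(\mathbf{M})$ is an output-injection problem, dual to state-feedback pole placement for $(\bar{\mathbf{A}}^{\top},\bar{\mathbf{C}}^{\top})$, and arbitrary assignment is possible iff $(\bar{\mathbf{A}},\bar{\mathbf{C}})$ is observable. The observability matrix stacks $\bar{\mathbf{C}}=(0\ I)$ and $\bar{\mathbf{C}}\bar{\mathbf{A}}=(\mathbf{J}\ 0)$, whose rows span $\mathbb{R}^{2d}$ exactly when $\mathbf{J}$ is invertible --- this is the single place the hypothesis enters. I would then pick any conjugation-closed target set inside $D$ (e.g. all eigenvalues equal to the real point $-1/\eta$), recover real matrices $\mat{K},\mat{P}$ from the pole-placement map, and conclude $\rho(I+\eta\mathbf{M})<1$, giving local last-iterate convergence to $(\vec{x}^*,\vec{y}^*)$.

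I expect the main obstacle to be conceptual rather than computational: recognizing that the washout augmentation converts stabilization into a pole-assignment problem in which $\mat{K},\mat{P}$ control only one block-column, and that the exact obstruction to full assignability is the observability of $(\bar{\mathbf{A}},\bar{\mathbf{C}})$, which collapses to invertibility of $\mathbf{J}$. The remaining steps --- the linearization bookkeeping, the change of coordinates, and translating ``eigenvalues inside $D$'' into a choice of target poles --- are routine. A minor care-point is ensuring the gains are real (handled by choosing a self-conjugate pole set); and if one wishes to treat boundary equilibria, one restricts the analysis to the active face on which the projection is locally affine.
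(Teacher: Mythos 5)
Your proposal is correct, and it reaches the conclusion by a genuinely different route than the paper. The paper follows the washout-filter analysis of Hassouneh et al.: it writes the augmented Jacobian as the block matrix $\left(\begin{smallmatrix}\jgda+\mat{K} & -\mat{K}\\ \mat{P} & \mat{I}-\mat{P}\end{smallmatrix}\right)$, block-triangularizes it by two similarity transformations whose unknown block $\mat{Q}$ is found via a perturbative ansatz $\mat{P}=\epsilon\mat{P}_1$, $\mat{Q}=\mat{Q}_0+\epsilon\mat{Q}_1+O(\epsilon^2)$ together with the implicit function theorem, and then stabilizes the two diagonal blocks separately --- choosing $\mat{K}$ so that $\jgda+\mat{K}$ is stable (using stabilizability of $(\jgda,\mat{I})$, which is immediate since the input matrix is the identity) and choosing $\mat{P}_1=(\jgda-\mat{I})^{-1}(\jgda+\mat{K}-\mat{I})$, which is where the invertibility hypothesis enters. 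You instead pass to the error coordinate $\vec{w}=\vec{z}-\vtheta$, observe that the linearized map is $\mat{I}+\eta\mathbf{M}$ with $\mathbf{M}=\bar{\mathbf{A}}+\mathbf{G}\bar{\mathbf{C}}$, where $\mathbf{G}$ ranges bijectively over all $2d\times d$ gains as $(\mat{K},\mat{P})$ vary, and reduce the whole theorem to one application of the pole-placement theorem for output injection; the observability of $(\bar{\mathbf{A}},\bar{\mathbf{C}})$ collapses exactly to invertibility of the Hessian block matrix $\mathbf{J}$ (the only nontrivial PBH check is at the eigenvalue $0$ of $\bar{\mathbf{A}}$). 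Your computation of $\mathbf{M}$ checks out, and the argument is complete modulo the standard facts you cite. What your route buys: it is shorter, avoids the $\epsilon$-expansion and implicit-function-theorem bookkeeping, proves the strictly stronger statement that the spectrum of the augmented system is \emph{fully assignable} (not merely stabilizable), and identifies the invertibility hypothesis as precisely the observability obstruction, hence as necessary and sufficient for full assignability. What the paper's route buys: it keeps $\mat{P}$ explicitly small (the ``slow estimator'' intuition) and produces a concrete two-parameter recipe that leads naturally into the scalar parametrization $\mat{K}=k\mat{I}$, $\mat{P}=p\mat{I}$ analyzed in \Cref{thm:sufficient}, whereas a generic pole-placement gain need not have that decentralized structure. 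One small caution: local convergence is \emph{implied by} $\rho(\mat{I}+\eta\mathbf{M})<1$ rather than equivalent to it, but you only use the sufficient direction, so nothing breaks.
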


This is an existential theorem and cannot be generally useful in practice. Further, this dynamic would not be necessarily uncoupled and the design of matrices $\mat{K}$ and $\mat{P}$ could necessitate knowledge of the NE we are trying to compute. Instead, our next statement provides sufficient conditions under which a simple parametrization of matrices $\mat{K},\mat{P}$ results in an uncoupled, converging dynamic:

\begin{theorem}\label{thm:sufficient} Consider a two-team zero-sum game so that the utility of Team $B$ is $U(\vec{x},\vec{y})$, the utility of Team $A$ is $-U(\vec{x},\vec{y})$, and a Nash equilibrium $(\vec{x}^*,\vec{y}^*)$. Moreover, let 
\[
\scriptsize
\mat{H}: = \left( \begin{array}{cc}
-\nabla^2_{\vec{x}\vec{x}}U(\vec{x}^*,\vec{y}^*) & -\nabla^2_{\vec{x}\vec{y}}U(\vec{x}^*,\vec{y}^*)\\
\nabla^2_{\vec{y}\vec{x}}U(\vec{x}^*,\vec{y}^*) &\nabla^2_{\vec{y}\vec{y}}U(\vec{x}^*,\vec{y}^*)
\end{array}\right).\] 
and $E$ be the set of eigenvalues $\rho$ of $\mat{H}$ with real part positive, that is $E =\{\textrm{Eigenvalues of matrix $\mat{H}$,}~\rho:\textrm{Re}(\rho)>0\}$.
We assume that $\mat{H}$ is invertible and moreover
\begin{equation}\label{eq:assumption}
\beta=
{  \min_{\rho\in E} \frac{\textrm{Re}(\rho)^2 +\textrm{Im}(\rho)^2}{\textrm{Re}(\rho)}}
>
{
\max_{\rho\in E} \textrm{Re}(\rho)}
=\alpha.\end{equation}
We set $\mat{K} = k \cdot  \mat{I}, $ $\mat{P} = p \cdot  \mat{I}. $ There exist small enough step size $\eta>0$ and scalar  $p>0$ and for any $k\in (-\beta,-\alpha)$ so that \Cref{eq:KPV} with chosen $\mat{K},\mat{P}$ converges locally to $(\vec{x}^*,\vec{y}^*)$.\footnote{As long as aforementioned conditions are satisfied, \Cref{eq:KPV} locally converges in any nonconvex-nonconcave game. Indeed, GMP with any $\omega$ satisfies the sufficient conditions of \ref{thm:sufficient}. See also, \Cref{sec:gmp-satisfies-sufficient-conds}.}
\end{theorem}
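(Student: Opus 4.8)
The plan is to reduce local convergence to a spectral condition on the linearization of \Cref{eq:KPV} at the fixed point, and then to exploit the scalar form $\mat{K}=k\mat{I}$, $\mat{P}=p\mat{I}$ to diagonalize that linearization over the spectrum of $\mat{H}$. Since $p>0$, any fixed point $(\vec{z}^*,\vtheta^*)$ of \Cref{eq:KPV} must satisfy $\vtheta^*=\vec{z}^*$ and $g(\vec{z}^*)=0$, where $g(\vec{z})=\big(-\nabla_{\vec{x}}U,\ \nabla_{\vec{y}}U\big)$ is the GDA field; hence $\vec{z}^*$ is exactly the Nash equilibrium and, being interior, the projection $\projoperZ{}$ acts as the identity in a neighborhood, so it may be dropped in the local analysis. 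Writing $\vec{u}=\vec{z}-\vec{z}^*$, $\vec{w}=\vtheta-\vec{z}^*$ and using $Dg(\vec{z}^*)=\mat{H}$, the iteration linearizes to $(\vec{u},\vec{w})\mapsto (\mat{I}+\eta\mat{J})(\vec{u},\vec{w})$ with
\[
\mat{J}=\begin{pmatrix}\mat{H}+k\mat{I} & -k\mat{I}\\ p\mat{I} & -p\mat{I}\end{pmatrix}.
\]
By the standard linearization (Ostrowski) theorem it suffices to make the spectral radius of $\mat{I}+\eta\mat{J}$ strictly below $1$; since an eigenvalue $\lambda$ of $\mat{J}$ yields the eigenvalue $1+\eta\lambda$ with $|1+\eta\lambda|^2=1+2\eta\,\mathrm{Re}(\lambda)+\eta^2|\lambda|^2$, for all sufficiently small $\eta>0$ this reduces to requiring that $\mat{J}$ be Hurwitz, i.e. every eigenvalue of $\mat{J}$ has strictly negative real part.

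Next I would compute the spectrum of $\mat{J}$. Because the off-diagonal and lower blocks are scalar multiples of $\mat{I}$, an eigenvector of $\mat{H}$ with $\mat{H}\vec{v}=\rho\vec{v}$ lifts to a two-dimensional $\mat{J}$-invariant subspace on which $\mat{J}$ acts as $\left(\begin{smallmatrix}\rho+k & -k\\ p & -p\end{smallmatrix}\right)$. Thus the eigenvalues of $\mat{J}$ are, over all eigenvalues $\rho$ of $\mat{H}$, the two roots of
\[
\lambda^2-(\rho+k-p)\,\lambda-p\rho=0 .
\]
The heart of the argument is a perturbative analysis in $p$ near $p=0$. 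At $p=0$ the two roots are $0$ and $\rho+k$; a first-order expansion (implicit function theorem applied to the quadratic) gives, for small $p>0$, one root near $\rho+k$ and one root $\lambda\approx -p\,\rho/(\rho+k)$. Hence for small $p$ the Hurwitz condition becomes, for every eigenvalue $\rho$ of $\mat{H}$,
\[
\mathrm{Re}(\rho+k)<0 \qquad\text{and}\qquad \mathrm{Re}\!\Big(\tfrac{\rho}{\rho+k}\Big)>0 .
\]

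It then remains to check that these two families of inequalities are exactly encoded by $k\in(-\beta,-\alpha)$. Writing $\rho=a+bi$, the first inequality is $a+k<0$; over $\rho\in E$ (those with $a>0$) this is $k<-\max_{\rho\in E}a=-\alpha$, while for $\rho\notin E$ ($a\le0$) it holds automatically since $k<0$. The second inequality has numerator $a^2+ak+b^2$ over the positive denominator $|\rho+k|^2$; for $\rho\in E$ it rearranges (dividing by $a>0$) to $-k<\tfrac{a^2+b^2}{a}$, hence to $k>-\min_{\rho\in E}\tfrac{a^2+b^2}{a}=-\beta$, and for $\rho\notin E$ it holds automatically since $ak\ge0$ and $\rho\neq0$ (invertibility of $\mat{H}$ rules out $a=b=0$). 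Therefore any fixed $k\in(-\beta,-\alpha)$ makes all limiting real parts strictly negative, the perturbed small roots scaling like $p\cdot(\text{negative})$; by continuity of eigenvalues there is a threshold $p^*>0$ with $\mat{J}$ Hurwitz for every $p\in(0,p^*)$, and then choosing $\eta$ below the bound from the first paragraph yields local convergence. The footnote's claim would follow by verifying \Cref{eq:assumption} for the explicit $\mat{H}$ of GMP.

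The main obstacle I anticipate is the perturbation step: the eigenvalue $0$ at $p=0$ is the degenerate direction (it reflects precisely the marginal, non-converging mode of GDA that the feedback is designed to cure), so its first-order motion in $p$ must be computed carefully and correctly signed for complex $\rho$, and one must ensure the expansion is uniform over the finite spectrum of $\mat{H}$ so that a single $p^*$ works. A secondary technical point is justifying that the simplex constraints do not interfere: the linearization should be read on the tangent space of $\calZ=\calX\times\calY$ at the interior equilibrium, so that $\mat{H}$ — and hence the entire eigenvalue argument — is implicitly the reduced operator expressed in free simplex coordinates.
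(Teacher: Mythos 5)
Your proposal is correct and follows essentially the same route as the paper's proof: linearize \Cref{eq:KPV} at the interior equilibrium, reduce local convergence to the block matrix $\left(\begin{smallmatrix}\mat{H}+k\mat{I} & -k\mat{I}\\ p\mat{I} & -p\mat{I}\end{smallmatrix}\right)$ being Hurwitz (with $\eta$ small via the $|1+\eta\lambda|$ computation), factor its spectrum over the eigenvalues $\rho$ of $\mat{H}$ into the quadratic $\lambda^2-(\rho+k-p)\lambda-p\rho=0$, and perturb around $p=0$ where the roots are $\rho+k$ and $0$. The only cosmetic differences are that you obtain the quadratic from a two-dimensional invariant subspace rather than the paper's Schur-complement determinant, and you package the first-order condition on the degenerate root as $\mathrm{Re}\big(\rho/(\rho+k)\big)>0$ rather than the paper's separate real/complex case analysis; both yield exactly the window $k\in(-\beta,-\alpha)$.
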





\section{Experiments}\label{sec:experiments}

In this section, we perform a series of experiments to further motivate the study of two-team zero-sum games, especially in the context of multi-agent generative adversarial networks (multi-GANs). A multi-agent generative adversarial network (multi-GAN)~\citep{arora2017generalization,hoang2017multi,hoang2018mgan,zhang2018stackelberg,tang2020lessons,hardy2019md,albuquerque2019multi} is a generative adversarial network (GAN) that leverages multiple  ``agents'' (generators and/or discriminators)  in order achieve statistical and computational benefits. 
In particular, \citeauthor{arora2017generalization} formally proved the expressive superiority of multi-generator adversarial network architectures something that we empirically verify in \Cref{sec:experiments}. 
In this direction, researchers strive to harness the efficacy of distributed processing by utilizing shallower networks that can collectively learn more diverse datasets%
\footnote{
Indeed, it is preferable from a computational standpoint to back-propagate through two equally sized neural networks 
rather than through a single one that would be twice as deep \citep{tang2020lessons}.
}.

At first, the superiority of multi-GANs might appear to contrast our theoretical findings; but in reality, the superiority comes from the quality of solutions that are attainable from multi-agent architectures (\textit{expressivity}) and the fact that hardness (\textit{complexity}) translates to rates of convergence but not non-convergence. Single agent GANs quickly converge to critical points that are not guaranteed to capture the distribution very well. In figure \ref{big-vs-many} we see the fast convergence of a single-agent GAN to solutions of bad quality versus the convergence of a multi-GAN to an obviously better solution.  Due to space constraints, we defer further discussion of the experiments at Section 
\ref{sec:experiments_gans}.

\begin{figure}[h!]
\centering
\vspace{-0.5cm}
\begin{tabular}{ccc}
    \includegraphics[width=0.3\textwidth]{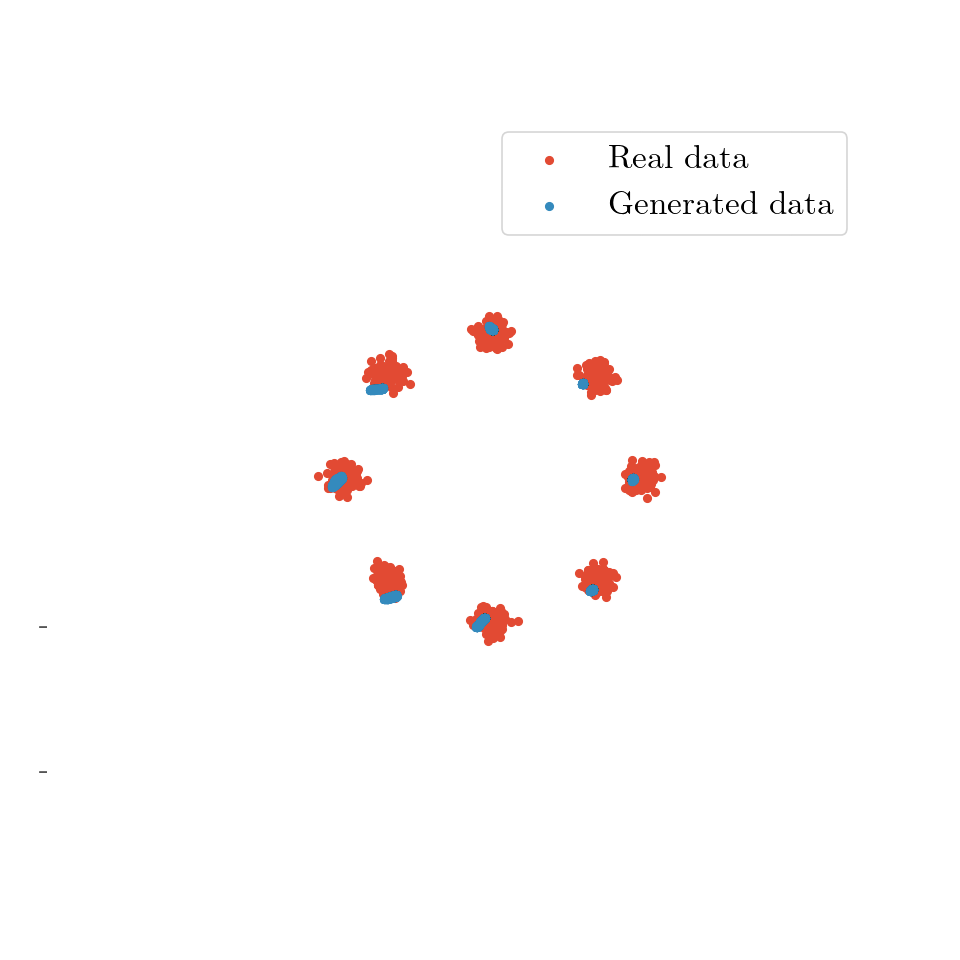} &
    \includegraphics[width=0.3\textwidth]{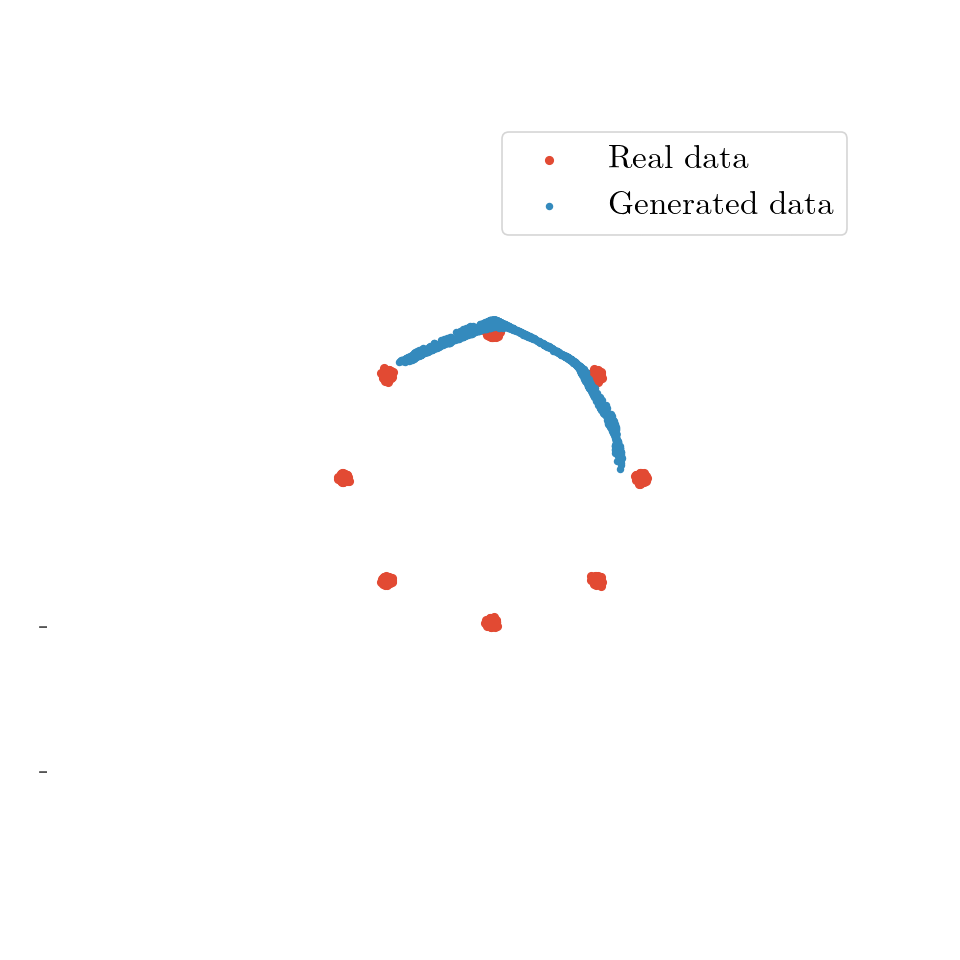} &
    \includegraphics[width=0.3\textwidth]{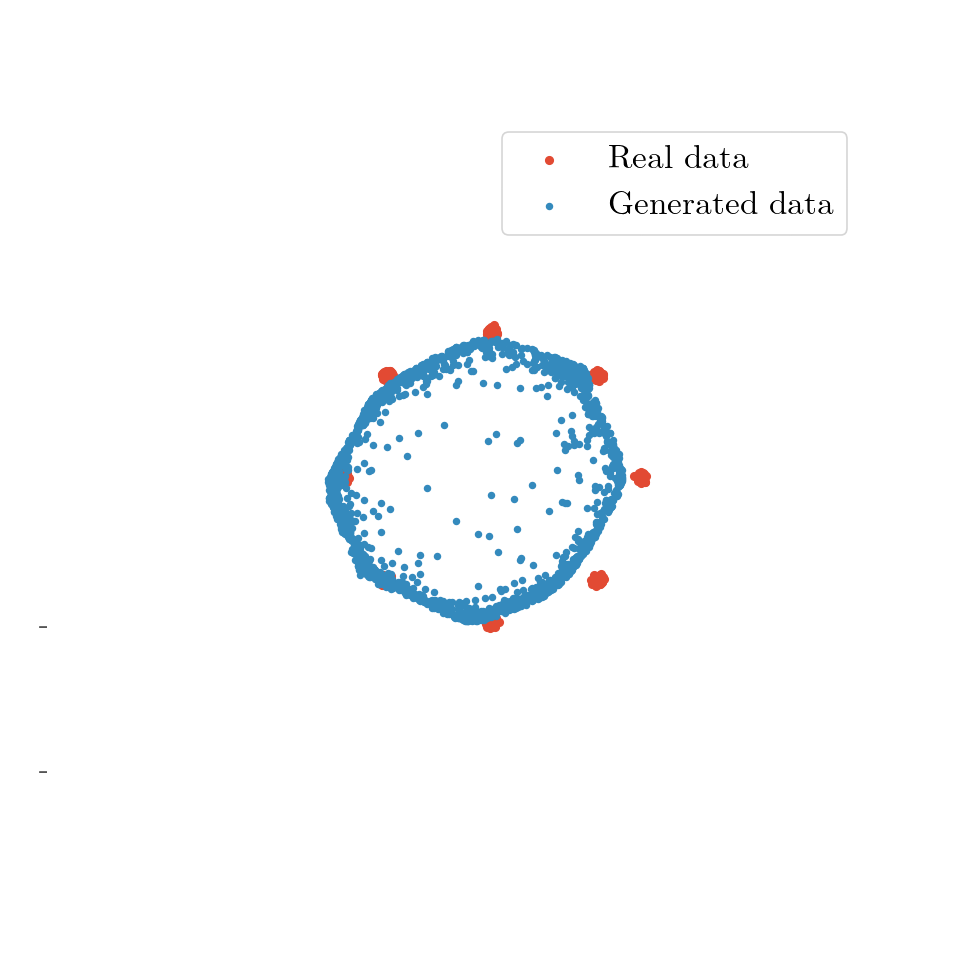} 
\end{tabular}
\vspace{-40pt}
\caption{\small
From left to right: \inlinelist{\protect\item Each generator of MGAN learns one mode of the mixture of 8 gaussians, \protect\item Mode Collapse of single-agent GANs, \protect\item Single-agent GAN can't discriminate between the modes.}
}
\label{big-vs-many}
\end{figure}

\section{Conclusions and Open Problems} \label{sec:conclusion}

In this work we study the wide class of nonconvex-nonconcave games that express the two-team competition, inspired broadly by the structure of the complex competition between multi-agent generators and
discriminators in GAN's.
{Furthermore}, in this setting of \emph{two-team zero-sum games}, we have presented a number of negative results about the problem of computing a Nash equilibrium. Moreover, through a simple family of games that we construct, we prove the inability of commonly used methods for min-max optimization such as GDA, OGDA, OMWU, and EG to converge both in average and in the last iterate to Nash Equilibria which comes to a stark contrast to recent literature that is concerned with simpler games. We have also presented an optimization method (called {KPV-GDA}) that manages to stabilize around Nash equilibria.



\section*{Acknowledgements}
Ioannis Panageas would like to acknowledge a start-up grant. Emmanouil V. VG is grateful for the financial support by FODSI Postdoctoral Fellowship. 
He is also grateful for financial support by the Google-Simons Fellowship, Pancretan Association of America and Simons Collaboration on Algorithms and Geometry. Additionally, he would like to acknowledge the following series of NSF-CCF grants under the numbers 1763970/2107187/1563155/1814873. 
This work was partially completed while IP and EVVG were visiting research fellows at the Simons Institute for Theory of Computing during Learning and Games Semester.

\bibliography{main}

\newpage

\appendix

\section{Further related work}
\label{appendix:sol_concepts}
 The properties of team games have given rise to a large amount of literature
which we cannot hope to review here; instead we refer the reader to \citep{kim2019theory,gold2005introduction} and references therein. Additionally, convergence properties of online learning, first-order methods (\textit{e.g.} GDA, OGDA, OMWU, EG) in \textit{team games} are not well understood.

\paragraph{Team games.} For team games, the solution concept of \textit{team-maxmin equilibria} (TME), a notion introduced by \citet{von1997team}, is a rather well-understood solution concept. In this category of two-team zero-sum games, a team of $n$ players  plays defensively against a single \textit{adversarial} player. TME's provide crucial information about robustness since they  give the  team  players  the  highest  payoff possible when they employ uncoordinated strategies. Unfortunately, TMEs correspond to the global optimum of a nonconvex nonlinear program which is $\NP$-hard to compute and as such precludes the use of online learning, first-order algorithms for their computation. \cite{basilico2017team} provided heuristics with iterative linear programming and support enumeration methods with proved approximation guarantees. For the case of extensive or large-sized normal form games, \citet{basilico2017computing,celli2018computational}  introduced a relaxed notion of \textit{team-maxmin  equilibrium  with  coordination  device} (TMECor). In a TMECor, the team of players decides their joint strategies \emph{ex-ante} against the adversary; that is, the team members are allowed to discuss and agree on tactics before the game starts, but they cannot communicate during the game.

While TMECor's share some properties with NE's in zero-sum two-player games (\textit{e.g}., exchangeability), the proposed algorithms by \cite{celli2018computational} leverage tools like Mixed-Integer Linear Program (MILP) that involve a large number of integer variables. In order to tackle the large size of the program heuristically, \citet{zhang2020computing,zhang2020converging} have proposed modified versions of the incremental  strategy  generation  (ISG)  algorithm (See \cite{mcmahan2003planning}) for {large-sized} games.

\paragraph{Multi-GAN's.} We cannot aspire to survey the literature of multi-GAN's in its entirety. Scratching only the surface, we mention that: 
\begin{inparaenum}[(a)]
\item In MGAN \cite{hoang2018mgan} and MAD-GAN \cite{li2019mad}, we have a mixture of generative models with asymptotically infinite capacity networks. 
\item In the seminal case of MIX+GAN \cite{arora2017generalization}, an extra regularization term is typically added to discourage the weights of a mixture of generators being too far away from uniform. 
\item In Stackelberg GAN, \cite{zhang2018stackelberg} exploit the leader-followers relation of discriminator vs generators to achieve smaller minimax gap and better Frech\'{e}t metrics.
For the case of \textit{multiple-discriminators}: 
\item In GMAN \cite{durugkar2016generative}, an increased number of agents are used to achieve higher quality images, while 
\item  MD-GAN \cite{hardy2019md} is designed to generalize over different datasets and multiple tasks that are spread on multiple workers.
\end{inparaenum}

\subsection{ Discussion about different solution concepts}\label{ap:discussion}
In this subsection, we would like to stress some differences between the different equilibrium notions (TMECor, TME, and NE). Initially, it is easy to see that TMECor similar to coarse correlated equilibria can achieve better social welfare but, as in the case of congestion games, their price of anarchy can become significantly worse \cite{roughgarden2009intrinsic}.
In order to improve this, TMECor request prior knowledge of the game. For example, as \citet{zhang2020computing} mention:
\textit{For example, in multiplayer poker games,
a team may play against an adversary player,
but they cannot communicate and discuss
their strategy during the game due to the rule.}

However in many contemporary challenging AI models, the agents/players of each team do not necessarily have knowledge of the actual game. Thus, the only thing the team players can decide in advance is the decision rules they will follow. Thus, the main conceptual reason that we focus on NE's in comparison to TMECor is is because it is a way of reasoning about games in which players do not necessarily need to know in which team they fall into in advance and/or they cannot be expected to be able to coordinate their actions with players of common interest.

Problems of this kind are prevalent in modern ML applications like  strategic  conflict  resolution  \citep{leibo2017multi},  coordination  between  autonomous  vehicles  \citep{cao2012overview} , or  collaboration of  agents  in  defensive  escort  teams  \citep{sheikh2019learning}. In this kind of games  each  agent  is  simultaneously working  towards  maximizing  its  own  payoff  \textit{(local  reward)} as well as the collective success of the team \textit{(global reward)}.

    

Conclusively , the notion of Nash equilibrium seems to be a solution concept that needs to be investigated thoroughly in the setting of two-team zero-sum games.

\subsection{Adversarial potential games.} 
\label{sec:adv-potential-games}
Consider a normal form game with two sets of players $\calN$ and $\calM$. Following the conventions of \Cref{sec:prelims}, the minimizers $\calN$ will be associated with cost functions $c_i,~i\in [n]$, while the maximizers $\calM$ will be associated with utilities $u_j, ~j\in[m]$. Further, there exists an underlying \emph{potential function} $\Phi : \calX \times \calY \to \R$, so that 
\begin{itemize}[noitemsep]
    \item[(i)] for each minimizer $i \in \calN$ and strategies $\vec{x}_i,\vec{x}_i' \in \calX_i$, 
    \begin{equation}
        \label{eq:cost-pot}
    c_i(\vec{x}_{i}',\vec{x}_{-i},\vec{y}) - c_i(\vec{x}_{i},\vec{x}_{-i},\vec{y}) = \Phi(\vec{x}_{i}',\vec{x}_{-i},\vec{y}) - \Phi(\vec{x}_{i},\vec{x}_{-i},\vec{y});    
    \end{equation}
    \item[(ii)] for each maximizer $j \in \calM$ and strategies $\vec{y}_j,\vec{y}_j' \in \calY_j$,
    \begin{equation}
        \label{eq:util-pot}
        u_j(\vec{x},\vec{y}_{j}',\vec{y}_{-j}) - u_j(\vec{x},\vec{y}_{j},\vec{y}_{-j}) = \Phi(\vec{x},\vec{y}_{j}',\vec{y}_{-j}) - \Phi(\vec{x},\vec{y}_{j},\vec{y}_{-j}).
    \end{equation}
\end{itemize}
While this setting is not a two-team zero-sum game, it can still be captured by one, namely $\Gamma(n,m,\Phi)$. More precisely, \cref{eq:cost-pot} implies that $c_i (\vec{x},\vec{y}) = \Phi(\vec{x},\vec{y}) + \Psi_i(\vec{x}_{-i},\vec{y})$, for any $i \in \calN$ and $(\vx, \vy) \in \calX \times \calY$, where $\Psi_i$ does not depend on $\vec{x}_i$, and \cref{eq:util-pot} implies that $u_j (\vec{x},\vec{y}) = \Phi(\vec{x},\vec{y}) + \Psi'_j(\vec{x},\vec{y}_{-j})$, where $\Psi_j$ does not depend on $\vec{y}_j$. As a result, by the same reasoning we used in the proof of \cref{thm:hardness}, we arrive at the following conclusion:

\begin{proposition}
    A strategy profile that is an $\epsilon$-approximate Nash equilibrium with respect to the two-team zero-sum game $\Gamma(n, m, \Phi)$ will be an $\epsilon$-approximate Nash equilibrium of the original adversarial potential game.
\end{proposition}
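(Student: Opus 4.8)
The plan is to verify that the per-player unilateral-deviation gains are \emph{identical} in the original adversarial potential game and in the auxiliary two-team zero-sum game $\Gamma(n,m,\Phi)$, whence every approximate-equilibrium inequality carries over verbatim with the \emph{same} $\epsilon$.

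First I would pin down what an $\epsilon$-approximate Nash equilibrium means in $\Gamma(n,m,\Phi)$. By construction the common utility of the maximizing team $\calM$ is $U=\Phi$ and that of the minimizing team $\calN$ is $-\Phi$; so $(\vec{x}^*,\vec{y}^*)$ being an $\epsilon$-approximate NE of $\Gamma(n,m,\Phi)$ unpacks, using the (NE) conditions with an additive slack, into: for every minimizer $i\in\calN$ and every $\vec{x}_i\in\calX_i$, $\Phi(\vec{x}^*,\vec{y}^*)\le \Phi(\vec{x}_i,\vec{x}_{-i}^*,\vec{y}^*)+\epsilon$, and for every maximizer $j\in\calM$ and every $\vec{y}_j\in\calY_j$, $\Phi(\vec{x}^*,\vec{y}^*)\ge \Phi(\vec{x}^*,\vec{y}_j,\vec{y}_{-j}^*)-\epsilon$.

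Next I would invoke the potential identities. The decomposition $c_i(\vec{x},\vec{y})=\Phi(\vec{x},\vec{y})+\Psi_i(\vec{x}_{-i},\vec{y})$ (with $\Psi_i$ independent of $\vec{x}_i$), equivalently \cref{eq:cost-pot}, shows that for the fixed profile $(\vec{x}_{-i}^*,\vec{y}^*)$ the term $\Psi_i$ cancels in any difference, so $c_i(\vec{x}_i,\vec{x}_{-i}^*,\vec{y}^*)-c_i(\vec{x}^*,\vec{y}^*)=\Phi(\vec{x}_i,\vec{x}_{-i}^*,\vec{y}^*)-\Phi(\vec{x}^*,\vec{y}^*)$; the analogous equality for $u_j$ follows from $u_j=\Phi+\Psi_j'$, i.e.\ \cref{eq:util-pot}. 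Substituting these exact equalities into the two $\Phi$-bounds of the previous step turns them into $c_i(\vec{x}^*,\vec{y}^*)\le c_i(\vec{x}_i,\vec{x}_{-i}^*,\vec{y}^*)+\epsilon$ and $u_j(\vec{x}^*,\vec{y}^*)\ge u_j(\vec{x}^*,\vec{y}_j,\vec{y}_{-j}^*)-\epsilon$, which are precisely the defining inequalities of an $\epsilon$-approximate NE of the original adversarial potential game.

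There is no genuine technical obstacle here; the statement is a direct transfer. The only points that require care are bookkeeping ones: matching the cost-minimization convention for $\calN$ against the utility-maximization convention for $\calM$ (so that the inequality directions line up), and confirming that the slack $\epsilon$ is preserved \emph{unchanged} rather than inflated. The latter is exactly what the potential property buys us: since \cref{eq:cost-pot,eq:util-pot} are exact equalities of deviation differences, the additive error of each single-player deviation is transported one-to-one, with no accumulation across players because the equilibrium condition is imposed per player.
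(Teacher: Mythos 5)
Your argument is correct and is exactly the reasoning the paper intends: it invokes the decomposition $c_i=\Phi+\Psi_i$ and $u_j=\Phi+\Psi'_j$ stated just before the proposition, together with ``the same reasoning used in the proof of Theorem~\ref{thm:hardness}'', i.e.\ that unilateral-deviation differences of $c_i$ and $u_j$ coincide with those of $\Phi$, so the $\epsilon$-approximate equilibrium inequalities transfer unchanged. Your write-up simply makes explicit the bookkeeping (sign conventions, per-player slack preservation) that the paper leaves implicit.
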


{
 
\section{Game Dynamics}

The following algorithms are ubiquitous in the literature of min-max optimization and have known variable success under different settings. Gradient descent-ascent (GDA) is the prototypical method upon which optimistic gradient descent (OGDA), extra gradient (EG), and our proposed method $(\mat{K},\mat{P})$-vaned gradient descent-ascent (KPV-GDA) are built. Optimistic multiplicative weights update is the optimistic variant of the MWU.
\label{sec:gamedynamics}
\subsection{Gradient Descent-Ascent}
    $$ 
    \begin{cases}
        x_i\step{t+1} = \projoperX{i} \Big\{ x_i\step{t} - \eta \nabla_{x_i} f  (\vec{x}\step{t} \vec{y}\step{t} ) \Big\} \\
        y_j\step{t+1} = \projoperY{j} \Big\{ y_j\step{t} + \eta \nabla_{y_j} f  (\vec{x}\step{t} \vec{y}\step{t} ) \Big\}
    \end{cases}$$
\subsection{Optimistic Gradient Descent-Ascent}
    $$ \begin{cases}
        x_i\step{t+1} = \projoperX{i} \Big\{ x_i\step{t} - 2 \eta \nabla_{x_i} f  (\vec{x}\step{t} \vec{y}\step{t} ) +  \eta \nabla_{x_i} f  (\vec{x}\step{(t-1)}), \vec{y}\step{t-1} \Big\} \\
        y_j\step{t+1} = \projoperY{j} \Big\{ y_j\step{t} + 2 \eta \nabla_{y_j} f  (\vec{x}\step{t} \vec{y}\step{t}) -
        \eta \nabla_{y_j} f  (\vec{x}\step{(t-1)}), \vec{y}\step{t-1} \Big\}
    \end{cases}$$
\subsection{Extra Gradient Method}
    $$ \begin{cases}
        x_i\step{t+\frac{1}{2}} = \projoperX{i} \Big\{ x_i\step{t} - \eta \nabla_{x_i} f  (\vec{x}\step{t} \vec{y}\step{t}) \Big\},  \quad 
        x_i\step{t+1} = \projoperX{i} \Big\{ x_i\step{t} - \eta \nabla_{x_i} f  (\vec{x}\step{t+\frac{1}{2}}, \vec{y}\step{t+\frac{1}{2}}) \Big\} \\
        y_j\step{t+\frac{1}{2}} = \projoperY{j}\Big\{ y_j\step{t} + \eta \nabla_{y_j} f  (\vec{x}\step{t} \vec{y}\step{t}) \Big\}, 
        \quad  
        y_j\step{t+1} = \projoperY{j} \Big\{ y_j\step{t} + \eta \nabla_{y_j} f  (\vec{x}\step{t+\frac{1}{2}}, \vec{y}\step{t+\frac{1}{2}}) \Big\}
        \end{cases}
    $$
\subsection{Optimistic Multiplicative Weights Update Method}
    $$ \begin{cases}
        x_{i,k}\step{t+1} =  x_{i,k}\step{t} 
        \frac{\exp\Big( 
            - 2 \eta \frac{\partial}{\partial{x_{i,k}}} f  (\vec{x}\step{t} \vec{y}\step{t} ) +  \eta \frac{\partial}{\partial{x_{i,k}}} f  (\vec{x}\step{(t-1)}), \vec{y}\step{t-1}) \Big) } 
        { 
            \sum_j x_j\step{t} \exp\Big( - 2 \eta \frac{\partial}{\partial{x_{i,k}}} f  (\vec{x}\step{t} \vec{y}\step{t} ) +  \eta \frac{\partial}{\partial{x_{i,k}}} f  (\vec{x}\step{(t-1)}), \vec{y}\step{(t-1)}) \Big) 
        } 
        \\
        y_{j,k}\step{t+1} =  y_{j,k}\step{t} \frac{  \exp\Big( 2 \eta \frac{\partial}{\partial{y_{j,k}}} f  (\vec{x}\step{t} \vec{y}\step{t}) -
        \eta \frac{\partial}{\partial{y_{j,k}}}f  (\vec{x}\step{(t-1)}), \vec{y}\step{t-1} \Big) }
            { \sum_j y_j\step{t} \exp\Big( 2 \eta \frac{\partial}{\partial{y_{j,k}}} f  (\vec{x}\step{t} \vec{y}\step{t}) - \eta \frac{\partial}{\partial{y_{j,k}}}f  (\vec{x}\step{(t-1)}), \vec{y}\step{t-1} \Big)} 
    \end{cases}$$ 
    
\subsection{\texorpdfstring{$\mat{K},\mat{P}$}{K,P}-vaned Gradient Descent-Ascent}
    $$\begin{cases}
        \vec{z}\step{t+1} = & \projoperZ{} \Big\{ \vec{z}\step{t} + \eta \binom{ - \nabla_{\vec{x}} f( \vec{z}\step{t} ) } {\nabla_{\vec{y}} f( \vec{z}\step{t} )  } + \eta \mat{K} ( \vec{z}\step{t} - \vtheta\step{t} ) \Big\} \\
        \vtheta\step{t+1} = & \projoperZ{} \Big\{  \vtheta\step{t} + \eta \mat{P} ( \vec{z}\step{t} - \vtheta\step{t} ) \Big\}
    \end{cases}$$

}

{
Notations $x_i\step{t}$ ( or $y_j\step{t}$) stand for the strategy vector of the $i$-th minimizing (or $\mat{J}$ -th maximizing) agent at time-step $t$; while $x_{i,k}$ and $y_{j,k}$ denote the $k$-th pure strategy of player $i$ of team A and player $\mat{J}$  of team $B$ respectively. The step size is denoted as $\eta$. Operators $\projoperX{i},\projoperY{j}, \projoperZ{}$ are the projection operators to the corresponding products of simplices. The projection operator is important since remaining inside of the set of feasible solutions is not guaranteed at every step except for the case of the OMWU method.
}

\section{Properties of GMP}

\label{subsec:properties-gmp}
We will formally demonstrate how a series of conditions needed for the convergence of the extra gradient method are not satisfied for \hyperref[sec:gmp]{GMP}.

\subsection{Lack of favorable properties}
\paragraph{Variational Inequalities}
Let a saddle-point problem (which can be seen as a zero-sum game between agents $\vec{x}, \vec{y}$):
\begin{equation}
    \min_{\vec{x} \in \calX} \max_{\vec{y} \in \calY} U(\vec{x}, \vec{y})
\end{equation}

 Also, let an operator $F:\R^d \rightarrow \R^d$ that is $L$-Lipschitz continuous for a norm $\| \cdot \|_p$.

The saddle-point problem can be formulated in the form of the equivalent Stampacchia variational inequality (SVI), i.e. find a $\vec{z}^*  \in \R^d$, $\vec{z}^* = (\vec{x}^*, \vec{y}^*)$, such that for every $\vec{z}$:
\begin{equation}
    \langle F(\vec{z}^*) - \vec{z}-\vec{z^*} \rangle \geq 0.
\end{equation}

Operator $F$ is said to be monotone if for every $\vec{z}, \vec{w} \in \R^d$:
\begin{equation}
    \langle F(\vec{z}), F(\vec{w}), \vec{z}-\vec{w} \rangle \geq 0.
\end{equation}

One more noteworthy variational inequality is the Minty Variational Inequality (MVI) with respect to a point $ \vec{z}^* $ is defined as:
\begin{equation}
    \langle F(\vec{z}), \vec{z} - \vec{z}^*  \rangle \geq 0.
\end{equation}
 
\citet{mertikopoulos2019optimistic} prove the convergence of the extra gradient method by utilizing the monotonicity of $F$. In this case, the solutions to the MVI, $\vec{z}^*$, coincide with the solutions to the SVI. 
In particular $F$ is taken to be
$F(\vec{x}, \vec{y}) = \Big( \nabla_{\vec{x}} U(\vec{x}, \vec{y}) , -\nabla_{\vec{y}} U(\vec{x}, \vec{y})\Big)$.

\citet{diakonikolas2021efficient} prove convergence under a weaker form of the MVI for operator $F$:
\begin{equation}
    \langle F(\vec{z}), \vec{z} - \vec{z}^*  \rangle \geq - \frac{\rho}{2} \| F(\vec{z} ) \|_{p^*}^2.
\end{equation}
With $\rho$ a constant in the interval $[0, \frac{1}{4L}]$. We note that with $\rho = 0$ we retrieve the MVI which we proved is not satisfied.

Setting $\vec{z} = (\frac{1}{3}, \frac{2}{3}, \frac{1}{6}, \frac{5}{6}, \frac{2}{3}, \frac{1}{3}, \frac{1}{3}, \frac{2}{3} )$ we see that the MVI is violated: $\langle F(\vec{z}), \vec{z} - \vec{z}^*\rangle = - \frac{\omega}{9}$.

As for the weak-MVI, we numerically verify that for $\omega = 1/2$ and every $\rho$ in $\{ \frac{k}{10^3} | k = 0, 1, ..., 10^3 \cdot \frac{1}{4L} \}$  that a $\vec{z}$ can be found such that is violates the inequality.

Hence, already known techniques \citep{mertikopoulos2018cycles, diakonikolas2021efficient, golowich2020tight} cannot tackle the problem of computing a Nash equilibrium in GMP. Additionally, \citet{golowich2020tight} only analyze the unconstrained setting.

\paragraph{Cocoercivity}
Cocoerciviy is a condition that is stronger than the MVI. Hence, it follows that cocoercivity does not hold in
\hyperref[sec:gmp]{GMP}. 

\paragraph{Hidden convex-concave games.}
\citet{DBLP:conf/nips/FlokasVGP21} proved global convergence for gradient flow (continuous time) when a game is strictly hidden convex-concave. Our setting of \hyperref[sec:gmp]{GMP} does not meet either criterion.

{
\subsection{GMP satisfies the sufficient conditions of \Cref{thm:sufficient}}
\label{sec:gmp-satisfies-sufficient-conds}
We can verify that the generalized matching pennies game satisfies the sufficient conditions of \Cref{thm:sufficient}. Namely, the eigenvalues of $H$ are the following
$$ \rho \in \{ -2\omega, 2\omega - 2i, 2\omega + 2i \}.$$

Hence, the set of the eigenvalues with a positive real part $E$ is
$E = \{ 2\omega - 2i, 2\omega + 2i \}$. As such, we see that indeed,
$$\beta = \min_{\rho \in E} \frac{\mathrm{Re}(\rho) + \mathrm{Im} (\rho) }{\mathrm{Re}(\rho)} = \frac{2\omega^2 + 2}{\omega} = 2 \omega + \frac{2}{\omega} > 2 \omega = \max_{\rho \in E} \mathrm{Re}(\rho) = \alpha .$$

For any choice of $k \in (  -2 \omega - \frac{2}{\omega}, -2\omega)$ the algorithm converges. For the case that $\omega = 1/2$, it suffices that $k \in (-5, -1)$.
}
{ 
\section{Proofs for \cref{sec:illustrative}}
\subsection{Derivation of the min-max objective in ~\eqref{eq:teamwganobj_simplified}}
By the definition of variance, we get that :
$\mathrm{Var} \big[ x_i \big] = \mathbb{E}[{x_i^2}] - \big( \mathbb{E}[{x_i}] \big)^2 \Leftrightarrow \mathbb{E}[{x_i^2}] = \mathrm{Var} \big[ x_i \big] + \big( \mathbb{E}[{x_i}] \big)^2$. More precisely for any $\vec{x} \sim \mathcal{N}{(\mu, \mat{I})}$, we get $\mathbb{E}[{x_i^2}] = 1 +  \mu_i ^2$.

Additionally, after calculations, we can get that:
\begin{align*}
        &\mathbb{E}_{ \xi \sim \mathcal{N}(0,\mI)} [{( \xi_i + \theta_i)^2}] = \mathrm{Var} \big[  \xi_i + \theta_i \big] + \big( \mathbb{E}[{ \xi_i + \theta_i}] \big)^2 \xRightarrow[]{\theta_i \text{ const.} } \\
        &\mathbb{E}_{ \xi \sim \mathcal{N}(0,\mI)} [{( \xi_i + \theta_i)^2}] = \mathrm{Var} \big[  \xi_i \big] + \big( \mathbb{E}[\xi_i] + \mathbb{E}[\theta_i] \big)^2 
        \Rightarrow \\
        &\mathbb{E}_{ \xi \sim \mathcal{N}(0,\mI)} [{( \xi_i + \theta_i)^2}] = \mathrm{Var} \big[  \xi_i \big] + \big( 0 +  \theta_i \big)^2 
        \Rightarrow \\
        &\mathbb{E}_{ \xi \sim \mathcal{N}(0,\mI)} [{( \xi_i + \theta_i)^2}] = 1 +  \theta_i^2 
    \end{align*}

Moreover, it is easy to check that for a mixture $\mathcal{D}(\mu_1, \sigma_1\mI,\mu_2, \sigma_2\mI,
    \pi_1,\pi_2)$ of two multi-dimensional normal distributions $\mathcal{N}(\mu_1, \sigma_1\mI), \mathcal{N}(\mu_1, \sigma_1 \mI)$ with corresponding weights $\pi_1, \pi_2$:
    \begin{align*}
        &\mathrm{Var}_{x\sim \mathcal{D}}[x_i] = \pi_1  \sigma_1^2 + \pi_2 \sigma_2^2 + \pi_1 \mu_{1,i}^2 +  \pi_2 \mu_{2,i}^2 - ( \pi_1 \mu_{1,i} + \pi_2 \mu_{2,i} )^2
    \end{align*}
Specifically for the case of $\mu_1 = \mu = - \mu_2$, this equivalent with:
    \begin{align*}
        &\mathrm{Var}[x_i] = 1 + \mu_i^2  - ( \pi_1 \mu_{i} - \pi_2 \mu_{i} )^2
    \end{align*}
Hence:
    \begin{align*}
        &\mathbb{E} [x_i^2] = \mathrm{Var}[x_i] + ( \pi_1 \mu_{i} - \pi_2 \mu_{i} )^2 = 1 + \mu_i^2
    \end{align*}
    
So we can apply the following manipulations in the objective (The notation $[x_i^2]$ stands for the vector that has as entries the values of the vector $\vec{x}$ squared):

    \begin{align*}
    \displaystyle  \max_{ \vec{v}, \vec{w}} \min_{\vtheta, p} \ \ 
         \displaystyle  &\mathbb{E}_{ \vec{y} \sim real} \Big[ D_{\vec{v}} (\vec{y} ) +D_{\vec{w}}( \vec{y} ) \Big]
          -\\ 
          \displaystyle - &\mathbb{E}_{ \xi \sim \mathcal{N}(0,\mI), \zeta \sim \mathcal{N}(0,1)}
            \left[ G_p(\zeta)\cdot
                \Big( D_{\vec{v}}\big( - G_{\vtheta}( \vec{y})\big)+D_{\vec{w}}\big(G_{\vtheta}( \vec{y})\big) \Big)
            + \right. \\     
            & \indent + \big(1-G_p(\zeta)\big)\cdot \Big(D_{\vec{v}}\big(G_{\vtheta}( \vec{y}) \big) +D_{\vec{w}}\big(G_{\vtheta}( \vec{y}) \big) \Big)
            \Big] =     \end{align*}
    \begin{align*}
    \displaystyle  \max_{ \vec{v}, \vec{w}} \min_{\vtheta, p} \ \ 
         \displaystyle  &\mathbb{E}_{ \vec{y} \sim real} \Big[ \langle \vec{v}, \vec{y} \rangle +\langle \vec{w}, [y_i^2] \rangle \Big]
          - \\ - & \mathbb{E}_{ \xi \sim \mathcal{N}(0,\mI), \zeta \sim \mathcal{N}(0,1)}
            \left[ 
                (p + \zeta)\cdot
                \Big( \langle \vec{v}, - (\vec{z} + \vtheta ) \rangle +
                    \langle \vec{w}, [(\theta_i + \xi_i)^2] \rangle  \Big) +
                \right. \\ 
        & \indent + \left. \big(1- (p + \zeta) \big) \cdot \Big( \langle  \vec{v}, - (\vec{z} + \vtheta ) \rangle +
                    \langle \vec{w}, [\big( - (\theta_i + \xi_i) \big)^2] \rangle  \Big) \right]=     \end{align*}           \begin{align*}
        \displaystyle  \max_{ \vec{v}, \vec{w}} \min_{\vtheta, p} \ \ 
         \displaystyle  &\mathbb{E}_{ \vec{y} \sim real} \Big[ \langle \vec{v}, \vec{y} \rangle \Big] + \mathbb{E}_{ \vec{y} \sim real} \Big[\langle \vec{w}, [y_i^2] \rangle \Big]
          - \\ - & \mathbb{E}_{ \xi \sim \mathcal{N}(0,\mI), \zeta \sim \mathcal{N}(0,1)}
            \left[ 
                (p + \zeta)\cdot
                \Big( \langle \vec{v}, \vec{z} + \vtheta \rangle +
                    \langle \vec{w}, [(\theta_i + \xi_i)^2] \rangle  \Big) 
                + \right. \\ 
            & \indent + \left. \big(1- (p + \zeta) \big) \cdot \Big( \langle \vec{v}, -(\vec{z} + \vtheta ) \rangle +
                    \langle \vec{w}, [(\theta_i + \xi_i)^2] \rangle  \Big) \right] 
                        =\end{align*}
    \begin{align*}
    \displaystyle  \max_{ \vec{v}, \vec{w}} \min_{\vtheta, p} \ \ 
         \displaystyle  & \Big \langle \vec{v}, \mathbb{E}_{ \vec{y} \sim real} \Big[ \vec{y} \Big] \Big \rangle + \Big\langle  \vec{w}, \mathbb{E}_{ \vec{y} \sim real} \Big[ [y_i^2] \Big] \Big\rangle
          - \\ -& 
                p \cdot
                \Big( \Big\langle \vec{v}, 
                        \mathbb{E}_{ \xi \sim \mathcal{N}(0,\mI)}\Big[ (\vec{z} + \vtheta ) \Big] \Big\rangle +
                    \Big\langle \vec{w}, \mathbb{E}_{ \xi \sim \mathcal{N}(0,\mI)} \Big[ [(\theta_i + \xi_i)^2] \Big] \Big\rangle \Big) - \\
                - & \big(1- p \big) \Big( \cdot \Big\langle \vec{v},  \mathbb{E}_{ \xi \sim \mathcal{N}(0,\mI)}  \Big[  - (\vec{z} + \vtheta ) \Big] \Big\rangle +
                    \Big\langle \vec{w},  \mathbb{E}_{ \xi \sim \mathcal{N}(0,\mI)} \Big[ [(\theta_i + \xi_i)^2] \Big] \Big \rangle   \Big)= 
            \end{align*}
    \begin{align*}
    \displaystyle  \max_{ \vec{v}, \vec{w}} \min_{\vtheta, p} \ \ 
         \displaystyle  & \Big \langle \vec{v}, \Big( \pi_1 \vmu + \pi_2 (- \vmu) \Big) \Big\rangle + \Big\langle  \vec{w}, [1 + \mu_i^2] \Big\rangle
          - \\ -& 
                p \cdot
                \Big( \Big\langle \vec{v}, 
                      \vtheta \Big\rangle +
                    \Big\langle \vec{w}, [(\theta_i^2 + 1)] \Big] \Big\rangle \Big) 
                - \big(1- p \big) \cdot  \Big(  \Big\langle \vec{v},  - \vtheta \Big\rangle +
                    \Big\langle \vec{w},   [(\theta_i^2 + 1)] \Big \rangle   =
    \end{align*}
    \begin{align*}
            \displaystyle  &\max_{ \vec{v}, \vec{w}} \min_{\vtheta, p} \ \ 
    (\pi_1 - \pi_2) \vec{v}^T \vmu - 2p \vec{v}^T \vtheta + \vec{v}^T \vtheta + \sum_i w_i( \mu_i^2- \theta_i^2) 
    \end{align*}

}
\section{Proofs for \cref{sec:main}}
\subsection{Proof of \Cref{thm:hardness}}
We will reduce the problem of finding a Nash equilibrium in congestion games to the problem of finding a Nash equilibrium in {two-team} zero-sum games. The result then will follow since computing Nash equilibria in congestion games is $\CLS$-hard \citep{Aviad21}. 

{
As a recent result \citep{fearnley2021complexity} shows $\CLS$ is equal to the intersection of $\PLS$ and $\PPAD$, two important classes of total problems. $\PPAD$ captures diverse problems in combinatorics and (non-)cooperative game theory, like the $\varepsilon$-approximation of a mixed Nash Equilibrium in a graphical game or the computation of market equilibria. $\PLS$, for “Polynomial Local Search”, captures problems of finding a local minimum of an objective function f, in contexts where any candidate solution x has a local neighborhood within which we can readily check for the existence of some other point having a lower value of f. Many diverse local optimization problems have been shown complete for $\PLS$, attesting to its importance. Examples include searching for a local optimum of the TSP according to the Lin-Kernighan heuristic \citep{papadimitriou1992complexity} and finding pure Nash equilibria in many-player congestion games \citep{fabrikant2004complexity}. The complexity class $\CLS$ (“Continuous Local Search”) was introduced by Daskalakis and Papadimitriou \citep{daskalakis2011continuous} to classify various important problems that lie in both $\PPAD$ and $\PLS$. $\CLS$ is seen as a strong candidate for capturing the complexity of some of those important problems, like the general versions of Banach’s fixed point theorem, computation of KKT points, computation of gradient descent fixed points, etc.
}

In our reduction, a  \emph{congestion game}  is defined by the tuple $(N; E; (S_i)_{i \in N}; (c_e)_{e \in E})$ where $N$ is the set of \emph{agents},  $E$
is a set of \emph{resources} (also known as \emph{edges} or \emph{facilities}), and
each player $i$ has a set $S_i$ of subsets of $E$. Each strategy $s_i \in S_i$
is a set of edges (a \emph{path}), and $c_e$ is a cost (negative utility)
function associated with facility $e$. For a  strategy profile $\vec{s} = (s_1,s_2,\dots,s_N)$, the cost of player $i$ is given by $c_i(\vec{s}) = \sum_{e \in s_i} c_e(\ell_e(\vec{s}))$, where $\ell_e(\vec{s})$ is the
number of players using $e$ in $\vec{s}$ (the load of edge $e$). It is a well-known result \citep{rosenthal73} that congestion games exhibit a potential function $\Phi(\vec{s})$, that
\[
\Phi(\vec{s}) = \sum_{e\in E} \sum_{j=1}^{\ell_e(\vec{s})} c_e(j)
\]
with the property that if any agent $i$ changes her strategy to $s_i'$ it holds that
\[
\Phi(s_{i}',\vec{s}_{-i}) - \Phi(s_i, \vec{s}_{-i}) = c_i(s'_i,\vec{s}_{-i}) - c_i(s_i,\vec{s}_{-i}).
\]

\paragraph{{Succinct representation.}}
{
We assume that the congestion game and the team game are \emph{succinctly representable}~\citep{daskalakis2006game} and admits the \emph{polynomial expectation property}~\citep{daskalakis2006game}. That is, the expected cost of each player $i$, $c_i(\cdot)$, for any given mixed strategy profile $(s_1, \dots, s_n)$ is computed in time polynomial in:
\begin{inparaenum}[(i)]
 \item the number agents $n$,
 \item $\sum_i|S_i|$ where $S_i$ is the finite set of agent $i$'s strategies,
 \item the number of bits required to represent the mixed strategy profile $(s_1, \dots, s_n) $. 
\end{inparaenum}

}
 
\paragraph{Reduction.} Consider a congestion game $(N; E; (S_i)_{i \in N}; (c_e)_{e \in E})$ with $n = |N|$ players and potential function $\Phi.$ We define a team zero-sum game as follows:
Team $A$ has $n$ players, in which each agent $i$ chooses strategies from $S_i$. Team $B$ has $n$ players, with each agent $j$  having only one possible choice (singleton set of actions) call it $d$, i.e., these are dummy players. If players from Team $A$ choose strategy profile $\vec{s}$ (Team B has only one choice) then they get utility $u_{A}(\vec{s},d) = -\Phi(s)$. The utility members of Team B get is $u_{B}(\vec{s},d) = \Phi(s).$ 

Let $\vec{x}^*\defeq (x^*_{1},...,x^*_{n})$ and $(d,...,d)$ be a (possibly mixed) Nash equilibrium in the team zero-sum game we defined. We shall show that $(x^*_{1},...,x^*_{n})$ is a Nash equilibrium of the original congestion game and the reduction will be complete. Aiming for contradiction, suppose $(x^*_{1},...,x^*_{n})$ is not a Nash equilibrium of the original congestion game. Then there exists an agent $i$ that can deviate from strategy $x_i^*$ to $\tilde{x}_i$ and decrease her expected cost. Hence we have that
\begin{align*}
0&< \mathbb{E}_{\vec{s} \sim \vec{x}^*}[c_{i}(\vec{s})] - \mathbb{E}_{\vec{s} \sim (\tilde{x}^*_i,\vec{x}^*_{-i})}[c_{i}(\vec{s})]\\& = \mathbb{E}_{\vec{s} \sim \vec{x}^*}[\Phi(\vec{s})] - \mathbb{E}_{\vec{s} \sim (\tilde{x}^*_i,\vec{x}^*_{-i})}[\Phi(\vec{s})] \;\;\textrm{   (Property of potential)}.
\end{align*}
We conclude that $\mathbb{E}_{\vec{s}\sim \vec{x}^*}[u_{A} (\vec{s},d)] = -\mathbb{E}_{\vec{s}\sim \vec{x}^*}[\Phi(\vec{s})] < - \mathbb{E}_{\vec{s} \sim (\tilde{x}^*_i,\vec{x}^*_{-i})}[\Phi(\vec{s})] = \mathbb{E}_{\vec{s} \sim (\tilde{x}^*_i,\vec{x}^*_{-i})}[u_{A}(\vec{s},d)]$ which is a contradiction since $(x^*_1,...,x^*_n)$ is a Nash equilibrium for the team zero-sum game hence if player $i$ deviates, her payoff (i.e., the payoff of her Team) should not increase.

\subsection{Multiplayer Matching Pennies} \label{sec:tensorMMP}
Below, we present the exact definition of the (2x2) two-team of two-players matching pennies that we discuss in \Cref{sec:fom-fail}
    \begin{table}[h]
    \setlength{\extrarowheight}{4pt}
    \centering
    \begin{tabular}{cc|c|c|c|}
      & \multicolumn{1}{c}{} & \multicolumn{3}{c}{Team $B$}\\
      & \multicolumn{1}{c}{} & \multicolumn{1}{c}{$HH$}  & \multicolumn{1}{c}{$HT/TH$} & \multicolumn{1}{c}{$TT$} \\\cline{3-5}
      \multirow{3}*{Team $A$}  & $HH$ & $-1,1$ & $-1/2,1/2$ & $1,-1$ \\\cline{3-5}
      & $HT/TH$ & $1/2,-1/2$ & $0,0$ & $1/2,-1/2$ \\\cline{3-5}
      & $TT$ & $1,-1$ & $-1/2,1/2$ & $-1,1$ \\\cline{3-5}
    \end{tabular}
  \end{table}

\subsection{Proof of \Cref{thm:unstable}}
Since $(\vec{x}^*,\vec{y}^*)$ is not weakly-stable, there exist players $i,j$ from the same team (say $B$ without loss of generality) and strategies $k,l,l'$ so that if $i$ is forced to play $k$, then $\mat{J}$ 's best response is $l$ and that gives larger payoff than another strategy $l'$ in her support. Formally it holds that (by multi-linearity of $U$) 
\begin{equation}
\label{eq:refer}
(\textrm{payoff if }i,j \textrm{ choose }k,l) \;\; \frac{\partial ^2 U(\vec{x}^*,\vec{y}^*)}{\partial y_{ik} \partial y_{jl}} > \frac{\partial ^2 U(\vec{x}^*,\vec{y}^*)}{\partial y_{ik} \partial y_{jl'}}\;\; (\textrm{payoff if }i,j \textrm{ choose }k,l'), 
\end{equation}
and also $\frac{\partial U(\vec{x}^*,\vec{y}^*)}{\partial y_{jl}} = \frac{\partial U(\vec{x}^*,\vec{y}^*)}{\partial y_{jl'}}$ (*).
We shall show that $\nabla^2_{\vec{y}\vec{y}} U(\vec{x}^*,\vec{y}^*)$ has a strictly positive eigenvalue tangent in the product of simplices (note that if we were working with $A,$ we would show that $\nabla^2_{\vec{x}\vec{x}} U(\vec{x}^*,\vec{y}^*)$ has a strictly negative eigenvalue). Consider a vector of size $\sum_i |S_i|,$ (where $|S_i|$ is the cardinality of the strategy space of agent $i$ in Team $B$) which has $1$ at coordinates $(i,k), (j,l),$ $-1$ at coordinate $(j,l')$ and from which we subtract $y_i^*$; we denote by $\vec{v}$ the resulting vector. We shall show that $\vec{v}^{\top} \nabla^2_{\vec{x}\vec{x}} U(\vec{x}^*,\vec{y}^*) \vec{v} <0.$

By multilinearity of $U$ it follows that $\frac{\partial^2 U}{\partial y_{js}\partial y_{js'}} =0$ (**) for all agents $\mat{J}$  and strategies $s,s'$ and the same is true for $x$ variables (team A). We conclude that 
\begin{align*}
\frac{1}{2}\vec{v}^{\top} \nabla^2_{\vec{y}\vec{y}} U(\vec{x}^*,\vec{y}^*) \vec{v} &= \frac{\partial ^2U(\vec{x}^*,\vec{y}^*)}{\partial y_{ik} \partial y_{jl}} - \frac{\partial ^2 U(\vec{x}^*,\vec{y}^*)}{\partial y_{ik} \partial y_{jl'}} - \sum_{s\in S_i} y^*_{is}\left(\frac{\partial^2 U(\vec{x}^*,\vec{y}^*)}{\partial y_{is}\partial y_{jl}}-\frac{\partial^2 U(\vec{x}^*,\vec{y}^*)}{\partial y_{is}\partial y_{jl'}}\right)\\& = \frac{\partial ^2U(\vec{x}^*,\vec{y}^*)}{\partial y_{ik} \partial y_{jl}} - \frac{\partial ^2 U(\vec{x}^*,\vec{y}^*)}{\partial y_{ik} \partial y_{jl'}} - \left( \frac{\partial U(\vec{x}^*,\vec{y}^*)}{\partial y_{jl}}-\frac{\partial U(\vec{x}^*,\vec{y}^*)}{\partial y_{jl'}}\right) \\&\stackrel{(*)}{=}\frac{\partial ^2U(\vec{x}^*,\vec{y}^*)}{\partial y_{ik} \partial y_{jl}} - \frac{\partial ^2 U(\vec{x}^*,\vec{y}^*)}{\partial y_{ik} \partial y_{jl'}} \stackrel{\Cref{eq:refer}}{>}0.
\end{align*}
Therefore $\nabla^2_{\vec{y}\vec{y}} U(\vec{x}^*,\vec{y}^*)$ has a positive eigenvalue and as a result
\begin{equation}\label{eq:before}
\mat{R}:=\left(\begin{array}{cc}
- \nabla^2_{\vec{x}\vec{x}} U(\vec{x}^*,\vec{y}^*) & \vec{0} 
\\ \vec{0} &   \nabla^2_{\vec{y}\vec{y}} U(\vec{x}^*,\vec{y}^*)
\end{array}\right)
\end{equation}
must have a positive and a negative eigenvalue (since the trace is zero).

We consider the Jacobian of the GDA dynamics at $(\vec{x}^*,\vec{y}^*)$. The corresponding matrix is the following:
\begin{equation}\label{eq:Jh}
\jgda = \mat{I}+\eta \left(\begin{array}{cc}
- \nabla^2_{\vec{x}\vec{x}} U(\vec{x}^*,\vec{y}^*) & - \nabla^2_{\vec{x}\vec{y}} U(\vec{x}^*,\vec{y}^*)
\\ \nabla^2_{\vec{y}\vec{x}} U(\vec{x}^*,\vec{y}^*) &   \nabla^2_{\vec{y}\vec{y}} U(\vec{x}^*,\vec{y}^*)
\end{array}\right),
\end{equation}
We will show that $\jgda$ has an eigenvalue (possible complex) with {an} absolute value greater than one. It suffices to show that $\jgda- \mat{I}$ has an eigenvalue with a positive real part (because then $\jgda$ would have an eigenvalue with a real part greater than 1 and hence magnitude greater than one). Due to (**), we get that $\jgda- \mat{I}$ has trace zero. To reach contradiction suppose that no eigenvalue of $\jgda- \mat{I}$ has a positive real part, then all eigenvalues of $\jgda- \mat{I}$ should be imaginary or zero. But an imaginary eigenvalue (that is not zero) also results in an eigenvalue with a magnitude greater than one for $\jgda$, therefore all eigenvalues of $\jgda- \mat{I}$ should be zero. We use Ky Fan inequalities which states that the sequence (in decreasing order) of the eigenvalues of $\frac{1}{2}(\mat{H}+\mat{H}^{\top})$ majorizes the real part of the sequence of the eigenvalues of $\mat{H}$ (see \citep[p. 4]{kyfan}) for any matrix $\mat{H}$.
We choose $\mat{H} = \frac{1}{\eta} \cdot\left( \jgda - \mat{I}\right)$. Since $\mat{R}$ has both a negative and a positive eigenvalue, we get that $\mat{H}$ has an eigenvalue with a negative real part. The claim follows since $\mat{H}$ has a trace equal to zero, thus it must have an eigenvalue with a positive real part as well.  

We conclude that $\jgda$ has an eigenvalue with {an} absolute value greater than one. Using \citep[Theorem 2.2]{daskalakis2018limit}, it occurs that the set of initial conditions so that GDA converge to $(\vec{x}^*,\vec{y}^*)$ is of measure zero (for the particular choice of the {step size}).

\subsection{Proof of \Cref{lemma:onemixed}}
First, we start with the min-max optimization objective of the GMP game with $\omega = \frac{1}{2}$:
\begin{equation*}
\begin{split}
\min_{\vec{x} \in [0,1]^2} &\max_{\vec{y} \in [0,1]^2} -x_1 x_2 y_1y_2 -(1-x_1)(1-x_2)(1-y_1)(1-y_2)  +   x_1x_2(1-y_1)(1-y_2) +\\& +(1-x_1)(1-x_2)y_1y_2 + \omega \left(1-x_1x_2-(1-x_1)(1-x_2)\right)\left(y_1y_2+(1-y_1)(1-y_2)\right)-\\&  -\omega 
\left(1-y_1y_2-(1-y_1)(1-y_2)\right)\left(x_1x_2+(1-x_1)(1-x_2)\right)
\end{split}
\end{equation*}
or after simplification{,} it is equivalent with
\begin{equation}
\label{eq:minmax}
\min_{\vec{x} \in [0,1]^2} \max_{\vec{y} \in [0,1]^2} (\omega+1) (x_1+x_2)+(1-\omega)(y_1+y_2) - (x_1+x_2)(y_1+y_2)-2\omega x_1x_2+2\omega y_1y_2.
\end{equation}
\begin{remark}
We highlight that the min-max optimization objective is \textit{multilinear} and the degree of each variable in every summand is at most one (total degree is $2$). Moreover, we note that due to the nonconvexity-nonconcavity of the function above, the max-min is not equal to the min-max.
\end{remark}

Let $(x_1^*,x_2^*,y_1^*,y^*_2)$ be a Nash equilibrium. Assuming $x_1^*,x_2^*,y_1^*,y_2^* \in (0,1)$ from \Cref{eq:minmax} and first-order conditions we get the system of equations
\begin{enumerate}
\item $\omega+1 - y_1^*-y_2^*-2\omega x^*_2 = 0,$ 
\item $\omega+1 - y_1^*- y_2^*-2\omega x^*_1 = 0,$
\item $1-\omega - x_1^*-x_2^* +2\omega y^*_2 = 0,$
\item $1-\omega - x_1^* - x_2^*+2\omega y^*_1 = 0.$
\end{enumerate}
Combining the first two equations, we have $x_1^* = x_2^*$ and combining the last two it follows $y_1^* = y_2^*.$ Dividing equation one by $\omega$ and subtracting three, it follows that $ \frac{1}{\omega} - 2\frac{y_1^*}{\omega} + \omega - 2\omega y_1^*=0.$ Hence we conclude that $y_1^* = \frac{1}{2}$. As a result $y_2^* = \frac{1}{2}$ and substituting in first equation $x_1^* = x_2^* = \frac{1}{2}.$ 

\begin{itemize}
\item Assume now that $x_1^*=0$ and $x_2^*, y_1^*, y_2^* \in (0,1).$ Following the same idea, now only equations 2, 3, 4 hold and instead of the first we have the constraint $\omega+1 - y_1^*-y_2^*-2\omega x^*_2 \geq 0.$ From 3, 4 we conclude that $y_1^*=y_2^*$ and using 2 it holds that $y_1^* = y_2^* = \frac{1+\omega}{2}.$ Using 3 follows that $1-\omega -x_2^* + \omega(\omega+1)=0.$ Thus $x_2^* = 1 + \omega^2>1$ (this is not possible because $x_2^* \in [0,1]$). 

\item Consider the case that $x_1^*=1$ and $x_2^*, y_1^*, y_2^* \in (0,1).$ Only equations 2, 3, 4 hold and instead of the first we have the constraint $\omega+1 - y_1^*-y_2^*-2\omega x^*_2 \leq 0.$ From 3, 4 we conclude that $y_1^*=y_2^*$ and using 2 it holds that $y_1^* = y_2^* = \frac{1-\omega}{2}.$ Using 3 follows that $-\omega -x_2^* + \omega(1-\omega)=0.$ Thus $x_2^* =  -\omega^2<0$ (this is not possible because $x_2^* \in [0,1]$). By symmetry the same happens when $x_2^*=0$ or $x_2^*=1$ and $x_1^*,y_1^*,y_2^* \in (0,1).$

\item Case $x_1^*=x_2^* =0$ and $y_1^*,y_2^* \in (0,1)$. Using 3, 4 we get $y_1^*=y_2^* = \frac{\omega-1}{2\omega}<0$ (this is not possible). 
\item Case $x_1^*=x_2^* =1$ and $y_1^*,y_2^* \in (0,1)$. Using 3, 4 we get $y_1^*=y_2^* = \frac{\omega+1}{2\omega}>1$ (this is not possible).
\item Case $x_1^*=0$ and $x_2^* =1$ and $y_1^*,y_2^* \in (0,1)$. Using 3, 4 we get $y_1^*=y_2^* = \frac{1}{2}$. Moreover one becomes $\omega-2\omega x^*_2 \geq 0$ and two $\omega-2\omega x^*_1 \leq 0$, that is $x_1^* \geq \frac{1}{2}$ and $x_2^* \leq \frac{1}{2}$ (contradiction). The case $x_1^*=1$ and $x_2^* =0$ and $y_1^*,y_2^* \in (0,1)$ is symmetric. 
\end{itemize}
Similarly one can consider the case where the $\vec{y}$ team plays pure and $x^*_1,x^*_2 \in (0,1).$  One can also check that all possible pure strategy profiles are not Nash equilibria.

\subsection{Proof of \Cref{thm:failure}}
We split the proof into 3 parts. Before we start the proof, note that the Hessian of $U$ \Cref{eq:hessta} has infinity norm less than 4 (since $\omega \in (0,1)$), so $U$ has gradient Lipschitz with $L \leq 4.$ Thus for the rest of the proof for GDA, we choose $\eta_{\textrm{OGDA}} < \frac{1}{4}$. \\\\
\textbf{GDA.} For GDA the proof will be straightforward. 
We will show that $(x_1^*,x_2^*,y_1^*,y^*_2)$ is a weakly Nash equilibrium. Then the claim about GDA will follow because of \Cref{lemma:onemixed}, \Cref{thm:unstable} and Remark \ref{rem:fixedpointsNash}.

Assume that player $x_1$ fixes his strategy to $x_1 =0.$ and $y_1,y_2$ keep their strategy $(\frac{1}{2},\frac{1}{2})$. We shall show that $x_2$ is not indifferent in his support and would like to change his mixed strategy $x_2 = \frac{1}{2}$ to pure. When $x_1=0$ and $y_1 = y_2 =\frac{1}{2}$ the payoff of Team $A$ ($x$ variables) becomes $-\omega x_2 -1 +\frac{\omega}{2}.$ Since $\omega \in (0,1),$ $x_2$ prefers to play $x_2=0$ (instead of $\frac{1}{2}$ she had). We conclude that $(\frac{1}{2},\frac{1}{2},\frac{1}{2},\frac{1}{2})$ is not a weakly-stable Nash equilibrium.\\\\ 
\textbf{OGDA.} The Jacobian of the  update rule of OGDA dynamics (use the same machinery of Section 3 in \citep{daskalakis2018limit}) is the following:\\\\
\begin{equation}\label{eq:Jg}
\mat{J}_{\mathrm{OGDA}} = \left(\begin{array}{cccc}
\mat{I} - 2\eta_{\mathrm{OGDA}}\nabla^2_{\vec{x}\vec{x}} U & - 2\eta_{\mathrm{OGDA}}\nabla^2_{\vec{x}\vec{y}} U & \eta_{\mathrm{OGDA}}\nabla^2_{\vec{x}\vec{x}} U & \eta_{\mathrm{OGDA}}\nabla^2_{\vec{x}\vec{y}} U
\\ 2\eta_{\mathrm{OGDA}}\nabla^2_{\vec{y}\vec{x}} U & \mat{I} + 2\eta_{\mathrm{OGDA}}\nabla^2_{\vec{y}\vec{y}} U & -\eta_{\mathrm{OGDA}}\nabla^2_{\vec{y}\vec{x}} U & -\eta_{\mathrm{OGDA}}\nabla^2_{\vec{y}\vec{y}} U
\\ \mat{I} & \vec{0} & \vec{0}& \vec{0}
\\ \vec{0} & \mat{I} & \vec{0} & \vec{0}
\end{array}\right),
\end{equation}
where $U$ is the payoff of Team $B$ (max) and $-U$ is the payoff of team $A$. Substituting for MPG payoff at Nash equilibrium, we have
\[
\nabla^2_{\vec{x}\vec{x}} U = \left(\begin{array}{cc}
0 & -2\omega\\
-2\omega & 0
\end{array}\right), \nabla^2_{\vec{y}\vec{y}} U = \left(\begin{array}{cc}
0 & 2\omega\\
2\omega& 0
\end{array}\right), \nabla^2_{\vec{x}\vec{y}} U = \left(\begin{array}{cc}
-1 & -1\\
-1 & -1
\end{array}\right). 
\] 
The corresponding Jacobian matrix becomes:
\begin{equation}\label{eq:Jall}
\mat{J}_{\mathrm{OGDA}} = \left(\begin{array}{cccc}
\mat{I} & \vec{0} & \vec{0}& \vec{0}
\\ \vec{0} & \mat{I} & \vec{0} & \vec{0}
\\ \mat{I} & \vec{0} & \vec{0}& \vec{0}
\\ \vec{0} & \mat{I} & \vec{0} & \vec{0}
\end{array}\right)+\eta_{\mathrm{OGDA}}\left(\begin{array}{cccccccc}
0 &4\omega &2 &2&0 &-2\omega&-1&-1\\
4\omega &0 &2 &2&-2\omega &0&-1&-1\\
-2 &-2 &0 &4\omega &1 &1 &0 &-2\omega\\
-2 &-2 &4\omega &0 &1 &1 &-2\omega &0\\
0&0&0&0&0&0&0&0\\
0&0&0&0&0&0&0&0\\
0&0&0&0&0&0&0&0\\
0&0&0&0&0&0&0&0
\end{array}\right),
\end{equation}

It turns out that matrix \Cref{eq:Jall} has a characteristic polynomial (computed with the help of Mathematica) 
\begin{equation}\label{eq:polynomial}
\begin{split}
\pi(\lambda) =  (4 (1 + \omega^2) \eta_{\mathrm{OGDA}}^2 (1 - 2 \lambda)^2 + (\lambda-1)^2 \lambda^2 - 4 \omega \eta_{\mathrm{OGDA}} \lambda (1 - 3 \lambda + 2 \lambda^2))\\ \cdot 
((\lambda -1 ) \lambda + 
   2 \omega \eta_{\mathrm{OGDA}} (2\lambda-1))^2.
   \end{split}
   \end{equation}

A root of $\pi(\lambda)$ is $\frac{1}{2} \left(1 + 4 \eta_{\mathrm{OGDA}} j +4\omega  \eta_{\mathrm{OGDA}} +\sqrt{1-16\eta^2_{\mathrm{OGDA}}+32j\omega \eta^2_{\mathrm{OGDA}}+16\omega^2\eta^2_{\mathrm{OGDA}}}\right) ,$
which is in absolute value greater than one for $0<\eta_{\mathrm{OGDA}} \leq \omega$.
It is also easy to see the Hessian of $U$, that is:
\begin{equation}\label{eq:hessta}
\left(\begin{array}{cccc}
\nabla^2_{\vec{x}\vec{x}} U &\nabla^2_{\vec{x}\vec{y}} U\\
\nabla^2_{\vec{y}\vec{x}} U & \nabla^2_{\vec{y}\vec{y}} U
\end{array}\right)
=
\left(\begin{array}{cccc}
0 &-2\omega &-1 &-1\\
-2\omega &0 &-1 &-1\\
-1 &-1 &0 &2\omega \\
-1 &-1 &2\omega &0 
\end{array}\right)
\end{equation}
is invertible for $\omega\in (0,1)$ (this is true as the eigenvalues are $-2\omega,2\omega,-2\sqrt{1+\omega^2},2\sqrt{1+\omega^2},$ non of which is zero). From Theorem 3.2 in \citep{daskalakis2018limit} follows that the initial conditions so that OGDA converges to the Nash equilibrium is of measure zero for step size $\eta_{\mathrm{OGDA}} <\frac{1}{2L} \leq \frac{1}{8}$ (where $L$ is the Lipschitz constant of $\nabla U$). We choose $\eta_{\mathrm{OGDA}} \leq \min (\frac{1}{8},\omega)$ and the claim follows.\\\\
\textbf{EG.} The Jacobian of EG dynamics computed at the Nash equilibrium (fixed point) is given below (made use of chain rule):
\begin{align}\label{eq:JEG}
\jeg = \mat{I} &+\eta_{\textrm{EG}} \left(\begin{array}{cc}
- \nabla^2_{\vec{x}\vec{x}} U(\vec{x}^*,\vec{y}^*) & - \nabla^2_{\vec{x}\vec{y}} U(\vec{x}^*,\vec{y}^*) \nonumber
\\ \nabla^2_{\vec{y}\vec{x}} U(\vec{x}^*,\vec{y}^*) &   \nabla^2_{\vec{y}\vec{y}} U(\vec{x}^*,\vec{y}^*)
\end{array}\right) \\&+ \eta_{\textrm{EG}}^2\left(\begin{array}{cc}
- \nabla^2_{\vec{x}\vec{x}} U(\vec{x}^*,\vec{y}^*) & - \nabla^2_{\vec{x}\vec{y}} U(\vec{x}^*,\vec{y}^*)
\\ \nabla^2_{\vec{y}\vec{x}} U(\vec{x}^*,\vec{y}^*) &   \nabla^2_{\vec{y}\vec{y}} U(\vec{x}^*,\vec{y}^*)
\end{array}\right)^2.
\end{align}
We substitute with the values and we get
\begin{align}
\jeg = \mat{I} &+\eta_{\textrm{EG}} \left(\begin{array}{cccc}
0&2\omega & 1&1
\\ 2\omega&0&1&1\\
-1&-1&0&2\omega\\
-1& -1& 2\omega&0
\end{array}\right) \nonumber
\\&+ \eta_{\textrm{EG}}^2\left(\begin{array}{cccc}
4\omega^2-2 & -2&4\omega&4\omega\\
-2& 4\omega^2-2&4\omega&4\omega\\
-4\omega&-4\omega&-2+4\omega^2&-2\\
-4\omega&-4\omega&-2&-2+4\omega^2
\end{array}\right). 
\end{align}
The eigenvalues of $\jeg$ are $ 1+\eta_{\textrm{EG}}(-2\omega + 4 \eta_{\textrm{EG}} \omega^2),$ $1+\eta_{\textrm{EG}}(-2\omega + 4 \eta_{\textrm{EG}} \omega^2),$\\ $1+\eta_{\textrm{EG}} \left(-4 \eta_{\textrm{EG}} + 2\omega + 4 \eta_{\textrm{EG}} \omega^2 - 
  2\sqrt{-1 - 8 \eta_{\textrm{EG}} \omega - 16 \eta_{\textrm{EG}}^2 \omega^2}\right),$ and \\$ 1+\eta_{\textrm{EG}}  \left(-4 \eta_{\textrm{EG}} + 2\omega + 4 \eta_{\textrm{EG}} \omega^2 + 
  2\sqrt{-1 - 8 \eta_{\textrm{EG}} \omega - 16 \eta_{\textrm{EG}}^2 \omega^2}\right).$
If $-4\eta_{\textrm{EG}} + 2\omega + 4 \eta_{\textrm{EG}} \omega^2>0$ then $\jeg$ will have an eigenvalue with absolute value greater than one. A sufficient condition for that is $\eta_{\textrm{EG}} \leq \frac{\omega}{2}.$ Finally, for the same choice of stepsizes we have that $\jeg$ is invertible for all $(\vec{x},\vec{y})$, hence the EG dynamics is a local diffeomorphism. From Theorem 2 in arxiv version of \citep{LeePPSJR19} the claim for EG method follows.

{
\textbf{OMWU.} The Jacobian of the  update rule of OMWU dynamics (use the same machinery of Section 3 in \citep{daskalakis2019last}) is the following:\\\\
\begin{equation}\label{eq:Jomw}
\jomwu = \left(\begin{array}{cccccccc}
1 & \eta \omega &  \eta & \eta & 0& -\frac{\eta \omega}{2}  &-\frac{\eta }{2} &-\frac{\eta}{2}
\\ \eta\omega & 1 & \eta & \eta &-\frac{\eta \omega}{2}&0&-\frac{\eta}{2}& -\frac{\eta}{2}
\\ -\eta & -\eta & 1&  \eta\omega & \frac{\eta}{2} & \frac{\eta}{2} &0 &-\frac{\eta \omega}{2}
\\ -\eta & -\eta &\eta \omega&1&\frac{\eta}{2}&\frac{\eta}{2}&-\frac{\eta\omega}{2}&0
\\ 1 & 0&0&0&0&0&0&0
\\ 0 & 1&0&0&0&0&0&0
\\0 & 0&1&0&0&0&0&0
\\0 & 0&0&1&0&0&0&0
\end{array}\right).
\end{equation}
It turns out that matrix \Cref{eq:Jomw} has a characteristic polynomial (retrieved with the help of Mathematica) 
\begin{equation}
\begin{split}
\pi(\lambda) = \frac{1}{16} \left((4 + \omega^2) \eta^2 (1 - 2 \lambda)^2 + 
   4 (\lambda-1)^2 \lambda^2 - 4 \omega \eta \lambda (1 - 3 \lambda + 2 \lambda^2)\right)\cdot\\\left(2 (\lambda-1) \lambda + \omega \eta ( 2 \lambda-1)\right)^2.
   \end{split}
   \end{equation}
One root of the polynomial $\pi(\lambda)$ above is 
$\frac{1}{2} (1 + 2\eta j  + \eta\omega  + \sqrt{1-4\eta^2+4\eta^2\omega j+\eta^2\omega^2}).$ As in the analysis of OGDA, it turns out that for $0<\eta < \omega$ the aforementioned root has an absolute value greater than one. 

Therefore we conclude that the Nash equilibrium where each agent plays $(\frac{1}{2}, \frac{1}{2})$ is repelling (and thus the fact that the set of initial conditions so that OMWU converges to that particular fixed point is of measure zero is derived from standard arguments from \citep{LeePPSJR19}. To conclude the proof we need to exclude that OMWU will stabilize in other fixed points. Note that OMWU can stabilize to points that are not Nash equilibria (e.g., all pure strategy profiles are fixed points). To exclude such stabilization, the trick is that if OMWU stabilizes to a point, it should be (approximate coarse correlated equilibrium where the approximation depends on the size of the stepsize). However, none of the pure strategy profiles is a $\omega/2$- approximate coarse correlated equilibrium, so OMWU does not stabilize. Hence if we choose $0<\eta < \omega/2,$ OMWU does not stabilize. 
}

\subsection{Proof of \Cref{thm:success}}

    As previously, let a zero-sum game with payoff function $U(\vec{x}, \vec{y})$ for the maximizing team (it can either be a single player or more players in each team). The maximizing team controls $\vec{y}$ while the minimizing team controls $\vec{x}$.
    
    It is easy to retrieve a black-box argument for the proof of \Cref{thm:success} using the proof \citep{hassouneh2004washout}. For the sake of completeness, we reprove the theorem from the bottom up in our own vocabulary. \citet{hassouneh2004washout} extend the proof \citep{bazanella2000} to the discrete-time case without using the singular-perturbation analysis that the latter use.

    Let $\mat{A} = \jgda = \mat{I} + \eta \left(\begin{array}{cc} -\nabla^2_{\vec{x}\vec{x}}U(\vec{x}^*,\vec{y}^*) &-\nabla^2_{\vec{y}\vec{x}}U(\vec{x}^*,\vec{y}^*)\\\nabla^2_{\vec{x}\vec{y}}U(\vec{x}^*,\vec{y}^*) &\nabla^2_{\vec{y}\vec{y}}U(\vec{x}^*,\vec{y}^*)  \end{array}\right)$ be the Jacobian of GDA around a Nash equilibrium. and $\vec{z}$ be the concatenation of $\vec{x}$ and $\vec{y}$. We can locally represent the dynamic of GDA as a linear time-invariant dynamical system:
    \begin{align}
        \vec{z}^{(t+1)} &= \mat{A} \vec{z}\step{t} + \mat{B} \vec{u}\step{t} \\
        \vec{u}^{(t+1)} &= \vec{0}
    \end{align}
    
    Here, the feedback law, $\vec{u}$, is trivially zero, but we note it for the sake of clarity.
    
    By assumption, $\mat{A} = \mat{J}_{\mathrm{GDA}} $ is invertible, hence full-rank. Also, $\mat{B}$ has to be the identity matrix, $\mat{B}=\mat{I}$, as we want every player to be able to make decisions with access to their own estimation vector.
    
    
    From a known theorem (the PBH controllability test) the pair of matrices $(\mat{A},\mat{B})$, with $\mat{A} \in \R^{n\times n}$ is controllable if and only if, for every eigenvalue $\lambda$ of $\mat{A}s$, the rank of the block matrix $[\mat{A} - \lambda \mat{I}, \mat{B}]$  is equal to $n$. As we know, stabilizability implies controllability. Hence, the pair of matrices $(\mat{A},\mat{B}) = (\mat{J}_{\mathrm{GDA}}, \mat{I})$ is stabilizable.

    We can now write down the dynamic of \Cref{eq:KPV} localized around a given (possibly Lyapunov unstable) Nash equilibrium as the following linear time-invariant system:
    
    \begin{gather}
    \left\{
    \begin{array}{lcl}
        \vec{z}^{(t+1)} &=& \underbrace{\jgda}_{\mat{A} } \vec{z}\step{t} + \underbrace{\mat{I} \cdot \mat{K}}_{B\cdot \mat{K}} \vec{u}\step{t} \\
        \boldsymbol{\theta}^{(t+1)} &=& \boldsymbol{\theta}\step{t} + \mat{P} \big( \vec{z}\step{t} - \boldsymbol{\theta}\step{t}
        \big)
        \\
        \vec{u}^{(t+1)} &=& \vec{z}\step{t} - \boldsymbol{\theta}\step{t}
    \end{array}
    \right.
    .
    \end{gather}

    This dynamic can be represented in a block matrix fashion as:
    \begin{equation}
       \mat{M} := \left(\begin{array}{cc}
         \jgda + \mat{K} & - \mat{K}  \\
         \mat{P} & \mat{I} - \mat{P}
        \end{array}
        \right).
    \end{equation}

    Knowing that eigenvalues are preserved under similarity transformations, we define two similarity transformations $\mat{T}_1, \mat{T}_2$ which we will apply consecutively on $\mat{M}$.
    
    Let $\mat{Q}$ be any matrix. We define:
    \begin{gather}
        \mat{T}_1 = 
        \begin{pmatrix}
            \mat{I} & \mat{0} \\
            \mat{0} & \mat{P}^{-1}
        \end{pmatrix}, 
        \quad
        \mat{T}_1^{-1} = 
        \begin{pmatrix}
            \mat{I} & \mat{0} \\
            \mat{0} & \mat{P}
        \end{pmatrix}
    \end{gather}
    
    and:
    \begin{gather}
        \mat{T}_2 = 
        \begin{pmatrix}
            \mat{I} & \mat{Q} \\
            \mat{0} & \mat{I}
        \end{pmatrix}, 
        \quad
        \mat{T}_2^{-1} = 
        \begin{pmatrix}
            \mat{I} & -\mat{Q} \\
            \mat{0} & \mat{I}
        \end{pmatrix}
    \end{gather}
    
    By consecutively applying the latter transformations we get:
    \begin{align}
        \mat{M}' =& \mat{T}_2 \mat{T}_1 \mat{M} \mat{T}_1^{-1}\mat{T}_2^{-1} = \nonumber \\
        =&
        \left(
        \begin{array}{c|c}
            \jgda + \mat{K} + \mat{Q} & -\jgda \mat{Q} - \mat{K}\mat{Q} - \mat{Q}^2 - \mat{K}\mat{P} + \mat{Q} - \mat{Q}\mat{P} \\\hline
             \mat{I} & -\mat{Q} + \mat{I} -\mat{P}
        \end{array}
        \right)
    \end{align}
  
    We now want to eliminate the top-left entry of $\mat{M}'$ in order to have an easy application of the Schur complement equations. We are free to do so if we can find an appropriate matrix $\mat{Q}$. Remember that $\mat{Q}$ can be any matrix.
      
    First, we write matrices $\mat{P}$ and $\mat{Q}$ as (non-singular) perturbations of some appropriate matrices for $\epsilon>0$ sufficiently small: 
    \begin{align}
        \mat{P} &= \epsilon \mat{P}_1 \\
        \mat{Q} &= \mat{Q}_0 +\epsilon \mat{Q}_1 + O (\epsilon^2)
    \end{align}

    Setting the top-right entry of $\mat{M}'$ equal to $\mat{0}$:
    \begin{align}
        -\jgda \mat{Q} - \mat{K}\mat{Q} - \mat{Q}^2 - \mat{K}\mat{P} + \mat{Q} - \mat{Q}\mat{P} &= 0        \label{eq:settozero}
 \Rightarrow \\
        -\jgda \Big( \mat{Q}_0 +\epsilon \mat{Q}_1 + O (\epsilon^2) \Big) - \mat{K} \Big( \mat{Q}_0 +\epsilon \mat{Q}_1 + O (\epsilon^2) \Big) - \Big( \mat{Q}_0 +\epsilon \mat{Q}_1 + O (\epsilon^2) \Big)^2 - &\nonumber \\ - \mat{K}\epsilon \mat{P}_1 + \Big( \mat{Q}_0 +\epsilon \mat{Q}_1 + O (\epsilon^2) \Big) - \Big( \mat{Q}_0 +\epsilon \mat{Q}_1 + O (\epsilon^2) \Big)\epsilon \mat{P}_1 &= 0
    \end{align}

    Gathering terms w.r.t. to the power of $\epsilon$ we get, for $O(1)$ terms:
    \begin{gather}
        (\jgda + \mat{K} - \mat{I} + \mat{Q}_0)\mat{Q}_0 = \mat{0}.
    \label{eq:oone}
    \end{gather}
    
    And for $O(\epsilon)$ terms we get:
    \begin{gather}
        \mat{Q}_1( \jgda + \mat{K} - \mat{I} ) + (\jgda - \mat{I} ) \mat{P}_1 = \mat{0}.
    \label{eq:oeps}\end{gather}.

    From \eqref{eq:oone} we get that 
    
    \begin{equation}
        \mat{Q}_0 = \mat{0}  \quad \text{or} \quad \mat{Q}_0 = - \jgda - \mat{K} + \mat{I}.
    \end{equation}
    
    From \eqref{eq:oeps}, by observing that we can always select a $\mat{K}$ such that $1$ is not an eigenvalue of $\jgda + \mat{K}$ we get that:
    
    \begin{equation}
        \mat{Q}_1 = - (\jgda - \mat{I})\mat{P}_1 ( \jgda + \mat{K} - \mat{I} ) ^{-1}.
    \end{equation}

    We observe that $\mat{Q}_1$ is uniquely defined through a matrix inversion. Hence, from the Implicit Function Theorem applied to \eqref{eq:settozero} we get that there exists a locally unique matrix $\mat{Q}$:
    \begin{equation}
        \mat{Q} = - \jgda - \mat{K} + \mat{I} - \epsilon ( \jgda - \mat{I} ) \mat{P}_1 ( \jgda + \mat{K} - \mat{I})^{-1} + O(\epsilon^2)
    \end{equation}

    We return to matrix $\mat{M}'$ and substitute $\mat{Q}$ with what we calculated in order to get:
    
    \begin{gather}
        \mat{M}' = \left( \begin{array}{c|c}
            \mat{I} - \epsilon ( \jgda - \mat{I} ) \mat{P}_1 ( \jgda + \mat{K} - \mat{I})^{-1} + O(\epsilon^2) & \mat{0}  \\ \hline
            \mat{I} & \mat{\Lambda}
        \end{array}
        \right)
    \end{gather}
    
    With $\mat{\Lambda} =  \jgda + \mat{K} + \epsilon\Big( (\jgda - \mat{I} )\mat{P}_1(\jgda + \mat{K} - \mat{I})^{-1} - \mat{P}_1  \Big) + O(\epsilon^2)$.
    
    We have assumed 
    $\eta \left( \begin{array}{cc}
-\nabla^{2}_{\vec{x}\vec{x}}U(\vec{x}^*,\vec{y}^*) & -\nabla^{ 2}_{\vec{x}\vec{y}}U(\vec{x}^*,\vec{y}^*)\\
\nabla^{ 2}_{\vec{y}\vec{x}}U(\vec{x}^*,\vec{y}^*) &\nabla^{ 2}_{\vec{y}\vec{y}}U(\vec{x}^*,\vec{y}^*)
\end{array}\right) = \jgda - \mat{I} $ to be invertible, hence in order for $\mat{M}'$ to be stable it suffices that:
\begin{itemize}
    \item $\mat{I} - \epsilon(\jgda - \mat{I}) \mat{P}_1 (\jgda + \mat{K} - \mat{I}) ^{-1} + O(\epsilon^2)$ is stable
    \item $\jgda + \mat{K}$ is stable with eigenvalues away from $1$,\\ so that term $ \epsilon\Big( (\jgda - \mat{I} )\mat{P}_1(\jgda + \mat{K} - \mat{I})^{-1} - \mat{P}_1  \Big) $ does not cause the eigenvalues of $\jgda + \mat{K}$ to get a a magnitude greater than $1$ (not to become unstable).
\end{itemize} 
    For the first item it easy to see that an appropriate $\mat{P}_1$ always exists, that is, $\mat{P}_1 = (\jgda - \mat{I})^{-1} ( \jgda+ \mat{K} - \mat{I})$. 
    
    As for the second item, it follows from the fact that $(\jgda, \mat{I})$ is stabilizable as we have demonstrated.




{
\subsection{Proof of \Cref{thm:sufficient} }
We first compute the Jacobian of KPV-GDA dynamics \Cref{eq:KPV} where we have eliminated one variable per agent (to avoid using the projection operator). The Jacobian has the following form:
\begin{equation}\label{eq:JKPV}
\mat{J}_{\textrm{KPV-GDA}} = \mat{I} +\eta\left(\begin{array}{cc}
 k\cdot \mat{I} +  \left(\begin{array}{cc} -\nabla^2_{\vec{x}\vec{x}}U(\vec{x}^*,\vec{y}^*) &-\nabla^2_{\vec{y}\vec{x}}U(\vec{x}^*,\vec{y}^*)\\\nabla^2_{\vec{x}\vec{y}}U(\vec{x}^*,\vec{y}^*) &\nabla^2_{\vec{y}\vec{y}}U(\vec{x}^*,\vec{y}^*)  \end{array}\right) & -  k\cdot \mat{I}
\\  p\cdot \mat{I} &  - p\cdot \mat{I}
\end{array}\right).
\end{equation}
It suffices to show that the spectral radius of $\mat{J}_{\textrm{KPV-GDA}}$ is less than one for step size $\eta$ small enough. 
We first show the claim below
\begin{claim}\label{cl:help}
Let $\mat{J} = \mat{I} + \eta \mat{M}$ where $\mat{I}$ is the identity matrix. Suppose that $\mat{M}$ has all of its eigenvalues with a real part that is negative. Then, there exists an interval $(0,\eta_0)$ so that if $\eta \in (0,\eta_0)$ then $\mat{J}$  has all of its eigenvalues with an absolute value less than one.
\end{claim}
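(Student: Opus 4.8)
The plan is to exploit the affine relationship between the spectra of $\mat{J}$ and $\mat{M}$. First I would observe that $\lambda$ is an eigenvalue of $\mat{J} = \mat{I} + \eta \mat{M}$ if and only if $\lambda = 1 + \eta \mu$ for some eigenvalue $\mu$ of $\mat{M}$: indeed, any eigenvector $v$ with $\mat{M} v = \mu v$ satisfies $\mat{J} v = v + \eta \mu v = (1 + \eta \mu) v$, and conversely $\mat{J} v = \lambda v$ forces $\mat{M} v = \frac{\lambda - 1}{\eta} v$. Hence controlling the spectral radius of $\mat{J}$ reduces to forcing $|1 + \eta \mu| < 1$ simultaneously for every eigenvalue $\mu$ of $\mat{M}$.

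The key step is an elementary modulus computation. Writing $\mu = a + b i$ with $a = \mathrm{Re}(\mu) < 0$ by hypothesis, I would expand
\[
|1 + \eta \mu|^2 = (1 + \eta a)^2 + (\eta b)^2 = 1 + 2 \eta\, \mathrm{Re}(\mu) + \eta^2 |\mu|^2 .
\]
Since $\eta > 0$, this quantity is strictly less than $1$ precisely when $2\,\mathrm{Re}(\mu) + \eta |\mu|^2 < 0$, i.e.\ when
\[
\eta < \frac{-2\,\mathrm{Re}(\mu)}{|\mu|^2}.
\]
The assumption $\mathrm{Re}(\mu) < 0$ guarantees both that $\mu \neq 0$, so the denominator $|\mu|^2$ is strictly positive, and that the right-hand side is a strictly positive threshold.

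Finally I would set $\eta_0 := \min_{\mu} \frac{-2\,\mathrm{Re}(\mu)}{|\mu|^2}$, the minimum ranging over the finitely many distinct eigenvalues of $\mat{M}$. As a minimum of finitely many strictly positive numbers, $\eta_0 > 0$; and for every $\eta \in (0, \eta_0)$ each eigenvalue $1 + \eta \mu$ of $\mat{J}$ has modulus strictly below $1$, so the spectral radius of $\mat{J}$ is less than one, as claimed. I expect essentially no obstacle here: the argument is a routine spectral perturbation estimate. The only two points deserving a word of care are that the \emph{finiteness} of the spectrum is what makes $\eta_0$ strictly positive (an infimum over a continuum of eigenvalues could collapse to zero), and that the strict negativity of the real parts is exactly what excludes the degenerate case $\mu = 0$ where the threshold would be undefined.
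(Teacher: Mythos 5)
Your proposal is correct and follows essentially the same route as the paper: identify the eigenvalues of $\mat{J}$ as $1+\eta\mu$, expand $|1+\eta\mu|^2 = 1 + 2\eta\,\mathrm{Re}(\mu) + \eta^2|\mu|^2$, and take $\eta_0 = \min_\mu \frac{-2\,\mathrm{Re}(\mu)}{|\mu|^2}$ over the finitely many eigenvalues. Your write-up is in fact slightly more careful than the paper's, which refers to the squared quantity as the ``magnitude,'' but the argument is identical.
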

\begin{proof}
Let $a_i+ b_i \cdot j$ be an eigenvalue of $\mat{M}$ such that $a_i< 0.$ Assume that $\eta < -\frac{2a_i}{a_i^2+b_i^2}$, then the corresponding eigenvalue of $\mat{J}$  is $1 + \eta (a_i+b_i\cdot j)$, the magnitude of which is $(1+\eta a_i)^2 + \eta^2 b_i^2 = 1 + 2\eta a_i + \eta^2 (a_i^2+b_i^2) <1$ since $-2a_i > \eta (a_i^2+b_i^2)$ (by assumption). Hence we can choose $\eta_0$ to be $\min_i \frac{-2a_i}{a_i^2+b_i^2}>0.$
\end{proof}
Using \Cref{cl:help}, we conclude that as long as the matrix $\mat{M}$ below has eigenvalues with real part negative then $(\vec{x}^*,\vec{y}^*)$ is attracting:

\begin{equation}\label{eq:noJ}
\mat{M} \defeq \left(\begin{array}{cc}
 k\cdot \mat{I} +  \left(\begin{array}{cc} -\nabla^2_{\vec{x}\vec{x}}U(\vec{x}^*,\vec{y}^*) &-\nabla^2_{\vec{y}\vec{x}}U(\vec{x}^*,\vec{y}^*)\\\nabla^2_{\vec{x}\vec{y}}U(\vec{x}^*,\vec{y}^*) &\nabla^2_{\vec{y}\vec{y}}U(\vec{x}^*,\vec{y}^*)  \end{array}\right) & -  k\cdot \mat{I}
\\  p\cdot \mat{I} &  - p\cdot \mat{I}
\end{array}\right).
\end{equation}
By setting $\mat{M}_{11},\mat{M}_{12},\mat{M}_{21},\mat{M}_{22}$ as the corresponding block matrices we compute the characteristic polynomial of $\mat{M}$. It holds
    \begin{gather}
        \det ( \mat{M} - \lambda \mat{I} ) = \det \left(  \begin{bNiceArray}{c|c}[margin]
                \mat{M}_{11} & \mat{M}_{12} \\ \hline
                \mat{M}_{21} & \mat{M}_{22} 
            \end{bNiceArray} - \lambda \mat{I} \right) = 
        \det \left ( 
        \begin{bNiceArray}{c|c}[margin]
                \mat{M}_{11} - \lambda \mat{I} & -  k\mat{I} \\ \hline
                p\mat{I} &  - p\mat{I} - \lambda \mat{I} \\
            \end{bNiceArray}
        \right) = \\
        = 
        \det \left ( 
        \begin{bNiceArray}{c|c}[margin]
                \mat{M}_{11} - \lambda \mat{I}  - \frac{kp}{p+\lambda}\mat{I}  & -  k\mat{I} \\ \hline
                0 &  -p \mat{I} - \lambda \mat{I} \\
            \end{bNiceArray}
        \right) = \\
        =
        \det \left( \mat{M}_{11} - \left(\lambda+\frac{kp}{p+\lambda}\right)\mat{I} 
        \right) \det \left( - (p + \lambda)\mat{I} \right).
    \end{gather}
     $\det ( \mat{M} - \lambda \mat{I} )$ has roots at $\lambda = -p$ and when $$\det \left( \mat{M}_{11} - \left(\lambda+\frac{kp}{p+\lambda}\right)\mat{I} 
        \right)=0.$$
Therefore, if $\mat{M}$ has all eigenvalues with real part negative, we must have $p>0.$
        Let $\rho$ be an eigenvalue of $\left(\begin{array}{cc} -\nabla^2_{\vec{x}\vec{x}}U(\vec{x}^*,\vec{y}^*) &-\nabla^2_{\vec{y}\vec{x}}U(\vec{x}^*,\vec{y}^*)\\\nabla^2_{\vec{x}\vec{y}}U(\vec{x}^*,\vec{y}^*) &\nabla^2_{\vec{y}\vec{y}}U(\vec{x}^*,\vec{y}^*)  \end{array}\right).$ It must hold that \begin{equation}\label{eq:invest}
            \lambda + \frac{kp}{p+\lambda} -k = \rho.
        \end{equation}
        We need to investigate under what assumptions $\lambda$ will have a real part negative. We expand \Cref{eq:invest} and we get
        \begin{equation}\label{eq:quadr}
        \lambda^2 + \lambda(p-k - \rho) - \rho p =0. 
        \end{equation}
We solve Equation \Cref{eq:quadr} and we get
        \begin{equation}\label{eq:lambda}
        \lambda_{1,2} = \frac{-p+k+\rho\pm \sqrt{(p-k-\rho)^2 + 4\rho p}}{2}.
        \end{equation}
        We provide sufficient conditions so that $\textrm{Re}(\lambda_{1,2}) <0.$
We consider the following cases:
\begin{itemize}
\item $\rho$ is real and negative. In this case observe that for $p=0$ we have that $\lambda_{1,2} = k+\rho \textrm{ and }0.$ Note that for $k<0$, the first eigenvalue has real part negative and the second eigenvalue is zero.
We compute the first derivative at $p=0$ and this gives $\frac{1}{2}\left(-1+\frac{-k+\rho}{|k+\rho|}\right) = \frac{\rho}{-k-\rho}<0.$ Therefore the $\lambda_{1,2}$ as a function of $p$ is strictly decreasing at zero. We conclude that for $p$ sufficiently small and positive, both eigenvalues of $\mat{M}$ will be real and negative. 
Hence if an eigenvalue of $\mat{H}$ is real and negative then for $p$ sufficiently small positive and $k<0$ both eigenvalues of $\mat{M}$ will be negative.
\item $\rho$ is complex with $|\textrm{Im}(\rho)|\neq 0.$ In this case observe that for $p=0$ we have that $\lambda_{1,2} = k+\rho \textrm{ and }0.$ If we choose $\min(0, -\textrm{Re}(\rho))>k$, then the first eigenvalue will have real part negative. Now using the same idea as for the first case, we compute the first derivative of the real part of $\lambda_{1,2}$ as a function of $p$ and we get
\[
\frac{1}{2}  \left(-1\pm \frac{k^2 - \textrm{Re}(\rho)^2 - \textrm{Im}(\rho)^2}{(k+\textrm{Re}(\rho))^2+\textrm{Im}(\rho)^2}\right).
\]
The expression above is negative as long as $-\textrm{Re}(\rho)^2-\textrm{Im}(\rho)^2 - k\textrm{Re}(\rho)<0.$ The equation is trivially satisfied when $\textrm{Re}(\rho)<0$ since $k$ is chosen to be negative. Assume $\textrm{Re}(\rho)> 0$ (if it is zero then the above is trivially true since the imaginary part is non-zero). It occurs that the inequality above is true when $k > -\frac{|\rho|^2}{\textrm{Re}(\rho)}.$ 
Since $k$ is chosen to be smaller than $\min(0, -\textrm{Re}(\rho)),$ it suffices to show that $$-\textrm{Re}(\rho)) > -\frac{|\rho|^2}{\textrm{Re}(\rho)} \textrm{ is satisfied by our assumptions.}$$
The above is true as long as $\textrm{Im}(\rho)\neq 0.$ Let $E$ be the set of eigenvalues $\rho$ of $\mat{H}$ with real part positive (and non-zero imaginary part by assumption). There exists a choice for $\eta, k,p$, in which $\eta,p>0$ and sufficiently small and $k$ is negative and chosen to be $$\min_{\rho\in E} \frac{|\rho|^2}{\textrm{Re}(\rho)}>-k>\max_{\rho\in E} \textrm{Re}(\rho).$$
\end{itemize}
}

\newpage

\section{Experiments}

We elaborate on the experiment mentioned in the main body.
Our experiment includes a dataset of a mixture of 2-D Gaussians with 8 modes. Our architecture includes 8 ``shallow'' generators and discriminators with 2 layers of 2-16-2 ReLU activations, compared with a ``large'' single-agent GAN with 4 layers of 2-128-256-1024-2 activations. Interestingly, the giant one fails in a double sense; It demonstrates both mode-collapsing and mode-drop phenomena without stabilizing. On the other hand{,} our architecture with a small number of neurons achieves to {fit the data well.}

The iterations illustrate the step at which the model roughly stops learning/converges. We observed no change after the demonstrated number of iterations.

\begin{center}
\scalebox{0.85}{
    \addtolength{\leftskip} {-5cm}
    \addtolength{\rightskip}{-5cm}
\begin{tabular}{|c|c|c|c|}\hline
    \textrm{Configuration} & \textrm{Generator(s)} & \textrm{Discriminator(s)} & \textrm{Typical outcome} \\
    \hline
     \text{Single GAN} &  \# $1 \times$  \Big\{\begin{tabular}{l} {\footnotesize 2-128-256-1024-2}  \\ {\footnotesize \textrm{Linear w/} ~\text{ReLU}} \end{tabular} &  \# $1~ \times$  \Big\{\begin{tabular}{l} {\footnotesize 2-128-256-1024-2}  \\ {\footnotesize \textrm{Linear w/} ~\text{ReLU}} \end{tabular} &  
        \begin{tabular}{r} { \small Mode coll. }  {\small in $\sim 1\times10^3$~ \text{iters.} }  \\ { \small Mode drop }  {\small in $\sim 3\times10^3$~ \text{iters.} } \end{tabular} \\
    \hline
     \text{Multi-GAN}  & \# $8  \times$  \Big\{\begin{tabular}{l} {\footnotesize 2-16-2 }  \\ {\footnotesize \textrm{Linear w/} ~\text{ReLU}} \end{tabular} &  \# $8 ~ \times$ \Big\{\begin{tabular}{l} {\footnotesize 2-16-2 }  \\ {\footnotesize \textrm{Linear w/} ~\text{ReLU}} \end{tabular}  & \begin{tabular}{l} { \small Distr. learned }\\ {\small in $\sim 3\times10^3$~ \text{iters.} } \end{tabular} \\ \hline
\end{tabular}
}
\end{center}


\begin{figure}[h!]
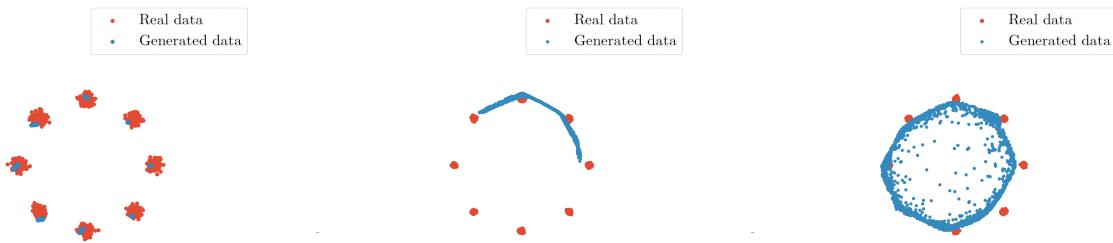

\centering
\hspace{-60pt}
\begin{subfigure}[b]{0.35\textwidth}
\includegraphics[width=1\textwidth]{figures/real-vs-generated-crop.png}%
\end{subfigure}%
\begin{subfigure}[b]{0.35\textwidth}
\includegraphics[width=1\textwidth]{figures/mode-collapse-crop.png}
\end{subfigure}%
\begin{subfigure}[b]{0.35\textwidth}
\includegraphics[width=1\textwidth]{figures/big-generated-crop.png}%
\end{subfigure}
\vspace{-30pt}
\caption{From left to right: $(i)$ Each generator of MGAN learns one mode of a mixture of 8 gaussians, $(ii)$ Mode Collapse of single-agent GAN's, $(iii)$ Single-agent GAN can't discriminate between the modes.}
\end{figure}
\vspace{-2em}

\subsection{MGAN vs. WGAN}
\label{sec:experiments_gans}
CIFAR-10 is a {well-established} {testbed} for various GAN architectures. It contains 10 balanced classes of images of different objects.

Just to further illustrate the merits of multi-agent GAN's, we offer a selection of images generated by WGAN-GP \citep{arjovsky2017wasserstein} and the MGAN \citep{hoang2017multi}.

In the pictures provided by the WGAN-GP we see that across iterations, the generated samples tend to cover only certain classes of the dataset while it takes longer for the samples to become realistic.

On the other hand, the MGAN architecture from an early stage provides with diverse samples that tend to be more realistic from early on.
{
\begin{figure}[H]
    \centering
        {\includegraphics{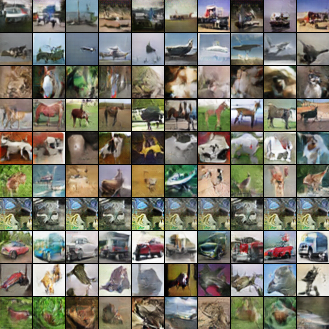}}
        \caption{MGAN after 300 iterations}
\end{figure}  
\begin{figure}[H]
    \centering
    \includegraphics{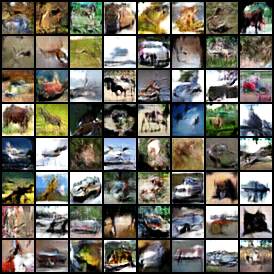}
    \caption{WGAN after 10000 iterations}
     \includegraphics{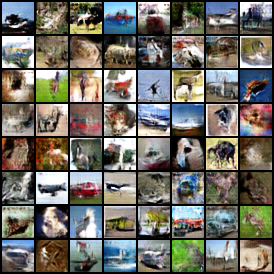}
    \caption{WGAN after 15000 iterations}
\end{figure}
\begin{figure}[H]
    \centering
     \includegraphics{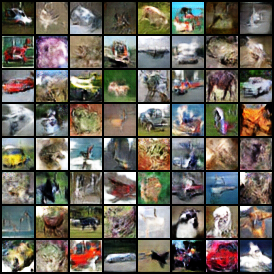}
    \caption{WGAN after 25000 iterations}
\end{figure}
}

\pagebreak

{
\section{
Empirical performance of the proposed method}
In this section, we perform a number of empirical experiments to assess the performance of our proposed method. As we highlighted, the conditions of \Cref{thm:sufficient} are only sufficient and not necessary. In the following figure, we plot the curves of the \textit{Nash equilibrium gap} for $500$ random two-team zero-sum games. In every such game, every player has $2$ available strategies. We note that for all of these games we use a random initialization for the players' strategies and the same parameters $\eta = 0.05, ~k=-1.2, ~p=0.02$ for our proposed method. It is notable that the Nash equilibrium gap vanishes for virtually all of these instances. This indicates that the conditions of \Cref{thm:sufficient} are only sufficient and a more complex analysis could show less restrictive conditions for convergence with the same simple parametrization.

For completeness, we include the definition of the Nash equilibrium gap:
\begin{definition}[NE-gap]
    We define the Nash equilibrium gap as the sum of the differences between the best-response and the strategy every player plays:
    \begin{equation}
        \mathrm{NE\text{-}gap}(\vec{x}) = \sum_{i=1}^n \max_{\vec{x}_i^*}\{ u_i( \vec{x}_i^*, \vec{x}_{-i}) \} - u_i(\vec{x}).
    \end{equation}
\end{definition}

\begin{figure}[h!]
    \centering
     \includegraphics[width=1\textwidth]{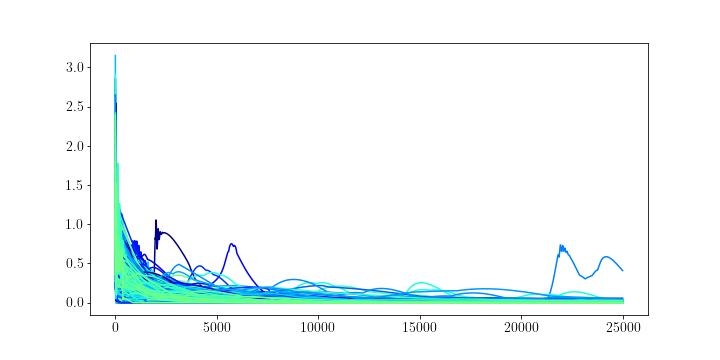}
    \caption{Nash equilibrium gap for $2$-vs.-$2$ two-team zero-sum game with $2$ strategies per player. KPV with parameters $\eta = 0.05, ~k=-1.2, ~p=0.02$.}
\end{figure}

Additionally, we apply the same method with fixed parameters $\eta = 0.05, ~k=-1.05, ~p=0.005$ to two-team zero-sum games with $3$ strategies per player and $2$ players in each team. Again, most random games seem to converge to a $0$ NE-gap.
\begin{figure}[h!]
    \centering
     \includegraphics[width=1\textwidth]{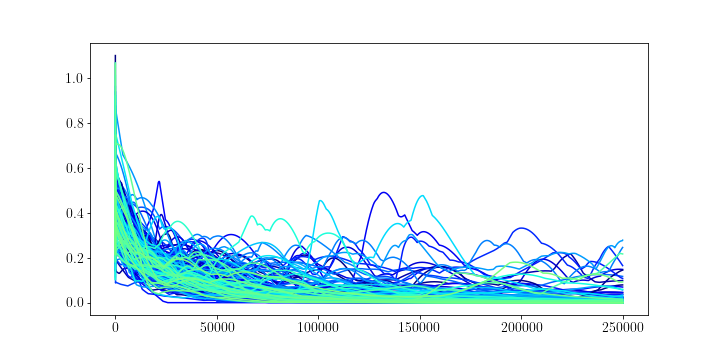}
    \caption{
    Nash equilibrium gap for $2$-vs.-$2$ two-team zero-sum game with $3$ strategies per player. KPV with parameters $\eta = 0.05, ~k=-1.05, ~p=0.005$.}
\end{figure}

Lastly, we use the same method with fixed parameters $\eta = 0.1, ~k=-1.15, ~p=0.01$ for $500$ random two-player zero-sum games where each player has $5$. The method converges for virtually all games despite of not tuning the parameters for every game.
\begin{figure}[h!]
    \centering
     \includegraphics[width=1\textwidth]{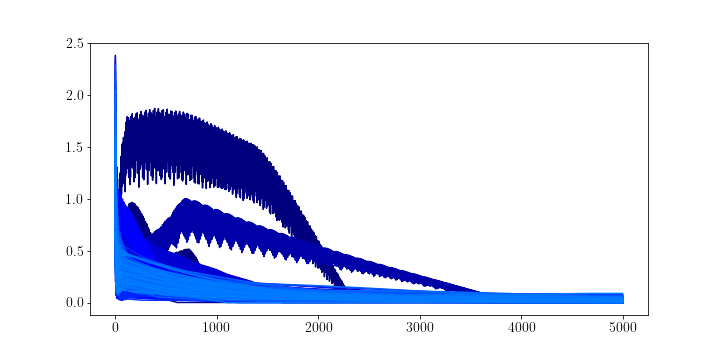}
    \caption{
    Nash equilibrium gap for $2$-vs.-$2$ two-team zero-sum game with $3$ strategies per player. KPV with parameters $\eta = 0.05, ~k=-1.05, ~p=0.005$.}
\end{figure}
}

\end{document}